\documentclass[a4paper,onecolumn,unpublished]{quantumarticle}
\pdfoutput=1

\usepackage[UKenglish]{babel}
\usepackage[T1]{fontenc}
\usepackage[utf8]{inputenc}
\usepackage{colorprofiles}
\usepackage{bookmark}   
\usepackage[doipre={doi:}]{uri}   

\usepackage{csquotes}

\usepackage{etoolbox}
\apptocmd{\sloppy}{\hbadness 10000\relax}{}{} 

\usepackage{mathtools}
\usepackage{physics}
\usepackage{amsfonts}
\usepackage{amsmath}
\usepackage{amssymb}
\usepackage{mathrsfs}   
\usepackage{bigints}    
\usepackage{braket}
\usepackage{stmaryrd}   
\SetSymbolFont{stmry}{bold}{U}{stmry}{m}{n}   
\usepackage{extarrows} 

\usepackage{enumerate} 

\usepackage{amsthm}
\usepackage{thmtools}
\usepackage{thm-restate}

\usepackage{placeins}

\usepackage{bm}   

\usepackage{graphicx}
\usepackage{float}
\graphicspath{figures}
\usepackage{subcaption}
\captionsetup{subrefformat=parens}

\usepackage{tikzfig}
\usepackage{pgfplots} 
\pgfplotsset{compat=newest, compat/show suggested version=false} 
\usepackage{tikz-cd}
\usetikzlibrary{decorations.markings}
\pgfdeclaresnake{square wave qubit}{initial}
{
  \state{initial}[width=0.5pt]
  {
    \pgfpathlineto{\pgfpoint{0pt}{0.15pt}}
    \pgfpathlineto{\pgfpoint{0.25pt}{0.15pt}}
    \pgfpathlineto{\pgfpoint{0.25pt}{-0.15pt}}
    \pgfpathlineto{\pgfpoint{0.5pt}{-0.15pt}}
    \pgfpathlineto{\pgfpoint{0.5pt}{0pt}}
  }
  \state{final}
  {
    \pgfpathlineto{\pgfpoint{\pgfsnakeremainingdistance}{0pt}}
  }
}

\pgfdeclaresnake{square wave dual rail}{initial}
{
  \state{initial}[width=1pt]
  {
    \pgfpathlineto{\pgfpoint{0pt}{0.1pt}}
    \pgfpathlineto{\pgfpoint{0.5pt}{0.1pt}}
    \pgfpathlineto{\pgfpoint{0.5pt}{-0.1pt}}
    \pgfpathlineto{\pgfpoint{1pt}{-0.1pt}}
    \pgfpathlineto{\pgfpoint{1pt}{0pt}}
  }
  \state{final}
  {
    \pgfpathlineto{\pgfpoint{\pgfsnakeremainingdistance}{0pt}}
  }
}


\tikzstyle{hbox}=[draw=black, shape=rectangle, fill=yellow, minimum size=.55em, inner sep=0.15em, scale=0.85, font={\scriptsize}]
\tikzstyle{box}=[draw=black, shape=rectangle, fill=white, minimum size=1em, inner sep=0.2em, scale=0.85, font={\scriptsize}]
\tikzstyle{gn}=[draw=black, shape=circle, fill={zx_green}, draw=black, inner sep=0.7mm, minimum width=0pt, minimum height=0pt, tikzit fill={rgb,255: red,181; green,215; blue,181}]
\tikzstyle{rn}=[gn, fill={zx_red}, draw=black, tikzit fill={rgb,255: red,215; green,96; blue,96}]
\tikzstyle{wn}=[gn, fill=white, draw=black]
\tikzstyle{gn_phase}=[shape=rectangle, fill={zx_green}, draw=black, minimum size=1.2em, rounded corners=0.48em, inner sep=0.2em, outer sep=-0.2em, scale=0.8, font={\footnotesize}, tikzit shape=circle, tikzit fill={rgb,255: red,181; green,215; blue,181}]
\tikzstyle{rn_phase}=[{gn_phase}, fill={zx_red}, draw=black, tikzit fill={rgb,255: red,215; green,96; blue,96}]
\tikzstyle{phase}=[draw=black, shape=rectangle, fill=white, minimum size=.95em, inner sep=0.1em, scale=0.85, font={\scriptsize}]
\tikzstyle{phaseCircle}=[gn_phase, draw=black, fill=white]
\tikzstyle{black node}=[draw=black, shape=circle, scale=0.3, fill=black, font={\footnotesize}]
\tikzstyle{rtriang}=[shape=isosceles triangle, fill={gray!50}, draw=black, isosceles triangle stretches=true, inner sep=0.8pt, minimum width=.5mm, minimum height=.5mm]
\tikzstyle{ltriang}=[rtriang, shape=isosceles triangle, fill={gray!50}, draw=black, shape border rotate=180]
\tikzstyle{beamSplitter}=[draw=black, shape=rectangle, fill=white, minimum width=3mm, minimum height=.2mm, inner sep=.3mm]
\tikzstyle{sLabel}=[font={\scriptsize}, auto]
\tikzstyle{midArrow}=[postaction=decorate, decoration={markings, mark=at position 0.5 with {\arrow{stealth[length=6.4pt, sep=-2pt]}}}]

\tikzstyle{N}=[-, line width=0.9pt]
\tikzstyle{NStream}=[-, N, midArrow]
\tikzstyle{LOWire}=[-, very thin, tikzit draw={rgb,255: red,134; green,42; blue,43}]
\tikzstyle{NLO}=[-, line width=.6pt, tikzit draw={rgb,255: red,117; green,44; blue,83}]
\tikzstyle{LOStream}=[-, NLO, midArrow, tikzit draw={rgb,255: red,139; green,43; blue,126}]
\tikzstyle{DRWire}=[-, LOWire, double, double distance=.5pt, tikzit draw={rgb,255: red,24; green,33; blue,119}]
\tikzstyle{DRThick}=[-, DRWire, NLO, double distance=.5pt, tikzit draw={rgb,255: red,60; green,108; blue,166}]
\tikzstyle{DRStream}=[-, DRThick, midArrow, tikzit draw={rgb,255: red,0; green,175; blue,175}]
\tikzstyle{dashedE}=[-, dashed]
\tikzstyle{hadamard edge}=[-, dashed, draw=blue]
\tikzstyle{braceedge}=[-, decorate, decoration={brace, amplitude=2mm, raise=-1mm}]
\tikzstyle{arrow}=[<-]
\tikzstyle{MixedDR}=[-, tikzit draw={rgb,255: red,255; green,128; blue,0}, decorate, decoration=square wave dual rail, very thin, draw={rgb,255: red,255; green,128; blue,0}]
\tikzstyle{MixedQubit}=[-, tikzit draw={rgb,255: red,160; green,100; blue,219}, decorate, decoration=square wave qubit, very thin, draw={rgb,255: red,160; green,100; blue,219}]
\tikzstyle{ClWire}=[-, very thin, densely dashed, tikzit draw={rgb,255: red,255; green,191; blue,191}]

\definecolor{zx_grey}{RGB}{211,211,211}
\definecolor{zx_red}{RGB}{232,165,165}
\definecolor{zx_green}{RGB}{216,248,216}

\usepackage{xtab}
\usepackage{xspace}
\usepackage{tabu}
\usepackage{multicol}


\usepackage[capitalise, noabbrev]{cleveref}


\newcommand{\N}{\mathbb{N}}

\ifdef{\C}
  {\renewcommand{\C}{\mathbb{C}}}
  {\newcommand{\C}{\mathbb{C}}}
\newcommand{\minu}{\texttt{-}}
\newcommand{\plus}{\texttt{+}}


\newcommand{\interp}[1]{\left\llbracket#1\right\rrbracket}

\newcommand{\bR}{\begin{color}{red}}
\newcommand{\bB}{\begin{color}{blue}}
\newcommand{\bM}{\begin{color}{magenta}}
\newcommand{\bC}{\begin{color}{cyan}}
\newcommand{\bW}{\begin{color}{white}}
\newcommand{\bBl}{\begin{color}{black}}
\newcommand{\bG}{\begin{color}{green}}
\newcommand{\bY}{\begin{color}{yellow}}
\newcommand{\e}{\end{color}}

\newcommand{\tikzrefsize}[1]{\tiny{#1}}
\newcommand{\lemref}[1]{\tikzrefsize{\textsc{(Lem~\ref{#1})}}}

\newtheorem{theorem}{Theorem}[section]
\newtheorem{proposition}[theorem]{Proposition}

\newtheorem{lemma}[theorem]{Lemma}

\theoremstyle{definition}
\newtheorem{definition}[theorem]{Definition}
\newtheorem{example}[theorem]{Example}
\theoremstyle{remark}
\newtheorem{remark}[theorem]{Remark}


\newcommand{\sub}{\subseteq}
\renewcommand{\tt}[1]{\mathtt{#1}}
\renewcommand{\bf}[1]{\ensuremath{\mathbf{#1}}}

\renewcommand{\cal}[1]{\mathcal{#1}}

\renewcommand{\phi}{\varphi}

\newcommand{\Odd}{\texttt{Odd}}
\newcommand{\comp}[1]{\overline{#1}}

\newcommand{\cvar}[1]{\underline{#1}}
\newcommand{\ccvar}[1]{#1}

\newcommand{\XY}{\normalfont XY\xspace}
\newcommand{\XZ}{\normalfont XZ\xspace}
\newcommand{\YZ}{\normalfont YZ\xspace}

\newcommand{\XYm}{\ensuremath\normalfont\textrm{XY}\xspace}
\newcommand{\XZm}{\normalfont\normalfont\textrm{XZ}\xspace}
\newcommand{\YZm}{\normalfont\normalfont\textrm{YZ}\xspace}
\newcommand{\Xm}{\ensuremath\normalfont\textrm{X}\xspace}
\newcommand{\Ym}{\normalfont\normalfont\textrm{Y}\xspace}
\newcommand{\Zm}{\normalfont\normalfont\textrm{Z}\xspace}

\newcommand{\Fusion}{\tiny{(\textsc{Spider})}}
\newcommand{\Bigebra}{\tiny{(\textsc{Bigebra})}}
\newcommand{\ZElim}{\tiny{(\textsc{Z-Elim})}}
\newcommand{\XElim}{\tiny{(\textsc{X-Elim})}}

\newcommand{\Colour}{\tiny{(\textsc{Color})}}
\newcommand{\Copy}{\tiny{(\textsc{Copy})}}
\newcommand{\One}{\tiny{(\textsc{One})}}
\newcommand{\Euler}{\tiny{(\textsc{Euler})}}
\newcommand{\PiCommute}{\tiny{($\pi$)}}

\graphicspath{figures}
\setlength{\jot}{1em}
\allowdisplaybreaks
\usepackage[numbers,sort&compress]{natbib}
\usepackage{soul}

\title{Fusion and flow: formal protocols to reliably build photonic graph states}

\author[1]{Giovanni de Felice}
\author[1]{Boldizsár Poór}
\author[1, 2]{Lia Yeh}
\author[2]{William Cashman}
\affil[1]{Quantinuum, 17 Beaumont Street, Oxford, OX1 2NA, United Kingdom}
\affil[2]{University of Oxford, United Kingdom}

\begin{document}

\maketitle

\begin{abstract}
    
Photonics offers a promising platform for implementations of measurement-based quantum computing.
Recently proposed fusion-based architectures aim to achieve universality and fault-tolerance.
In these approaches, computation is carried out by performing fusion and single-qubit measurements on a resource graph state.
The verification of these architectures requires linear algebraic, probabilistic, and control flow structures to be combined in a unified formal language.
This paper develops a framework for photonic quantum computing by bringing together linear optics, ZX calculus, and dataflow programming.
We characterize fusion measurements that induce Pauli errors and show that they are correctable using a novel flow structure for fusion networks.
We prove the correctness of new repeat-until-success protocols for the realization of arbitrary fusions 
and provide a graph-theoretic proof of universality for linear optics with entangled photon sources.
The proposed framework paves the way for the development of compilation algorithms for photonic quantum computing.

\end{abstract}

\tableofcontents 

\section{Introduction}\label{sec:introduction}

Photonics has been the platform of reference for implementation of measurement-based quantum computing (MBQC),
from the early teleportation experiments~\cite{bouwmeesterExperimentalQuantumTeleportation1997, boschiExperimentalRealizationTeleporting1998} to the realization of photonic cluster states~\cite{thomasEfficientGenerationEntangled2022, coganDeterministicGenerationIndistinguishable2023}.
Fusion-based approaches are among the most promising proposals for achieving a universal and fault-tolerant quantum computer based on photonics~\cite{bartolucciFusionbasedQuantumComputation2023, degliniastySpinOpticalQuantumComputing2024}.
In these approaches, graph states are constructed by performing fusion measurements on an underlying resource graph state.
Since fusion is an entangling two-qubit measurement~\cite{browneResourceEfficientLinearOptical2005}, the topology of the underlying graph is modified as the computation proceeds.
Fusion is also a stochastic process which induces decoherence in case of failure~\cite{bartolucciCreationEntangledPhotonic2021}, giving an additional level of complexity to the computation.

Due to the probabilistic nature of quantum measurements, MBQC requires feedforward of classical information.
Depending on previous measurement outcomes, corrections in the form of Pauli gates must be performed before proceeding with the computation~\cite{danosDeterminismOnewayModel2006}. 
Flow structure describes conditions on graph states under which these corrections result in a deterministic computation~\cite{browneGeneralizedFlowDeterminism2007}.
While these flow conditions are well understood for single-qubit measurements on a fixed graph state~\cite{browneGeneralizedFlowDeterminism2007},
they have not been studied in the photonic fusion-based setting which features multi-qubit measurements and where graph state preparation and measurement are intertwined processes.
In order to realize fusion-based quantum computing, we need a notion of flow describing how undesired measurement outcomes can be corrected to achieve a deterministic computation.

The ZX calculus~\cite{coeckeInteractingQuantumObservables2008} has proved to be an effective tool across different models of quantum computation,
including measurement-based~\cite{duncanRewritingMeasurementBasedQuantum2010, backensThereBackAgain2021, mcelvanneyFlowpreservingZXcalculusRewrite2023},
circuit-based~\cite{coeckePicturingQuantumProcesses2017, kissingerReducingNumberNonClifford2020},
and fault-tolerant~\cite{beaudrapZXCalculusLanguage2020, kissingerPhasefreeZXDiagrams2022, huangGraphicalCSSCode2023, townsend-teagueFloquetifyingColourCode2023} quantum computing.
In the context of photonic quantum computing, the ZX calculus is starting to be applied in a top-down direction to design novel error correction codes~\cite{bombinUnifyingFlavorsFault2023, pankovichFlexibleEntangledState2023}.
However, the bottom-up direction, from the hardware described in the language of linear optics~\cite{aaronsonComputationalComplexityLinear2011, clementLOvCalculusGraphicalLanguage2022, defeliceQuantumLinearOptics2023, heurtelCompleteGraphicalLanguage2024} 
to the logic of ZX calculi, has not been established formally.

The other most critical current barrier to understanding photonic MBQC is that only static pictures of computations are used.
This limitation stifles a key advantage to optical implementations of quantum computing: their ability to use both spatial and temporal degrees of freedom of photons to achieve a given computation~\cite{motesScalableBoson2014,madsenQuantumComputational2022}.
Optical circuits are dynamic processes acting on streams of photons, containing components such as routers, delays, switches, and time-delayed emitters.
A formalism to represent them should therefore have the expressivity to specify what each component does at each time step.
A time-dependent dataflow language~\cite{caretteGraphicalLanguageDelayed2021, dilavoreMonoidalStreamsDataflow2022} 
would enable rewriting and recursive reasoning for the wide range of experimental setups available in the optical setting.

This work arises from an effort to bridge the physics literature on photonic quantum computing with computer science
concepts such as control flow, formal semantics, rewriting and compositional reasoning.
A mathematical framework capturing different aspects of this exciting field of research, 
has been developed to enable the large-scale verification and compilation of photonic quantum protocols.

In \cref{sec:background}, we establish a precise relationship between the ZX calculus and linear optics.
We obtain a representation of photonic fusion measurements as a mixed sum of two ZX diagrams, 
corresponding to the success and failure outcomes of the measurement. 
The amplitudes and probabilities can then be computed by graphical rewriting or diagram evaluation.

Our first novel contribution, in \cref{sec:character}, is a characterization of all measurements locally equivalent to Type II fusion.
We classify as \enquote{green failure} all such measurements which preserve entanglement of any graph state under the fusion failure outcome.
We thereafter narrow down which of these induce correctable Pauli errors in both success and failure outcomes; see \cref{prop:characterisation}.
Interestingly, this does not have to be Clifford, and particularly includes phase gadgets, an important class of parametrized entangling gates~\cite{kissingerReducingNumberNonClifford2020}.
By additionally restricting the single-qubit measurement to be Pauli, the two most commonly used fusion measurements
--- the Type II fusion of~\cite{browneResourceEfficientLinearOptical2005,bartolucciFusionbasedQuantumComputation2023} and the CZ fusion of~\cite{limRepeatUntilSuccessLinearOptics2005,degliniastySpinOpticalQuantumComputing2024} ---
naturally arise from our characterization as $X$ and $Y$ fusions, respectively.

Several compilation frameworks for fusion-based quantum computing (FBQC) have recently been proposed~\cite{zilkCompilerUniversalPhotonic2022,zhangOneQCompilationFramework2023,leeGraphtheoreticalOptimizationFusionbased2023}.
Our second contribution, in Section 4, is the development of a theory of flow for FBQC
which supports these frameworks by enabling the correction of errors induced by fusion measurements.
Our results leverage the rich literature on flow in MBQC~\cite{browneGeneralizedFlowDeterminism2007,danosMeasurementCalculus2007,simmonsRelatingMeasurementPatterns2021,backensThereBackAgain2021},
generalizing it to a setting featuring two-qubit measurements rather than only single-qubit measurements.
This is applicable to any fusion network with $X$ fusions, $Y$ fusions, or a mix of the two.
We show that the corresponding patterns can always be factorized such that every fusion appears before single-qubit measurements; see \cref{thm:flow-pattern}.
We, moreover, show that any decomposition of an open graph as an $XY$-fusion network has $XY$-flow, provided that the original open graph has Pauli flow; see \cref{thm:flow-simplified-graph}.
This constitutes the first formal analysis of the errors induced by photonic fusion measurements, providing specific flow conditions to enable their correction.

Experimental realizations of photonic quantum computing require physical components ---
such as routers, ultra-fast optical switches, and time-delayed emitters ---
that control the interactions of photons in an optical setup.
While these components have been studied in isolation~\cite{bartolucciSwitchNetworksPhotonic2021,clementPBScalculusGraphical2020},
compilation efforts had not explored how to put together such components to design verifiable architectures.
\cref{sec:protocols} develops the first graphical language for optical protocols that gives formal semantics to each of the above-mentioned components.
Our language features both coherent and classical control and enables us to reason about the time-evolution of optical protocols
via both inductive and coinductive reasoning.
The diagrams in this calculus closely resemble their physical implementation as an experimental setup.
The combination of an \enquote{unrolling} procedure and graphical rewriting allows us to efficiently produce a
ZX diagram representing the computation that has been executed after a given number of time-steps.
The semantics of this calculus is formally underpinned by the theory of coalgebra and monoidal streams~\cite{dilavoreMonoidalStreamsDataflow2022}.
It bears a close relationship to the graphical language of~\cite{caretteGraphicalLanguageDelayed2021}
while substantially extending it to handle the optical setting.

As an application of this language, we prove the correctness of new repeat-until-success (RUS)
protocols that boost the probability of success of fusion measurements; see \cref{thm:RUS}.
This generalizes a result of~\cite{limRepeatUntilSuccessLinearOptics2005} about RUS optical CZ gates,
to arbitrary fusion measurements with green failure.
In particular, it shows that the probability of success of Type II fusion measurements can also be boosted by a RUS protocol using an $n$-ary GHZ state as resource.
The proof uses an unrolling technique which implements inductive reasoning and, to the best of our knowledge, 
constitutes the first entirely graphical proof of this kind.

We conclude \cref{sec:universality} by bringing together the results of the previous three sections to give a novel proof of universality for a simple fusion-based architecture.
The architecture, inspired by the recent spin-optical approach presented in~\cite{degliniastySpinOpticalQuantumComputing2024}, 
is based only on linear optics, active switching, classical feedforward and a single quantum dot emitter.
Rather than relying on the generation of a universal graph state~\cite{gimeno-segoviaThreePhotonGreenbergerHorneZeilingerStates2015},
our proof is constructive in the sense that it shows how any given MBQC pattern can be implemented by a sequence of instructions for a specific optical circuit.

Our results give a picture of the compilation process in fusion-based architectures as a translation between four concepts: graphs, networks, patterns and protocols.
The \emph{graph} gives a high-level specification of the quantum computation which directly corresponds to a ZX diagram.
The \emph{network} is a decomposition of the target graph indicating where to perform fusion measurements.
The \emph{pattern} specifies the computation as a sequence of operations including measurements and corrections.
And lastly, the \emph{protocol} gives its physical implementation in the form of instructions for an optical setup.
Using both recursion and rewriting, we are able to efficiently verify that a protocol implements a given quantum computation.
Our focus is on the formal description of the compilation rather than the optimization of each step, which we leave for future work.

\begin{figure}[ht]
  \noindent
    \[
      \scalebox{.93}{\tikzfig{IntroFigFin}}
    \]
    \label{fig:intro-fig}
    \caption{An overview of the main concepts studied in this paper. The arrows indicate the direction of compilation:
    an open graph is decomposed as a fusion network and a corresponding fusion pattern, which is then compiled as instructions for an optical setup.}
\end{figure}

\section{Background}\label{sec:background}

In this section, we review the background required to understand photonic implementations of MBQC.
We start by introducing the graphical notations used in this paper, allowing us to relate
linear optical circuits and ZX-diagrams via the dual-rail encoding.
We use this language to study the primitive operations of FBQC: fusion measurements and resource state generators.
We end by reviewing the literature on flow structure in MBQC\@.

\subsection{ZX calculus}

The ZX calculus is a graphical language used to represent and reason about quantum computation.
Its elementary building blocks are the green \emph{Z-spider} and the red \emph{X-spider} (therefore ZX).
\begin{align}
  \label{eq:z-spider-interp}
  \tikzfig{ZX/generators/z-spider}
  \quad &\overset{\interp{\cdot}}{\longmapsto} \quad
  \ket{0}^{\otimes n}\! \bra{0}^{\otimes m} + e^{i \alpha} \ket{1}^{\otimes n}\! \bra{1}^{\otimes m}\\
  \label{eq:x-spider-interp}
  \tikzfig{ZX/generators/x-spider}
  \quad &\overset{\interp{\cdot}}{\longmapsto} \quad
  \ket{+}^{\otimes n}\! \bra{+}^{\otimes m} + e^{i \alpha} \ket{-}^{\otimes n}\! \bra{-}^{\otimes m}
\end{align}
These spiders can have any number of input and output legs, corresponding to qubit ports, and they have a phase parameter $\alpha$.
Notice that since the $e^{i \alpha}$ function in the interpretation is $2 \pi$ periodic, we can take the parameter of spiders modulo $2 \pi$.
The last generator of the ZX calculus is the yellow \emph{Hadamard box} $\tikzfig{ZX/generators/hadamard}$ that corresponds to the Hadamard gate, $\ket{0}\bra{+}\ +\ \ket{1}\bra{-}$.
We define the star symbol as the diagram corresponding to $\frac{1}{\sqrt 2}$, that is, $\star \coloneqq \tikzfig{ZX/elements/sqrt2-inv}$.
Using these building blocks, we are able to intuitively represent common quantum gates as well as any unitary.
\begin{figure}[ht]
  \noindent
  \begin{minipage}{.28\textwidth}
    \begin{align*}
      \tikzfig{ZX/elements/x-state-zero}
      \quad &\overset{\interp{\cdot}}{\longmapsto} \quad
      \ket{0} \\
      \tikzfig{ZX/elements/x-state-pi}
      \quad &\overset{\interp{\cdot}}{\longmapsto} \quad
      \ket{1} \\
      \tikzfig{ZX/elements/z-state-zero}
      \quad &\overset{\interp{\cdot}}{\longmapsto} \quad
      \ket{+} \\
      \tikzfig{ZX/elements/z-state-pi}
      \quad &\overset{\interp{\cdot}}{\longmapsto} \quad
      \ket{-}
    \end{align*}
  \end{minipage}
  \begin{minipage}{.28\textwidth}
    \begin{align*}
      \tikzfig{ZX/elements/not}
      \quad &\overset{\interp{\cdot}}{\longmapsto} \quad
      \tikzfig{QuantumCircuit/not} \\
      \tikzfig{ZX/elements/z-rotate-alpha}
      \quad &\overset{\interp{\cdot}}{\longmapsto} \quad
      \tikzfig{QuantumCircuit/z-rotate-alpha} \\
      \tikzfig{ZX/elements/t}
      \quad &\overset{\interp{\cdot}}{\longmapsto} \quad
      \tikzfig{QuantumCircuit/t} \\
      \tikzfig{ZX/elements/s}
      \quad &\overset{\interp{\cdot}}{\longmapsto} \quad
      \tikzfig{QuantumCircuit/s}
    \end{align*}
  \end{minipage}
  \begin{minipage}{.44\textwidth}
    \begin{align*}
      \tikzfig{ZX/elements/cnot}
      \quad &\overset{\interp{\cdot}}{\longmapsto} \quad
      \tikzfig{QuantumCircuit/cnot} \\
      \tikzfig{ZX/elements/cz}
      \quad &\overset{\interp{\cdot}}{\longmapsto} \quad
      \tikzfig{QuantumCircuit/cz} \\
      \tikzfig{ZX/elements/PhaseGadgetUnitary}
      \quad &\overset{\interp{\cdot}}{\longmapsto} \quad
      \ket{a\ b} \mapsto e^{i \alpha (a \oplus b)}\ket{a\ b}
    \end{align*}
  \end{minipage}
\end{figure}

Further to its ability to represent any unitary, the ZX calculus also comes with a \enquote{built-in} set of graphical rewrite rules that correspond to elementary interactions between its generators.
In fact, these rewrite rules are powerful enough to derive any equation that holds for qubit maps~\cite{ngUniversalCompletionZXcalculus2017,jeandelDiagrammaticReasoningClifford2018}.

\begin{remark}
  \label{rem:scalar-zx}
  There are many different axiomatizations of the ZX calculus.
  The version we present maintains exact equality for the stabilizer fragment.
  Versions of the ZX calculus can be found in~\cite{jeandelCompleteAxiomatisationZXCalculus2018} for the Clifford+T fragment, and in~\cite{vilmartNearMinimalAxiomatisationZXCalculus2019} for the full language.
\end{remark}

\[
  \tikzfig{ZX/axioms}
\]
A ZX-diagram with no inputs or outputs corresponds to some scalar, which we sometimes choose to omit from calculations.
We use $=$ when a rewrite preserves equality on the nose, and $\approx$ denotes equality \emph{up to some scalar}.

The above axioms can then be used to prove equalities between ZX-diagrams.
For example, we can show the Hopf-rule, which allows us to disconnect spiders of different colors when connected with two wires:
\[
  \tikzfig{ZX/Hopf}
\]

\subsection{Linear optical circuits}\label{sec:LO}

Linear optical circuits are the basic ingredients of photonic computing.
They are generated by two physical gates: beam splitters and phase shifts.
When equipped with $n$-photon state preparations and number-resolving photon detectors, they give rise to quantum statistics~\cite{aaronsonComputationalComplexityLinear2011}.
We depict the above-mentioned components, respectively, as follows:
\[
  \tikzfig{LOQC/BeamSplitter}
  \qquad \qquad
  \tikzfig{LOQC/PhaseShifter}
  \qquad \qquad
  \tikzfig{LOQC/nPhotonState}
  \qquad \qquad
  \tikzfig{LOQC/nPhotonMeasurement}
\]
We call the graphical language generated by the above components $\bf{LO}$.
Graphical axiomatisations of these components can be found in~\cite{clementLOvCalculusGraphicalLanguage2022, defeliceQuantumLinearOptics2023, heurtelCompleteGraphicalLanguage2024}.
We use a special notation $\tikzfig{figures/router/EmptyState} \coloneqq \tikzfig{figures/router/LOEmptyState}$ for the \enquote{empty} state.
Diagrams in $\bf{LO}$ have an interpretation as linear maps acting on tensor products of $\mathcal{F}(\mathbb{C}) \simeq l^2(\mathbb{N})$
--- the bosonic Fock space with a single degree of freedom, also called \emph{bosonic mode}.
Formally, the bosonic Fock space over a Hilbert space $\mathcal{H}$ is given by $\mathcal{F}(\mathcal{H}) = \bigoplus_n \mathcal{H}^{\tilde{\otimes} n}$
where $\tilde{\otimes}$ denotes the quotient of the tensor product by $x \otimes y \sim y \otimes x$.
For a finite number of modes $m$, we moreover have $\mathcal{F}(\mathbb{C}^m) \simeq \mathcal{F}(\mathbb{C})^{\otimes m}$~\cite{defeliceQuantumLinearOptics2023}.
Given an $\bf{LO}$ circuit on $m$ modes, the amplitudes of different input-output pairs can be computed by taking permanents of an underlying $m \times m$ unitary matrix.
We fix the interpretation of the phase shift of angle $\theta$ to be the matrix
$\begin{pmatrix}
   e^{i \theta}
\end{pmatrix}$,
and of the beam splitter to be the Hadamard matrix,
$\frac{1}{\sqrt{2}}\begin{pmatrix}
                     1 & 1  \\
                     1 & -1
\end{pmatrix}$.
The underlying matrix of arbitrary circuits is obtained by a simple block-diagonal matrix multiplication.
Throughout the paper, we use two different types of classical variables.
Underlined variables are \emph{outcomes} of a quantum measurement, they can take different values probabilistically.
Other variables are \emph{controlled}, they have a fixed value, set before the computation is executed.
For example, in the picture above, the number of prepared photons $\ccvar{n}$ is controlled while the number of detected
photons $\cvar{n}$ is a probabilistic outcome.

\subsection{Dual-rail qubits}\label{sec:bgd-dual-rail}

In photonic quantum computing, qubits are usually encoded by a photon in a pair of bosonic modes, a method known as \emph{dual-rail encoding}~\cite{knillSchemeEfficientQuantum2001}.
These could be two possible positions of the photon (spatial modes), or any other binary degree of freedom of the photon such as polarisation.
We use a \enquote{double wire} \tikzfig{dualrail/DRWire} to denote dual-rail modes.
These wires are interpreted as $\mathcal{F}(\mathbb{C}^2)$ --- the bosonic Fock space over a qubit ---
meaning that there can be any number of qubits in the same dual-rail mode.
Linear optical operations on these modes are defined from $\bf{LO}$ by using the following maps:
\[
  \tikzfig{dualrail/dual_rail_split}
  \qquad \qquad \qquad \qquad
  \tikzfig{dualrail/dual_rail_merge}
\]
These two maps are inverses of each other, corresponding to the natural isomorphism $\mathcal{F}(\mathbb{C}^2) \simeq \mathcal{F}(\mathbb{C}) \otimes \mathcal{F}(\mathbb{C})$.
We wish to represent processes acting on dual-rail qubits using the ZX calculus~\cite{coeckeInteractingQuantumObservables2008}.
However, ZX diagrams act on qubit spaces of the form $(\mathbb{C}^2)^{\otimes m}$ and there is no standard way of extending this action to $\mathcal{F}(\mathbb{C}^2)^{\otimes m}$.
There is, however, a natural isometry $\mathbb{C}^2 \to \mathcal{F}(\mathbb{C}^2)$, encoding a qubit state into its dual-rail representation.
We call it \enquote{triangle} and represent it as follows:
\begin{equation}
  \scalebox{1.5}{\tikzfig{ZX/Triangle}}
  \label{eq:triangle}
\end{equation}
Note that the adjoint of the triangle is a projector onto the qubit subspace, and we never use it in this paper.
We can now translate between dual-rail circuits and ZX diagrams using this graphical component.
For example, the qubit computational basis states are given by the dual-rail states:
\begin{equation}
  \label{eq:dual-rail-encoding}
  \tikzfig{dualrail/encode_ket_0}
  \qquad \qquad \qquad
  \tikzfig{dualrail/encode_ket_1}
\end{equation}
Going further, one may show that any single-qubit unitary can be realised on dual-rail qubits using only linear optical devices
by proving the following equations:
\[
  \tikzfig{dualrail/EncodeHadamard}
  \qquad \qquad
  \tikzfig{dualrail/EncodeZSpider11}
\]
Similarly, we may perform any single-qubit measurement on dual-rail qubits using photon detectors:
\[
  \tikzfig{dualrail/EncodeZAplhaMeasurement}
  \qquad \qquad
  \tikzfig{dualrail/EncodeXAplhaMeasurement}
\]
By pushing triangles from left to right in a dual-rail diagram, we compute the action of this diagram assuming
that a single photon is input in each dual-rail mode.
\begin{remark}
  Our results in this paper focus on dual-rail encoded qubits. 
  However, the only properties we use of the dual-rail encoding are the equations given in this subsection.
  In order to generalise our results to qubit encodings in $SU(n)$ (rather than $SU(2)$), one may use $n$ optical 
  modes and replace the beam splitter and phase shift with $n \times n$ unitaries satisfying the same equations.
\end{remark}

\subsection{Measurements and mixed channels}

The graphical language described so far has a standard interpretation in \emph{pure} quantum mechanics.
Any diagram $D$ has an associated linear map $\interp{D}$ that acts on $\mathcal{H} = (\mathbb{C}^2)^{\otimes n}$ when 
$D \in \bf{ZX}$ and on $\mathcal{H} = \mathcal{F}(\mathbb{C})^{\otimes n}$ when $D \in \bf{LO}$.
The same diagrams can be interpreted as \emph{mixed} quantum channels modeled by completely positive maps, using the CP construction~\cite{selingerIdempotentsDaggerCategories2006, coeckePicturesCompletePositivity2016}.
We interpret a diagram $D$ as the completely positive linear operator $\interp{D}_{CP} : \rho \mapsto \interp{D}\rho \interp{D}^\dagger$ 
acting on the space of bounded operators on $\mathcal{H}$, denoted $B(\mathcal{H})$.
We call $\interp{D}_{CP}$ the \emph{CP interpretation} of $D$.
The calculus can then be extended with a discarding effect, interpreted as the trace operator $\mathcal{B}(\mathcal{H}) \to I$.
The discarding maps for each space together with the relations between them are as follows:
\[
  \tikzfig{discards}
\]
We are usually interested in \emph{causal maps}, also called \emph{channels}.
These are defined as completely positive maps that preserve the trace operator, in the following sense:
\[
  \tikzfig{causal-map}
\]
When $D$ is a pure map the equation above corresponds to $\interp{D}^\dagger \interp{D} = I$, i.e. $\interp{D}$ is an isometry.
Completely positive maps are closed under addition and multiplication by real positive scalars, allowing us to interpret any positive linear combination of diagrams.
Causal maps are closed under taking \emph{probability distributions}: for a (discrete) probability distribution $p_i \in [0, 1]$ with $i \in X$, 
and causal maps $\set{f_i}_{i \in X}$, the map defined by $\sum_{i \in X} p_i f_i$ is also causal.

A destructive orthonormal basis measurement can be seen as a completely positive operator $\mathcal{B}(\mathcal{H}) \to \mathcal{H}$
returning a distribution over a basis of $\mathcal{H}$ for any input state.
In categorical quantum mechanics, the standard approach is to represent these maps using Frobenius algebras~\cite{coeckeCategoriesQuantumClassical2016,abramskyAlgebrasNonunitalFrobenius2012}.
Here, we use \emph{variables} to represent classical data, which is also standard in the literature~\cite{duncanGraphicalApproachMeasurementbased2013, backensThereBackAgain2021}.
We describe arbitrary measurements in terms of their Kraus decomposition given by
$\rho \mapsto \sum_{k \in O} \ket{k} \interp{D_k} \rho \interp{D_k}$ where $k$ ranges over the possible outcomes of the measurement.
The diagrams $D_k$ correspond to the different \emph{branches}, or Kraus maps, of the measurement.
In each of these branches, the classical variable $k$ has a fixed value.
For example, the $X$ and $Z$ single-qubit measurements can be written as:
\[ 
 \tikzfig{kraus-Zmeasurement} \quad \overset{\interp{\cdot}_{CP}}{\longmapsto} \quad \rho \mapsto \ket{0} \bra{0}\rho\ket{0} + \ket{1} \bra{1}\rho\ket{1}
\]
\[ 
 \tikzfig{kraus-Xmeasurement} \quad \overset{\interp{\cdot}_{CP}}{\longmapsto} \quad \rho \mapsto \ket{0} \bra{\plus}\rho\ket{\plus} + \ket{1} \bra{\minu}\rho\ket{\minu}
\]
Note that the sum above indicates mixing, rather than superposition.
As an example of a non-destructive measurement, 
consider a process $F$ which produces a classical bit $\cvar{k}$, making it act as one of two maps $F_A, F_B$.
In our language, we can represent this process as follows:
\[
  \tikzfig{outcome-var-box}
\]
In other words, $F = F_A$ if $\cvar{k} = 1$ and $F = F_B$ is $\cvar{k} = 0$.
The output $\cvar{k}$ can then be used later in the optical circuit.
If instead, we wish to represent a classically controlled process, which acts as $F_A$ if $\ccvar{x} = 1$ and as $F_B$ if $\ccvar{x} = 0$
(for some control parameter $\ccvar{x}$) then we depict the process as follows.
\[
  \tikzfig{control-var-box}
\]
To recover the standard notation with input and output classical wires it is sufficient to remember that underlined 
variables are classical outputs and other variables are classical inputs.

We can thus represent a quantum channel with classical output as a diagram $D$ labelled by an output variable $\cvar{k}$.
Then, the probability of an outcome $e$, given an input state $\rho$, is obtained by setting $\cvar{k} = e$ in $D$
and tracing out the remaining outputs:
\[ 
P_D(\cvar{k} = e \, \vert \, \rho) = \quad \tikzfig{probability}
\]
For example, the probability of observing the $\cvar{k} = 0$ outcome of an X measurement on input $\ket{0}\bra{0}$ is:
\[ 
  \tikzfig{probability-Xmeasurement}
\]
Note that the scalar $\star$ corresponds to $\frac{1}{2}$ in the CP interpretation.
This is an instance of the Born rule $\interp{\star}_{CP} = \norm{\interp{\star}}^2 = \frac{1}{2}$.
The causality condition for quantum channels given above ensures that normalised states are mapped to normalised states by the channel.

\subsection{Fusion measurements}\label{subsec:fusion-measurements}

Fusion measurements are at the heart of recent proposals for performing fault-tolerant quantum computation with photons~\cite{gimeno-segoviaThreePhotonGreenbergerHorneZeilingerStates2015, bartolucciFusionbasedQuantumComputation2023},
with different variations on this idea available in literature~\cite{limRepeatUntilSuccessLinearOptics2005, liHeraldedNondestructiveQuantum2021, pankovichFlexibleEntangledState2023,degliniastySpinOpticalQuantumComputing2024}.
Browne and Rudolf introduced two types of entangling measurements that they call \enquote{Type I} and \enquote{Type II fusion}~\cite{browneResourceEfficientLinearOptical2005}.
These fusions were expressed with quarter-wave plates and polarizing beam splitters, respectively, as follows:
\[
  \tikzfig{polarizing/Type1Fusion}
  \qquad \qquad \qquad \qquad
  \tikzfig{polarizing/Type2Fusion}
\]
Note that Type I fusion is a partial measurement having two inputs and one output dual-rail mode,
while Type II is destructive and measures both qubits.
We can translate from polarization primitives to $\bf{LO}$ circuits using the following equalities:
\[
  \tikzfig{polarizing/BeamSplitter}
  \quad=\quad
  \tikzfig{translate/PolarizationToDR/BeamSplitter}
  \qquad \qquad \qquad
  \tikzfig{polarizing/WavePlate}
  \quad=\quad
  \tikzfig{translate/PolarizationToDR/WavePlate}
\]
Expressing the Type I and Type II fusions using this translation, we obtain the following diagrams, respectively:
\[
  \tikzfig{translate/PolarizationToDR/Type1Fusion}
  \qquad \qquad \qquad \qquad
  \tikzfig{translate/PolarizationToDR/Type2Fusion}
\]
This representation enables us to diagrammatically calculate the action of these circuits by representing them as a mixture of ZX diagrams.
Starting with Type I fusion, we get the following Kraus decomposition, proved in \cref{sec:mixedfusion}.
\begin{restatable}{proposition}{typeOneFusionProp} The following equation holds in the CP interpretation:
  \label{prop:type-I}
  \begin{gather*}
    \tikzfig{dualrail/Type1FusionFull}
  \end{gather*}
  after coarse-graining of the measurement operator by the equations $\cvar s = \cvar a \oplus \cvar b$ and $\cvar k = \cvar s \cvar b + \neg \cvar s (1 - \frac{\cvar a + \cvar b}{2})$.
  Here, $s$ is the Boolean value of success and $k$ is the Pauli measurement error.
\end{restatable}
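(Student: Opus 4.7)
The plan is to unfold the linear-optical Type I fusion circuit into ZX form via the dual-rail translations of \cref{sec:bgd-dual-rail}, enumerate the possible detector outcomes, and then match the resulting Kraus operators to the claimed mixture by the change of variables specified in the coarse-graining.

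The first step is to replace the polarizing primitives with their $\bf{LO}$ equivalents using the translation identities given immediately before the proposition, and then to wrap each input dual-rail mode with a triangle isometry so that the circuit is applied to qubit inputs in their dual-rail encoding. With the triangles in place, every photon detector with outcome $\cvar a$ or $\cvar b$ expands as a sum over Fock-basis effects, but the image of each triangle is confined to the single-photon subspace; since the beam splitter preserves total photon number, exactly two photons enter the two measured modes. Only the outcome tuples $(\cvar a, \cvar b)$ with $\cvar a + \cvar b \in \{0,1,2\}$ therefore contribute, giving the six branches $(0,0), (1,0), (0,1), (1,1), (2,0), (0,2)$.

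The second step is a case analysis on these six branches. For $(\cvar a, \cvar b) \in \{(1,0), (0,1)\}$ a single photon is retained in the output dual-rail mode; pushing the triangles through the beam splitter identity of \cref{sec:bgd-dual-rail} and simplifying via the ZX axioms yields a \emph{green} spider connecting the two input qubits to the single output, with a Pauli correction controlled by which rail clicked. For the remaining outcomes $(0,0), (1,1), (2,0), (0,2)$, either both photons leave the output qubit vacant or both occupy the same output rail, and the resulting ZX diagram reduces to two disconnected Z-basis measurements of the inputs, again with a Pauli factor determined by the parity of $\cvar a + \cvar b$. The scalar weights in each branch are tracked explicitly as powers of $\star$ arising from the Hadamard beam splitter and from the triangle simplifications.

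Finally, I would verify that substituting $\cvar s = \cvar a \oplus \cvar b$ and $\cvar k = \cvar s \cvar b + \neg \cvar s\,(1 - \tfrac{\cvar a + \cvar b}{2})$ partitions the six contributing branches into a success family ($\cvar s = 1$, indexed by $\cvar k \in \{0,1\}$ for the rail that clicked) and a failure family ($\cvar s = 0$, with $\cvar k$ distinguishing the two-photon detections from the zero-photon one), and that these match the right-hand side of the proposition as a CP-valued sum. The main obstacle I anticipate is not conceptual but bookkeeping: carefully accounting for the scalar factors, Pauli signs, and the three distinct failure sub-branches while ensuring that the two bunched outcomes $(2,0)$ and $(0,2)$ combine correctly under the $\cvar k$-coarse-graining. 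Since each case reduces via a short chain of spider and Hopf rewrites, the proof amounts to a finite and essentially mechanical enumeration once the ZX form of the circuit is written down.
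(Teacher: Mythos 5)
Your overall strategy --- translate the optical circuit into diagrammatic form, bound the detector outcomes by photon-number conservation, evaluate each branch, and coarse-grain via the stated change of variables --- is exactly the strategy of the paper's proof in \cref{sec:mixedfusion}. However, there are two concrete problems with the proposal as written. First, the branch $(\cvar{a},\cvar{b})=(1,1)$ does \emph{not} reduce to a pair of disconnected Z-basis effects: it vanishes identically, by the Hong--Ou--Mandel effect (the paper computes $D^{\cvar{a}=1,\cvar{b}=1}=0$). Lumping it with $(0,0),(2,0),(0,2)$ as a failure branch "of the same form" would give the wrong operator and the wrong scalar weight for the $(\cvar{s}=0,\cvar{k}=0)$ Kraus term; the correct accounting is that only $(2,0)$ and $(0,2)$ contribute there, and these coincide up to a global phase, which is what allows them to be merged into a single Kraus map after passing to the CP interpretation.

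Second, the claim that "each case reduces via a short chain of spider and Hopf rewrites" understates what is required. The dual-rail identities of \cref{sec:bgd-dual-rail} only describe the action of optical components on the single-photon (qubit) subspace of each mode pair, whereas the branches $(0,0)$, $(2,0)$, $(0,2)$ (and $(1,1)$) involve leaving that subspace --- two photons bunched in a single mode, or vacuum in the output pair. Evaluating, say, the amplitude for detecting two photons in one arm of a Hadamard beam splitter is a genuine Fock-space computation that the ZX axioms and the triangle-pushing identities cannot perform. The paper handles this by working in the QPath calculus (\cref{sec:qpath}), whose W-node and number-state rules are precisely what is used to derive each $D^{\cvar{a},\cvar{b}}$. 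Without invoking some such calculus (or falling back on an explicit permanent/matrix computation), the enumeration step of your plan cannot be carried out as described, even though the plan is otherwise the right one.
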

This means that the error is $\cvar b$ in case of success and $1 - \frac{\cvar a + \cvar b}{2}$ in the failure case.
Note that, in case of failure, the pair of output modes is no longer in the qubit subspace defined in \cref{eq:dual-rail-encoding}.
Now, considering the Type II fusion, we see that this is just a Type I fusion
preceded by beam splitters and followed by a single qubit measurement in the Z-basis.
We can thus compute its action on the qubit subspace:
\[
  \tikzfig{dualrail/Type2FusionFull}
\]
where $\cvar s = \cvar a \oplus \cvar b$ is the Boolean value of success and $\cvar k = \cvar s (\cvar b + \cvar d) + \neg \cvar s (1 - \frac{\cvar a + \cvar b}{2})$ is the error.
This means that the error is $\cvar b + \cvar d$ in case of success and $1 - \frac{\cvar a + \cvar b}{2}$ in case of failure.
The scalars in the diagrams above are crucial for computing probabilities, 
but they can be disregarded in many cases. 
We use them only in \cref{sec:universality}, to compute the probability of success of fusion measurements for different input states.
We see that, in their success outcomes, Type I and Type II fusion correspond to Z and X spiders in the ZX calculus.
However, nothing prevents us from defining other types of fusion measurements.
In \cref{sec:character}, we give a complete characterisation of the fusion measurements that induce correctable Pauli errors
in both the success and failure branches.

\begin{remark}
  \label{remark:nType1}
  Both Type I and Type II fusions can be generalized to arbitrary number of inputs.
  For example, the Type I fusion with $3$ inputs is given by the following $\bf{LO}$ circuit:
  \[
    \tikzfig{dualrail/FusionOfType1Fusions}
  \]
  Similarly, one may construct the linear optical diagram for a Type I fusion with $n$ inputs.
  Measuring the output modes of a $n$-input Type I fusion, we obtain the $n$-GHZ state analyzer studied in~\cite{panGreenbergerHorneZeilingerstateAnalyzer1998, boseMultiparticleGeneralizationEntanglement1998, pankovichFlexibleEntangledState2023}.
  Assuming the input is in dual-rail encoding, we can then compute the Kraus decomposition of this measurement.
  For the $3$-GHZ analyzer we obtain:
  \[
    \tikzfig{dualrail/Type2Fusion-ternary}
  \]
  where $\cvar{j} = r_2 + r_4 + r_6 \mod 2$, $\cvar{s} = r_1 + r_2 \mod 2$, $\cvar{s'} = r_3 + r_4 \mod 2$, 
  $\cvar{k} = 1 - \frac{r_{3} + r_{4}}{2}$ if $\cvar{s} = 1$ and $\cvar{k} = 1 - \frac{r_1 + r_2}{2}$ otherwise. 
  Similarly, one may compute the Kraus decompositions of $n$-input fusions altough some work is required to handle all the failure branches.
\end{remark}

\subsection{Resource state generation}\label{subsubsec:resource-states}

In fusion-based architectures,
cluster states are constructed by gluing smaller \enquote{resource states}, that are provided at every time step.
Here, we discuss different methods for generating photonic graph states.
These can be broadly assigned to two classes --- (i) linear optical and (ii) matter-based methods ---
although the two procedures can in principle be used in conjunction.

Linear optical methods typically begin with single photons, which are usually generated by spontaneous parametric down-conversion~\cite{harrisObservationTunableOptical1967}.
These photons are then entangled using linear optical Bell measurements or other heralded linear optical circuits~\cite{bartolucciCreationEntangledPhotonic2021}.
The advantage of this approach is that the constructed resource states can in principle have arbitrary connectivity as photons
are not spatially or temporally restricted.
Moreover, photons of different resource states can be prepared with low distinguishability by active alignment of sources~\cite{carolanScalableFeedbackControl2019}.
Examples of photonic graph states used in the literature include the star graph~\cite{gimeno-segoviaThreePhotonGreenbergerHorneZeilingerStates2015, leeGraphtheoreticalOptimizationFusionbased2023},
rings~\cite{bartolucciFusionbasedQuantumComputation2023} and complete-like graphs~\cite{azumaAllphotonicQuantumRepeaters2015}.
The main drawback of these approaches is that, because of the fundamental limits of linear optics~\cite{stanisicGeneratingEntanglementLinear2017},
entanglement can only be generated probabilistically.
This drawback can be mitigated by using ancillary photons~\cite{bartolucciCreationEntangledPhotonic2021}
and \enquote{switch networks}~\cite{bartolucciSwitchNetworksPhotonic2021} to boost probabilities of success~\cite{ewertEfficientBellMeasurement2014}.

Matter-based approaches rely on the generation of photons by excitation of a trapped ion~\cite{blinovObservationEntanglementSingle2004} or an artificial atom~\cite{ekimovQuantumSizeEffect1981, rossettiQuantumSizeEffects1983, murraySynthesisCharacterizationNearly1993, kastnerArtificialAtoms1993}.
The emitted photons are entangled with the state of the atom, and thus, if the atom is kept in a coherent superposition,
the emitted photons will also be entangled to each other~\cite{thomasEfficientGenerationEntangled2022}.
This technology has proved particularly effective for the generation of entangled photonic graph states~\cite{schwartzDeterministicGenerationCluster2016, costeHighrateEntanglementSemiconductor2023, coganDeterministicGenerationIndistinguishable2023}, and it has the advantage that resource states can be generated deterministically~\cite{schwartzDeterministicGenerationCluster2016, coganDeterministicGenerationIndistinguishable2023}.
Nevertheless, as photons are emitted one at a time, the resulting entanglement is restricted to \emph{linear} structure,
and photons need to be demultiplexed making them more susceptible to loss.
Photons emitted by non-identical atoms also suffer from distinguishability, although methods for mitigating this are being developed~\cite{yardOnchipQuantumInformation2022}.
A great advantage of matter-based emitters is that they can be used in repeat-until-success protocols, as shown in \cref{subsec:RUS},
that enables the near-deterministic implementation of entangling gates by fusion measurements~\cite{limRepeatUntilSuccessLinearOptics2005,degliniastySpinOpticalQuantumComputing2024}.

\begin{figure}[ht]
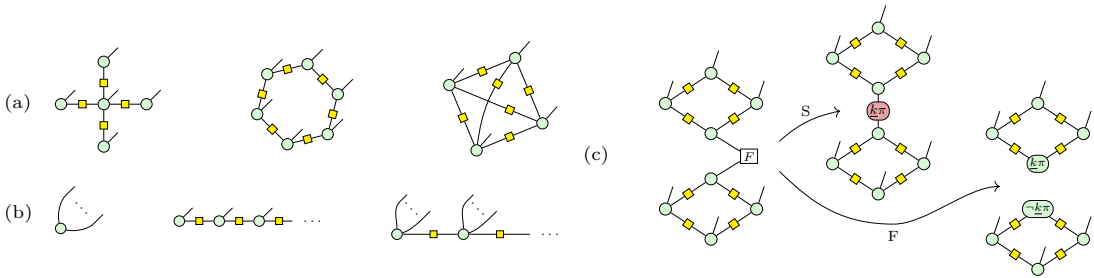

  \label{fig:fbqc}
  \centering
  \begin{subfigure}[c]{.5\textwidth}
    \scriptsize{(a)}
    \scalebox{0.75}{\tikzfig{ResourceStates/5Star}}
    \hfill
    \scalebox{0.75}{\tikzfig{ResourceStates/6Cycle}}
    \hfill
    \scalebox{0.75}{\tikzfig{ResourceStates/4Complete}}
    \\[\jot]
    \scriptsize{(b)}
    \scalebox{0.7}{\tikzfig{ResourceStates/GHZ}}
    \hfill
    \scalebox{0.7}{\tikzfig{ResourceStates/linear}}
    \hfill
    \scalebox{0.7}{\tikzfig{ResourceStates/VariableLinearGHZ}}
  \end{subfigure}
  \begin{subfigure}[c]{.49\textwidth}
    \scriptsize{(c)}
    \hfill
    \scalebox{0.8}{\tikzfig{fusion-example}}
    \hfill\null
  \end{subfigure}
  \caption{
    Examples of resource graphs generated using
    (a) linear-optical
    and (b) matter-based methods.
    (c) Action of a Type-II fusion measurement on a graph state depending on measurement outcomes.
  }
  \label{fig:example-resuorce-states}
\end{figure}

\subsection{Measurement-based quantum computing}\label{subsec:MBQC}

This section reviews the basics of Measurement-Based Quantum Computing~\cite{raussendorfOneWayQuantumComputer2001} and the related \emph{flow conditions}~\cite{danosExtendedMeasurementCalculus2009} that ensure the determinism of the model.
We suggest~\cite{kissingerUniversalMBQCGeneralised2019, backensThereBackAgain2021} for a more thorough introduction,
and~\cite{simmonsRelatingMeasurementPatterns2021, mcelvanneyCompleteFlowpreservingRewrite2023, mcelvanneyFlowpreservingZXcalculusRewrite2023} for the latest developments in the field.
The literature on MBQC defines two distinct languages to specify measurement-based computations: open graphs and measurement patterns.
We introduce each of these notions and recall the determinism theorems~\cite{browneGeneralizedFlowDeterminism2007,danosExtendedMeasurementCalculus2009}
that are used to relate them.

In MBQC, computation is performed in two stages:
(i) a \emph{graph state} is prepared and
(ii) it is processed by a sequence of \emph{single-qubit measurements}.
A graph state associated with the graph $G = (V, E)$ is an entangled quantum state constructed by preparing a qubit for each vertex in the $\ket{+}$ state and applying $CZ$ gates for each edge.
We may depict graph states equivalently as ZX diagrams or qubit circuits.
\begin{example}
  \label{example:graph-sate}
  \[
    \tikzfig{MBQC/simple-graph}
    \quad\leadsto
    \scalebox{.82}{\tikzfig{MBQC/graph-state}}
    \overset{\interp{\cdot}}{\longmapsto} \quad
    CZ_{24} CZ_{23} CZ_{34} CZ_{12} \ket{+}^{\otimes 4}
  \]
\end{example}
\noindent Qubits can be inputs or outputs, which we depict by connecting wires to the left or right boundaries of the diagram.
\begin{definition}[Open graph]
  An open graph is a tuple $(G, I, O)$, where $G = (V, E)$ is an undirected graph, and $I, O \subseteq V$ are (possibly overlapping) subsets representing inputs and outputs.
  We use the notations $\comp{O} \coloneqq V \backslash O$ for the non-output and $\comp{I} \coloneqq V \backslash I$ for the non-input vertices.
\end{definition}
\noindent During computation, every non-output vertex of the graph is measured in a certain basis specified by a measurement plane ($\lambda$) and angle ($\alpha$).
\begin{definition}[Labelled open graph]
  A labelled open graph is a tuple $\mathcal{M} = (G, I, O, \lambda, \alpha)$,
  where $(G, I, O)$ is an open graph,
  $\lambda: \comp{O} \to \set{\XYm, \XZm, \YZm}$ is an assignment of measurement planes,
  and $\alpha: \comp{O} \to [0, 2\pi)$ assigns measurement angles to each non-output qubit.
\end{definition}
\noindent We use the following notation to denote an arbitrary pure single qubit state (its corresponding effect is defined analogously).
\[
\ket{\pm_{\lambda,\alpha}} = \begin{dcases}
\frac{1}{\sqrt{2}}(\ket{0} \pm e^{i \alpha}\ket{1}) & \text{if } \lambda = \XYm \hspace{10mm} \tikzfig{z-phase-state} \\
\frac{1}{\sqrt{2}}(\ket{+} \pm e^{i \alpha}\ket{-}) & \text{if } \lambda = \YZm \hspace{10mm} \tikzfig{x-phase-state} \\
\frac{1}{\sqrt{2}}(\ket{i} \pm e^{i \alpha}\ket{-i}) & \text{if } \lambda = \XZm  \hspace{10mm} \tikzfig{y-phase-state}
\end{dcases}
\]
\noindent Any labelled open graph defines a target linear map which corresponds to the quantum computation that we want to execute.
To ensure such a map is well defined, we provide a measurement pattern which contains a concrete sequence of instructions to generate the graph.
\begin{definition}[\cite{danosMeasurementCalculus2007}]\label{def:meas_pattern}
  A \emph{measurement pattern} consists of an $n$-qubit register $V$ with distinguished sets $I, O \subseteq V$ of input and output qubits and a sequence of commands consisting of the following operations:
  \begin{itemize}
    \item Preparations $N_i$, which initialise a qubit $i \in \comp{I}$ in the state $\ket{+}$.
    \item Entangling operators $E_{ij}$, which apply a $CZ$-gate to two distinct qubits $i$ and $j$.
    \item Destructive measurements $M_i^{\lambda,\alpha, \cvar{s}}$, which project a qubit $i\in \comp{O}$ onto the orthonormal basis $\{\ket{+_{\lambda,\alpha}},\ket{-_{\lambda,\alpha}}\}$, where $\lambda \in \{ \XYm, \XZm, \YZm \}$ is the measurement plane, $\alpha$ is the non-corrected measurement angle .
    The projector $\ket{+_{\lambda,\alpha}}\bra{+_{\lambda,\alpha}}$ corresponds to outcome $\cvar s = 0$ and $\ket{-_{\lambda,\alpha}}\bra{-_{\lambda,\alpha}}$ corresponds to outcome $\cvar s = 1$.
    \item Corrections $[X_i]^t$, which depend on a measurement outcome (or a linear combination of measurement outcomes) $t\in\{0,1\}$ and act as the Pauli-$X$ operator on qubit $i$ if $t$ is $1$ and as the identity otherwise,
    \item Corrections $[Z_j]^s$, which depend on a measurement outcome (or a linear combination of measurement outcomes) $s\in\{0,1\}$ and act as the Pauli-$Z$ operator on qubit $j$ if $s$ is $1$ and as the identity otherwise.
  \end{itemize}
\end{definition}
A measurement pattern is \emph{runnable} if no command acts on a qubit already measured or not yet prepared (except preparation commands) and no correction depends on a qubit not yet measured.
Any runnable measurement pattern with $m$ measurement commands defines $2^m$ \emph{branches}, corresponding to the linear maps obtained by replacing the measurement 
commands with the Kraus map associated to a particular outcome and accordingly setting the corrections that depend on that outcome.
We say that a measurement pattern is \emph{deterministic} if all the branches of the pattern are proportional to each other.
In other words, all branches implement the same linear map (possibly with a different probability for each).
A pattern is \emph{strongly deterministic} if it is deterministic and all branches are equal up to a global phase,
i.e.\@ all branches have the same probability.
The pattern is \emph{uniformly deterministic} if it is deterministic for all choices of measurement angles $\alpha_i$.
It is \emph{step-wise deterministic} if the $m$ sub-patterns, obtained by truncating the sequence after a measurement command $M_i^\lambda$
and adding back all the corrections depending on qubit $i$, are deterministic.

\begin{definition}\label{def:ogs-to-linear-map}
 Suppose $\mathcal{M} = (G, I, O, \lambda, \alpha)$ is a labelled open graph.
 The \emph{target linear map of $\mathcal{M}$} is given by
 \[
  T(\mathcal{M}) \coloneqq \left( \prod_{i\in\comp{O}} \bra{+_{\lambda(i),\alpha(i)}}_i \right) E_G N_{\comp{I}},
 \]
  where $E_G \coloneqq \prod_{i\sim j} E_{ij}$ and $N_{\comp{I}} \coloneqq \prod_{i\in\comp{I}} N_i$.
\end{definition}
We may represent $T(\mathcal{M})$ in the ZX calculus by attaching the appropriate effects to the dangling qubits in the graph state.
\begin{example}
  \[
    \tikzfig{figures/mbqc-example-pattern-graph}
    \qquad\overset{T}{\longmapsto}\qquad
    \tikzfig{figures/mbqc-example-pattern}
  \]
  where the input set is $I = \set{a}$, the set of outputs is $O = \set{g, h}$, and the measurement planes are $\lambda(v) = \XYm$ for all $v \in \set{a, b, e, f}$, $\lambda(d) = \YZm$, and $\lambda(c) = \XZm$.
\end{example}
The above description only discussed MBQC with post-selected measurement outcomes, i.e.\@ assuming determinism of measurements.
However, quantum measurements are fundamentally probabilistic processes: they may or may not induce Pauli errors upon observation.
Measurements with potential errors are projections onto the orthonormal basis $\{\ket{+_{\lambda,\alpha}},\ket{-_{\lambda,\alpha}}\}$, where $\lambda \in \{ \XYm, \XZm, \YZm \}$ is the measurement plane and $\alpha$ is the measurement angle.
In ZX calculus, these are given by an additional $k \pi$ phase in the measurement, $\tikzfig{ZX/MeasurementWithError}$.
Here, $\lambda$ is given by the Hadamard gate, the S gate, and the identity for $\XYm, \XZm$, and $\YZm$, respectively.
In order to describe how measurement outcomes can be corrected, we introduce a lower-level language to specify MBQC programs.

\subsection{Flow structure}
Flow structure gives sufficient (and sometimes necessary) conditions for a labelled open graph to be implementable by a deterministic measurement pattern.
It incorporates a time-ordering of the measurements and a function that indicates where to correct undesired measurement outcomes.
Gflow (or generalised flow) is a specific type of flow structure that ensures that the target linear map is an isometry for all choices of measurement angles.
\begin{definition}
  For a graph $G = (V, E)$ and a subset of its vertices $K \subseteq G$,
  let $\Odd(K) \coloneqq \set{u \in V : \abs{N(u) \cap K} \equiv 1  \!\mod 2}$ be the \emph{odd neighbourhood} of $K$ in $G$,
  where $N(u)$ is the set of neighbours of $u$.
\end{definition}
\begin{definition}[Generalized flow~\cite{browneGeneralizedFlowDeterminism2007}]
  \label{def:gflow}
  An open graph $(G, I, O)$ labelled with measurement planes $\lambda: \comp{O} \to \set{\XYm, \XZm, \YZm}$ has generalized flow
  if there exists a map $g: \comp{O} \to \cal{P}(\comp{I})$, where $\cal{P}$ is the power set function, and a strict partial order $<$ over $V$ such that for all $v \in \comp{O}$:
  \begin{enumerate}
    \item for all $w \in g(v)$ if $v \neq w$ then $v < w$
    \item for all $w \in \Odd(g(v))$ if $v \neq w$ then $v < w$
    \item $\lambda(v) = \XYm \, \implies \, v \notin g(v) \land v\in \Odd(g(v))$
    \item $\lambda(v) = \XZm \, \implies \, v \in g(v) \land v\in \Odd(g(v))$
    \item $\lambda(v) = \YZm \, \implies \, v \in g(v) \land v\notin \Odd(g(v))$
  \end{enumerate}
  The set $g(v)$ is called the \emph{correction set of $v$}.
\end{definition}

Extending the notion of gflow, \emph{Pauli flow} allows vertices to be measured in a Pauli basis.
In this setting, the function $\lambda$ defining measurement planes is of type $\lambda: \comp{O} \to \set{\XYm, \XZm, \YZm, \Xm, \Ym, \Zm}$,
while the function $\alpha$ is only defined for nodes $v \in G$ when $\lambda(v) \in \set{\XYm, \XZm, \YZm}$.
In other words, the pattern specifies vertices that are measured in the $X$, $Y$, or $Z$ basis.
For these specific measurements, the correction set is less restricted, and we obtain the conditions below.
\begin{definition}[Pauli flow~\cite{browneGeneralizedFlowDeterminism2007, simmonsRelatingMeasurementPatterns2021}]
  \label{def:pauli-flow}
  An open graph $(G, I, O)$ labelled with measurement planes $\lambda: \comp{O} \to \set{\XYm, \XZm, \YZm, \Xm, \Ym, \Zm}$ has Pauli flow
  if there exists a map $p: \comp{O} \to \cal{P}(\comp{I})$ and a strict partial order $<$ over $V$ such that:
  \begin{enumerate}
    \item for all $w \in p(v)$ if $\lambda(w) \notin \set{\Xm, \Ym} \land v \neq w$ then $v < w$
    \item for all $w \in \Odd(p(v))$ if $\lambda(w) \notin \set{\Ym, \Zm} \land v \neq w$ then $v < w$
    \item for all $w \leq v$ if $\lambda(w) = \Ym \land v \neq w$ then $(w \in p(v) \Longleftrightarrow w \in \Odd(p(v)))$
    \item $\lambda(v) = \XYm \, \implies \, v \notin p(v) \land v\in \Odd(p(v))$
    \item $\lambda(v) = \XZm \, \implies \, v \in p(v) \land v\in \Odd(p(v))$
    \item $\lambda(v) = \YZm \, \implies \, v \in p(v) \land v\notin \Odd(p(v))$
    \item $\lambda(v) = \Xm \, \implies \, v \in \Odd(p(v))$
    \item $\lambda(v) = \Zm \, \implies \, v \in p(v)$
    \item $\lambda(v) = \Ym \, \implies \, (v \notin p(v) \land v \in \Odd(p(v))) \lor (v \in p(v) \land v\notin \Odd(p(v)))$
  \end{enumerate}
\end{definition}

To understand the definition above, first note that for measurements in the planes $\set{\XYm, \XZm, \YZm}$, the conditions 
are the same as for gflow.
The above conditions $7-9$ are obtained by taking the pairwise disjunctions \enquote{$\lor$}
of conditions $4-6$, using the fact that each Pauli measurement belongs to a pair of planes.   
To obtain condition $1$, note that a Pauli $X$ error on a qubit measured in the $X$ basis only induces a global phase on the state.
Therefore we must not correct $X$ errors on $X$ measurements.
Condition $2$ is the equivalent condition for $Z$ errors
and condition $3$ ensures that $Y$ measurements need only carry $Y = XZ$ corrections. 
A consequence is that $Y$ measurements in a graph with Pauli flow need not carry corrections, justifying conditions $1-2$.

We can now state the main result of~\cite{browneGeneralizedFlowDeterminism2007} which ensures that labelled open graphs with flow are implementable by deterministic patterns.
\begin{theorem}
  If a labelled open graph $\mathcal{M}$ has flow, then the pattern defined by:
  \[\prod^<_i (X^{\cvar{s}_i}_{g(i) \cap \{j \vert i < j\}} Z^{\cvar{s}_i}_{\Odd(g(i))\cap \{j \vert i < j\}} M_i^{\lambda_i, \alpha_i, \cvar{s}_i}) E_G N_{\comp{I}}\]
  where $\prod^{<}$ denotes concatenation in the order $<$,
 is runnable, uniformly, strongly and step-wise deterministic and realises the target linear map $T(\mathcal{M})$, which is guaranteed to be an isometry.
\end{theorem}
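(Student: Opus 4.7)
The plan is to prove step-wise determinism first, from which uniform determinism, strong determinism, and the isometry property then follow by induction on the number of measurements. Before diving in, I would fix a linear extension of the partial order $<$, so the pattern becomes: (i) prepare all qubits in $\comp{I}$, (ii) apply all $E_{ij}$ for edges $ij\in E$, (iii) for each non-output vertex $v$ in the chosen order, apply $M_v^{\lambda(v),\alpha(v),\cvar{s}_v}$ immediately followed by $[X_w]^{\cvar{s}_v}$ for $w\in g(v)\cap\{j\mid v<j\}$ and $[Z_w]^{\cvar{s}_v}$ for $w\in \Odd(g(v))\cap\{j\mid v<j\}$. Flow conditions $1$ and $2$ (with their Pauli-flow relaxations $1$--$3$) say that every $w\in g(v)\cup\Odd(g(v))$ that is \emph{not} measured in a basis invariant under the Pauli to be applied satisfies $v<w$, so these corrections only ever touch future (or currently measured) qubits. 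This guarantees runnability.

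The core of the proof is showing, for a single measurement command $M_v^{\lambda(v),\alpha(v),\cvar{s}_v}$, that the branch $\cvar{s}_v=1$ composed with its corrections equals the branch $\cvar{s}_v=0$ as linear maps on the remaining qubits. The branch $\cvar{s}_v=1$ differs from $\cvar{s}_v=0$ by inserting a $\pi$-phase in the measurement effect, which in ZX is a $\plane$-dependent Pauli operator on qubit $v$ ($Z_v$ for \XYm, $X_v$ for \YZm, and $Y_v$ for \XZm). I would then use the graph-state stabilizer identity $K_w\,\interp{E_G N_{\comp{I}}}=\interp{E_G N_{\comp{I}}}$, where $K_w=X_w Z_{N(w)}$, to rewrite the product $\prod_{w\in g(v)} K_w$ as identity on the graph state. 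This product acts as $X_{g(v)}Z_{\Odd(g(v))}$ on the qubits, so conditions $3$--$6$ (respectively $7$--$9$ in the Pauli case) guarantee that its restriction to $v$ is exactly the Pauli that absorbs the $\pi$-phase into a global phase, while its restriction to the unmeasured qubits $\{j\mid v<j\}$ is exactly the applied correction. The remaining action on already-measured qubits (allowed only in the Pauli-flow case by conditions $1$--$3$) lands on $X$/$Y$/$Z$-basis effects invariant under the corresponding Pauli up to global phase, and can therefore be discarded.

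With step-wise determinism in hand, induction along $<$ shows that at each stage the running state on the unmeasured qubits coincides, up to a global phase, with the post-selected $\cvar{s}=0$ branch. Since this argument does not depend on the choice of angles $\alpha_i$, uniform determinism follows, and since the corrections are unitary and thus norm-preserving, all $2^m$ branches have equal norm, giving strong determinism. Summing over all branches reproduces the post-selected map $T(\mathcal{M})$ of \cref{def:ogs-to-linear-map} up to the uniform normalisation. Finally, $T(\mathcal{M})^{\dagger}T(\mathcal{M})=I$ follows from causality of the overall channel together with strong determinism: the causal channel obtained by tracing classical outcomes equals $T(\mathcal{M})^{\dagger}T(\mathcal{M})$ multiplied by a scalar, and strong determinism pins that scalar to $1$.

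The main obstacle is the case analysis in the step-wise argument, especially in the Pauli-flow setting. The subtlety is handling corrections that land on vertices $w\le v$ measured in a Pauli basis: one must verify that conditions $1$--$3$ of \cref{def:pauli-flow} exactly match the Pauli-invariances of $X$, $Y$, and $Z$ eigenstates, so every residual correction on an already-measured qubit collapses to a global phase. Checking that the nine conditions of Pauli flow line up with the nine combinations of (plane/Pauli basis at $v$) $\times$ (action $X$, $Z$, or $XZ=Y$ from the stabilizer product on $v$) is where the bookkeeping is most delicate, and it is the heart of why both the definition and the theorem take the precise form stated.
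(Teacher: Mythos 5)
This theorem is imported as background from \cite{browneGeneralizedFlowDeterminism2007}: the paper gives no proof of its own, only the citation, so there is nothing internal to compare your argument against. Your sketch is a correct reconstruction of the standard argument from that reference — insert the graph-state stabilizer $\prod_{w\in g(v)}K_w$ to turn the $\cvar{s}_v=1$ branch into the $\cvar{s}_v=0$ branch times the prescribed corrections, with the flow conditions ensuring that the residual Pauli on $v$ (and, in the Pauli-flow case, on earlier Pauli-measured vertices) contributes only a global phase — and the identification of the measurement error as $Z$, $X$, $Y$ for the \XYm, \YZm, \XZm planes matches the conventions used here. The only loose ends are bookkeeping: the sign incurred when normalising $\prod_{w\in g(v)}K_w$ to $X_{g(v)}Z_{\Odd(g(v))}$ (harmless, since strong determinism is only up to global phase), a minor misnumbering of the gflow plane conditions (they are $3$--$5$, not $3$--$6$), and the $2^{-m}$ normalisation implicit in the final isometry claim, which the literature and this paper both gloss over in the same way.
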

The theorem indicates that $X$ corrections will be performed in $g(v) - \set{v}$ and $Z$ corrections in $\Odd(g(v)) - \set{v}$, for any qubit $v \in \comp{O}$.
A converse version of this theorem also holds for gflow~\cite{browneGeneralizedFlowDeterminism2007}.
Moreover, any qubit circuit can be turned into labelled open graph satisfying the gflow conditions,
which ensures that MBQC can perform universal quantum computation~\cite{backensThereBackAgain2021}.

\section{Characterisation of correctable fusion measurements}\label{sec:character}

In the previous section, we showed how the action of linear optical circuits on dual-rail qubits can be translated into ZX diagrams.
Beyond Type I and Type II fusions, this also enables the description of different entangling measurements that can be implemented by linear optical circuits.
This section delves into the classification and characterisation of such fusion measurements using the ZX calculus.

\begin{remark}
  Ref.~\cite{loblTransformingGraphStates2024} also analyzes the action of generalizations of fusion measurements on graph states.
  However, the generalizations of fusion they consider are local Clifford equivalent to Bell measurements and they consider only the success case.
  Here instead, we consider all measurements local unitarily equivalent to Type I fusion followed by arbitrary single-qubit measurement.
\end{remark}

\subsection{General fusion measurements}

In \cref{subsec:fusion-measurements}, we saw that Type II fusion differs from Type I fusion only by single-qubit unitaries applied before and after the fusion, as well as an additional measurement.
Similarly, we can describe the success and failure outcomes of all possible entangling measurements unitarily equivalent to Type II fusion as follows:
\begin{equation}
  \label{eq:GF}
  \tikzfig{ZX/GeneralFusion}
\end{equation}
Here, $U_1$, $U_2$, and $U_3$ are arbitrary single-qubit unitaries which, up to some global phase, can be expressed by three alternating rotations around the $Z$ and $X$ axes:
\begin{equation}
  \tikzfig{ZX/Euler}
  \label{eq:euler}
\end{equation}

Using this decomposition, we get a total of $9$ parameters to characterise a general fusion;
however, we are able to reduce this number using different observations that we support with calculations in ZX calculus.
First, one parameter can be eliminated from $U_3$ as it is followed by a single-qubit measurement that only contributes an irrelevant global phase:
\begin{equation*}
  \tikzfig{ZX/MeasureQubit}
\end{equation*}
Second, we observe that $Z(\gamma_1)$, $Z(\gamma_2)$, $Z(\alpha_3)$, and the fusion error $\cvar k \pi$ itself are simultaneously diagonalizable in the Z basis.
In other words, we can apply the spider fusion rule of the ZX calculus as follows:
\begin{equation}
  \label{eq:Fusion6}
  \tikzfig{ZX/Fusion6}
\end{equation}
where $\phi = \gamma_1 + \gamma_2 + \alpha_3$.

With this, we have reduced the number of parameters to describe an entangling measurements to $6$.
However, by only considering fusion with certain desirable properties, we can reduce this number even further.

\subsection{Green failure}

Let us consider the fusion with $U_1$, $U_2$, and $U_3$ all being the identity, like Type I fusion composed with a Z measurement.
An unsuccessful fusion in this case acts as a projector in the $Z$-basis.
This means that in addition to failing to fuse the two nodes, it also \emph{disconnects} them from their neighbours:
\begin{equation}
  \tikzfig{ZX/FusionFailDisconnect2}
\end{equation}
In order to preserve the entanglement of the graph state, we want failures to be \enquote{green}.
\begin{definition}[Green failure]
  We say that a fusion measurement has green failure if its failure outcome satisfies:
  \begin{equation}
    \tikzfig{ZX/GreenFailure}
  \end{equation}
  for some $\theta_1, \theta_2$.
\end{definition}
\noindent This means that, upon failure, the underlying resource graph preserves its connectivity:
\begin{equation}
  \tikzfig{ZX/FusionFailGreen2}
\end{equation}

We can characterise all types of fusions up to Z rotations on the nodes by the choices of $\beta_i$ in the Euler decomposition shown in \cref{eq:euler}.
We have the following measurement outcomes for different choices of $\beta_i$, with $i \in \{0,1\}$,
\begin{equation}
  \tikzfig{ZX/FusionFailCases}
\end{equation}
In other words, the failure is either green and keeps the connection of the graph, red and disconnects the graph, or it induces a non-unitary (and thus not correctable) error on the graph.
Asking for green failure reduces $U_1$ and $U_2$ to be of the following shape:
\begin{equation}
  \label{eq:GreenU1U2}
  \tikzfig{ZX/GreenU1U2}
\end{equation}
This reduces the $6$ parameters of a general fusion to $4$ if it has green failure:
\begin{proposition}
  \label{prop:green-failure}
  Any fusion measurement with green failure has the following form:
  \begin{equation}
    \tikzfig{ZX/GreenCharacterisation}
  \end{equation}
  for some choice of angles $\alpha_1, \alpha_2, \beta, \phi \in [0, 2\pi)$ and measurement outcomes $\cvar{j}, \cvar{k} \in \set{0, 1}$.
\end{proposition}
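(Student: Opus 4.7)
The plan is to start from the six-parameter form of a general entangling measurement obtained in \cref{eq:Fusion6}, which already absorbs the global phase freedom in $U_3$ and the simultaneous diagonalisability of $Z(\gamma_1), Z(\gamma_2), Z(\alpha_3)$ together with the error $\cvar{k}\pi$. From there, I would impose the green failure condition and reduce the six parameters to four by a careful but routine application of ZX rewrite rules.

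First I would fix attention on the failure branch only (the one obtained by setting $\cvar{s}$ to the failure value). Using the Euler decomposition of \cref{eq:euler} for $U_1$ and $U_2$, the failure branch splits into a central Z-basis projector sandwiched between the $X(\beta_i)$ rotations of $U_1$ and $U_2$. A case analysis on whether each $\beta_i$ is $0$ or $\pi/2$ (the two cases shown in the \textsc{FusionFailCases} diagram) is the key step: for any other choice of $\beta_i$, the resulting effect on a neighbouring graph state is not a Pauli stabilizer, so connectivity cannot be preserved; this is what forces the non-correctable or red-failure dichotomy displayed in the paper. The only configuration compatible with green failure is the one in which both $X(\beta_i)$ rotations either collapse to identity or to $\sqrt{X}$ in such a way that the middle Z-projector becomes (after colour change) a green spider of the form shown in \cref{eq:GreenU1U2}; here the two remaining angles $\theta_1, \theta_2$ are absorbed as Z-rotations on the neighbours.

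Having pinned the form of $U_1$ and $U_2$, I would then substitute back into the full fusion diagram and fuse spiders of the same colour using the spider rule. The $Z$-rotations on the outer sides of $U_1$ and $U_2$ combine with the central phase into the two remaining angles $\alpha_1$ and $\alpha_2$ sitting on the outer legs of the diagram; the unused Hadamard-conjugated rotation from the Euler decomposition produces the single $\beta$ parameter controlling the internal structure; and the accumulated phase $\phi = \gamma_1 + \gamma_2 + \alpha_3$ from \cref{eq:Fusion6} is preserved as the fourth parameter. The measurement outcomes $\cvar{j}$ (from the single-qubit measurement after $U_3$) and $\cvar{k}$ (the fusion success/error bit) are carried through the rewrites unchanged, which gives the form stated in the proposition.

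The main obstacle is not the final simplification, which is just spider fusion and colour change, but the characterisation step for $\beta_i$. One has to verify rigorously that none of the intermediate choices of $\beta_i \in (0, \pi/2)$ give a green failure, even after allowing arbitrary Z-rotations on the neighbouring qubits. I would handle this by translating the failure effect acting on a generic two-qubit graph state neighbourhood and showing, via the spider and Hopf rules, that preservation of the edge to neighbours forces the failure effect to factor through a single green spider, which in turn pins the $\beta_i$ to the discrete values used in \cref{eq:GreenU1U2}.
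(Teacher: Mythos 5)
Your proposal follows essentially the same route as the paper: the paper establishes this proposition directly from the preceding discussion, namely the six-parameter form of \cref{eq:Fusion6}, the case analysis on the Euler angles $\beta_i$ of $U_1,U_2$ in the failure branch (green/red/non-correctable trichotomy), the resulting restriction of $U_1,U_2$ to the shape of \cref{eq:GreenU1U2}, and a final spider fusion. One small imprecision: $X(\beta_i)$ collapsing to the identity gives \emph{red} failure (the $Z$-basis projector disconnects the neighbours), so green failure forces $\beta_i=\pm\pi/2$ exactly --- i.e.\ both rotations must be $\sqrt{X}^{\pm1}$, not "identity or $\sqrt{X}$" --- which your own qualifying clause about the projector becoming a green spider already implies.
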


\subsection{Pauli errors} 

Since measurements in quantum computing are probabilistic processes, a fusion induces random errors in both its success and failure cases.
In MBQC, such errors can be propagated when the measurement pattern has flow.
To similarly propagate fusion measurement errors in our framework, they must be equivalent to local Pauli gates on the input qubits.
Diagrammatically, this means that the measurement errors $\cvar{k}\pi$ and $\cvar{j}\pi$ can be pulled out to the input wires while keeping them in X, Y or, Z basis.
\begin{definition}[Pauli error]
  \label{def-paulifusion}
  A fusion measurement has Pauli error when the success outcome satisfies:
  \begin{equation}
    \tikzfig{ZX/PauliFusion}
  \end{equation}
  for some bits $\cvar{w},\cvar{x},\cvar{y},\cvar{z} \in \set{0, 1}$.
\end{definition}
From the equation above, we deduce that either $U_1$ or $U_2$ must be Clifford, and that $U_3$ must be a gate locally equivalent to $H$, $S$, or $Id$ so that the measurement is in the \YZ, \XZ, or \XY plane, respectively.
Further requiring that failure is green gives us the following characterisation.
\begin{restatable}{proposition}{characterization}
  \label{thm-characterisation}
  Any fusion measurement with green failure and Pauli error has the following form:
  \begin{equation}
    \tikzfig{ZX/PauliCharacterisation}
  \end{equation}
  for a measurement plane $\lambda \in \set{\YZm, \XZm, \XYm}$, angles $\alpha, \omega \in [0, 2\pi)$, and a choice
  of Clifford parameter $d \in \set{0, 1, 2, 3}$.
\end{restatable}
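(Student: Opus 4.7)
The plan is to start from the green-failure form given by \cref{prop:green-failure}, which already expresses the fusion in terms of four continuous parameters $\alpha_1,\alpha_2,\beta,\phi$, and then impose the Pauli-error condition of \cref{def-paulifusion}. The condition demands that both $\cvar{k}\pi$ and $\cvar{j}\pi$, which sit inside the circuit as internal error phases, can be propagated out to the two input wires so as to leave only $X$, $Y$ or $Z$ gates there. The workhorse is the $\pi$-commute rule of the ZX calculus together with spider fusion.

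First I would consider the single-qubit part on the measured wire. The error $\cvar{k}\pi$ has to pass through the single-qubit gate parametrised by $\beta$ before reaching the phase-$\phi$ effect, and the combined outcome must be representable as a standard Pauli-basis measurement in some plane $\lambda \in \{\XYm, \YZm, \XZm\}$. The only single-qubit rotations that conjugate a $Z(\pi)$ error into another Pauli are those locally equivalent to $\mathrm{Id}$, $H$ or $S$; equivalently $\beta$ must be a multiple of $\pi/2$, and the three possibilities fix $\lambda$ as $\XYm$, $\YZm$ or $\XZm$ respectively. The remaining freedom on this wire becomes the measurement angle $\alpha := \phi$.

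Next I would push the errors through the two $X$-rotations $\alpha_1,\alpha_2$ on the input wires. Applying $\pi$-commute, a $Z(\pi)$ crossing an $X(\alpha)$ rotation costs a factor that, after spider fusion, is an additional rotation by $2\alpha$; for that residue to be a Pauli gate on the input we need $2\alpha \equiv 0 \pmod{\pi}$, i.e.\ $\alpha \in \{0,\pi/2,\pi,3\pi/2\}$. Because there are two independent error bits $\cvar{k},\cvar{j}$, the Pauli-error condition forces exactly one of $\alpha_1,\alpha_2$ into this Clifford set, while the other wire can keep a continuous angle $\omega$. This yields the Clifford parameter $d \in \{0,1,2,3\}$ indexing $\{I,S,Z,S^\dagger\}$ (or the $X$-basis analogues, depending on $\lambda$) together with the free angle $\omega$.

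The main bookkeeping obstacle is verifying that these two reductions can be performed independently and consistently: the measurement-plane reduction must not reintroduce a continuous parameter on the input wires, and the Clifford reduction must not obstruct the classification of the measurement plane. I would handle this by splitting on the three cases of $\lambda$ and, in each case, normalising the resulting diagram via $\pi$-commute and spider fusion until it matches the claimed form \tikzfig{ZX/PauliCharacterisation}; a consistency check is that the two literature examples — $X$-fusion (Type II of \cite{browneResourceEfficientLinearOptical2005}) and $Y$-fusion (CZ fusion of \cite{limRepeatUntilSuccessLinearOptics2005}) — should appear as instances for specific choices of $\lambda$ and $d$, which gives a concrete sanity test for the final normal form.
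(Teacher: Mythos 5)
Your overall strategy --- start from \cref{prop:green-failure} and impose the Pauli-error condition of \cref{def-paulifusion}, treating the measured leg and the two input legs separately --- matches the paper, and your analysis of the input wires is essentially the paper's: commuting a $\pi$-error past $X(\alpha_i)$ leaves an extra $X$-rotation by $2\alpha_i$, so one of $\alpha_1,\alpha_2$ must be an integer multiple of $\pi/2$ (the Clifford parameter $d$) while the other survives as the free angle $\omega$. However, your treatment of the measured wire inverts the roles of $\beta$ and $\phi$, and this is a genuine error rather than an alternative parametrisation. After \cref{eq:Fusion6} the phase $\phi$ sits \emph{on the central spider} together with the fusion error $\cvar{k}\pi$, while $\beta$ is the $X$-rotation on the dangling measured leg. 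The measurement-outcome error $\cvar{j}\pi$ is an $X$-phase on that leg; pushing it through the central $Z$-spider negates the spider's phase, leaving a residue $Z(2\phi)$ on the node that is Pauli only when $\phi$ is a multiple of $\pi/2$. So it is $\phi$ that is forced to be Clifford, and it is $\beta$ that remains the continuous measurement angle $\alpha$; the paper's case split is on the parity of $\phi/(\pi/2)$ (giving \XZm and \YZm), with the \XYm case arising when additionally $\beta = \pi/2+\alpha$.

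With your assignment ($\beta$ a multiple of $\pi/2$, $\alpha \coloneqq \phi$ free) the only non-degenerate measurements realisable on the fused node lie in the \XYm plane: $\beta\in\{\pi/2,3\pi/2\}$ turns the $X$-effect on the gadget leg into a $Z$-effect that absorbs $\phi$ as an \XYm angle, while $\beta\in\{0,\pi\}$ collapses to a $Z$-basis effect for which $\phi$ is an irrelevant global phase. The \YZm and \XZm families, which require $\beta$ continuous and $\phi\in\{0,\pi/2\}$, are lost entirely --- in particular the phase gadget, which the paper highlights as the motivating non-Clifford instance of \cref{prop:characterisation}, carries its free parameter as the $X$-phase on the gadget leg (i.e.\ in $\beta$, with $\phi=0$) and is not of your claimed form. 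Your proposed sanity check would not detect this, since $X$- and $Y$-fusion have every parameter Clifford and are consistent with either parametrisation; testing the phase-gadget instance would have exposed the problem.
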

\noindent The detailed proof is given in \cref{app:fusion}.

\begin{definition}
  We call YZ, XZ, and XY fusion the three classes of fusion obtained by the choice of $\lambda$:
  \begin{equation}
    \tikzfig{ZX/PauliLambdaXYZ}
  \end{equation}
\end{definition}

In practical applications, it is desirable that the action of a fusion measurement on its target qubits is symmetric,
so that errors can be propagated on either qubit at will.
\begin{definition}[Symmetric fusion]
  We say that a fusion measurement is symmetric if it is invariant under swap in the success case, that is,
  \begin{equation}
    \tikzfig{fusion/SymmetricFusion}
  \end{equation}
\end{definition}
\begin{theorem}
  \label{prop:characterisation}
  Any symmetric fusion measurement with green failure and Pauli error has the following form:
  \begin{equation}
    \tikzfig{ZX/Characterisation}
  \end{equation}
  where $c \in \set{0, 1}$, $\lambda \in \set{\YZm, \XZm, \XYm}$, and $\alpha \in [0, 2\pi)$.
\end{theorem}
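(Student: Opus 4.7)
The plan is to start from \cref{thm-characterisation}, which already tells us that any fusion measurement with green failure and Pauli error has the four-parameter form indexed by a plane $\lambda \in \{\YZm,\XZm,\XYm\}$, two angles $\alpha,\omega \in [0,2\pi)$, and a Clifford parameter $d \in \{0,1,2,3\}$. The target statement differs only in that $\omega$ has disappeared and $d$ has been reduced to a single bit $c$. So the entire proof reduces to showing that the swap-symmetry condition \emph{precisely} collapses $\omega$ and cuts the $d$-orbit in half.

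First, I would write out the success branch of the diagram from \cref{thm-characterisation} with the explicit input wires, keeping the two sides distinguishable so that the two triangles and the single-qubit unitaries sitting on each input leg are clearly labelled. Because the characterisation was obtained by pushing $U_1$ on the left input and $U_2$ on the right input (with $U_3$ absorbed into the measurement), the only source of asymmetry between the two input wires is the gate $\omega$ that sits on one rail after the Clifford $d$ has been absorbed. Applying the definition of symmetric fusion gives an equation between the diagram and its image under swap, which by spider fusion reduces to the requirement that the single-qubit gate on one rail equals the single-qubit gate on the other rail.

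Next, I would carry out the case split on $\lambda$. In each of the three planes, the characterisation leaves $\omega$ as a residual rotation around the axis orthogonal to $\lambda$, and the swap equation forces this rotation to equal its own inverse under conjugation by the Clifford $d$; using the ZX rules $\pi$-commute, spider fusion, and Euler, this will collapse to $\omega \equiv 0 \pmod{2\pi}$ in each case (up to merging a residual $\pi$-phase into $\alpha$). Simultaneously, swap symmetry identifies pairs of Clifford values of $d$ that differ by the Pauli axis transverse to the measurement plane, cutting $\{0,1,2,3\}$ down to the two orbit representatives $c \in \{0,1\}$ displayed in the statement.

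The main obstacle, and essentially the only real content, is the bookkeeping in this case analysis: one has to track carefully how an asymmetric $\omega$ on one rail is transported across the Clifford $d$ and the final measurement to meet the other rail, and to verify that the three measurement planes really do unify under the single parameter $\lambda$ in the final diagram. I expect this to be handled by the same Euler-style rewrites used to derive \cref{eq:Fusion6} and \cref{eq:GreenU1U2}, so no new graphical machinery is needed; the work is just a careful tabulation of the twelve combinations of $(\lambda,d)$ and noting which ones are swap-equivalent. Once that table is in place, reading off $c$ and $\alpha$ gives the claimed normal form.
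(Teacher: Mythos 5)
Your proposal takes the same route as the paper, whose entire proof of this theorem is the single line that it follows directly from \cref{thm-characterisation}; you correctly identify that all of the remaining content is in showing that swap-symmetry eliminates the free angle $\omega$ and halves the Clifford parameter $d$. One caveat on the predicted endpoint: since the asymmetry in \cref{thm-characterisation} comes from one input leg carrying the Clifford phase $d\frac{\pi}{2}$ while the other carries the unconstrained $\omega$ (the appendix fixes $\alpha_1$ Clifford \enquote{without loss of generality}), the symmetry condition should force $\omega$ to \emph{equal} that Clifford phase — so both legs end up carrying the same $c\frac{\pi}{2}$ after absorbing $\pi$'s into the Pauli error variables, consistent with $c$ acting as a local Clifford on both target qubits — rather than forcing $\omega \equiv 0$ as you predict; this changes the outcome of the tabulation you outline but not its structure.
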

\begin{proof}
  This directly follows from \cref{thm-characterisation} proved in \cref{app:fusion}.
\end{proof}

\noindent
This gives us the following three success cases depending on the choice of $\lambda$:\\
\tabulinesep=3mm
\begin{tabu}
  to \textwidth { X[$$c] | X[$$c] | X[$$c] }
  \lambda = \YZm             & \lambda = \XZm             & \lambda = \XYm             \\
  \tikzfig{fusion/Fusion-YZ} & \tikzfig{fusion/Fusion-XZ} & \tikzfig{fusion/Fusion-XY} \\
\end{tabu}

\noindent These fusions can be used to implement a large family of entangling operations,
such as phase gadgets~\cite{kissingerReducingNumberNonClifford2020}.

\begin{example}[Phase gadgets]
  A \emph{phase gadget} is an entangling gate that plays an important role in quantum circuit optimization~\cite{debeaudrapFastEffectiveTechniques2020, vandeweteringOptimalCompilationParametrised2024}
and quantum machine learning where they allow tuning the amount of entanglement between their inputs.
  A phase gadget is an instance of a \YZ-fusion with $c = 0$ and $\alpha \in [0, 2\pi)$:
  \[
    \tikzfig{fusion/PhaseGadget}
  \]
\end{example}

\subsection{X and Y fusions}\label{sec:three-fusions}

The characterisation that we obtained for fusions with green failure and Pauli error is three-fold, corresponding to
the three planes on the Bloch sphere.
We now consider measurements with the additional property of being Pauli measurements, in the X, Y or Z basis.

\begin{definition}[Pauli green fusion]
  We say that a fusion measurement is Pauli green if it is of the form given in \cref{prop:characterisation} with $\alpha$ being a multiple of $\frac{\pi}{2}$.
\end{definition}
\noindent This gives us the following three fusions:\\
\begin{tabu}
  to \textwidth { X[c] | X[c] | X[c] }
  X-Fusion                  & Y-Fusion                  & Z-Fusion                  \\
  \tikzfig{fusion/Fusion-X} & \tikzfig{fusion/Fusion-Y} & \tikzfig{fusion/Fusion-Z} \\
\end{tabu}

\noindent Suppose we write $\alpha = a \frac{\pi}{2}$ where $a \in \{0, 1, 2, 3\}$. Then from YZ fusion we obtain $Z$-fusion when $a$ is even and $Y$-fusion when odd, from XY fusion we obtain $X$-fusion when $a$ is odd and $Z$-fusion when even, and from $XY$ fusion we obtain $X$-fusion when $a$ is even and $Y$-fusion when odd.

Note first that the Z-fusion is trivial: it is a separable two-qubit measurement and leaves the connectivity
of the graph state unchanged.
\[
  \tikzfig{fusion/Fusion-Z-disconnects}
\]
X and Y fusions instead are entangling measurements that qualitatively change the connectivity of
the graph: they either fuse two nodes into one (X-fusion) or add a hadamard edge between them (Y-fusion).

\begin{example}[Type II as X-fusion]
  The Type II fusion~\cite{browneResourceEfficientLinearOptical2005} is an instance of $X$-fusion with $c = 0$:
\begin{equation}
  \tikzfig{fusion/XFusion}
\end{equation}
Note that setting $c = 1$ is undesirable in this case as it only changes the errors from the X to the Y basis.
\end{example}

\begin{example}[CZ with Y-fusion]
  \label{ex:Y-fusion}
  The fusion measurement for performing CZ gates with linear optics, studied in~\cite{limRepeatUntilSuccessLinearOptics2005,degliniastySpinOpticalQuantumComputing2024},
  is an instance of $Y$-fusion.
  Indeed, up to Pauli errors, $Y$-fusion with $c = 1$ adds a Hadamard edge in the success case:
  \[
    \tikzfig{fusion/HFusion}
  \]
\end{example}

\begin{remark}
  Recall from \cref{remark:nType1} that Type II fusions can be generalized to an arbitrary number of input legs.
  Similarly, $Y$-fusion can also be generalized to any number of inputs.
  Its action corresponds to applying a CZ gate between each pair of qubits:
  \[
    \tikzfig{fusion/nHFusion}
  \]
  where the connections of the spiders form a complete graph on the right-hand side.
  This rewrite rule corresponds to toggling the CZ edges between all of the nodes being fused in the fusion of the underlying graph state; a formal proof can be derived from~\cite[Lemma 5.2.]{duncanGraphtheoreticSimplificationQuantum2020}.
\end{remark}

\begin{proposition}
  Up to local Clifford rotation on the target qubits, entangling Pauli green fusions are either X or Y fusions.\\
  \begin{tabu}
    to \textwidth { X[c] | X[c] }
    \textnormal{X-Fusion}     & \textnormal{Y-Fusion}     \\
    \tikzfig{fusion/X-Fusion} & \tikzfig{fusion/Y-Fusion} \\
  \end{tabu}
\end{proposition}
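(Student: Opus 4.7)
The plan is to carry out a finite case analysis on the parametric family already isolated by \cref{prop:characterisation}. A symmetric fusion with green failure and Pauli error is determined by the data $(c,\lambda,\alpha)$ with $c\in\{0,1\}$, $\lambda\in\{\YZm,\XZm,\XYm\}$ and $\alpha\in[0,2\pi)$. The additional Pauli-green hypothesis restricts $\alpha$ to the four values $a\pi/2$ with $a\in\{0,1,2,3\}$, so there are exactly $2\cdot 3\cdot 4 = 24$ candidate measurements to classify up to local Clifford rotations on the two output wires.

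First I would carry out the parity analysis for each plane $\lambda$, following the hint given in the paragraph immediately preceding the statement. For a fixed $\lambda$, the phase $a\pi/2$ on the central spider can be fused in by the spider rule, and when $a$ is a half-integer multiple of $\pi$ the Euler identity (\cref{eq:euler}) converts the resulting $S$ gate into an $H$-conjugated rotation; this effectively transports the measurement plane. Collecting the outcomes reproduces exactly the trichotomy quoted in the text: \YZm\ gives $Z$- or $Y$-fusion according to parity of $a$, \XZm\ gives $Z$- or $X$-fusion, and \XYm\ gives $X$- or $Y$-fusion. Every one of the $24$ candidates thus falls into \{Z, X, Y\}-fusion.

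Next I would eliminate the trivial $Z$-fusion cases using the observation that $Z$-fusion is a separable two-qubit measurement and so is excluded by the \emph{entangling} hypothesis. In the surviving branches, the parameter $c$ contributes at most a $\pi$-phase attached to the central fused spider; this can be propagated to the output legs via the $\pi$-copy and colour-change rules, producing local Pauli factors on the two target wires. Since Pauli gates are Clifford, they are absorbed into the stipulated local Clifford rotations. Comparing the resulting normal forms with the templates displayed just above the statement identifies each remaining case as either the $X$-fusion or the $Y$-fusion diagram.

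The main technical obstacle I anticipate is not the rewriting itself but the careful bookkeeping of the measurement-outcome variables $\cvar{j},\cvar{k}$ (and scalar prefactors) as the local Cliffords are commuted to the outputs: one must verify that the Pauli-error property of \cref{def-paulifusion} is preserved at each step, so that the residual errors after normalisation remain local Paulis on the input legs rather than some more general Clifford. Once this invariance is checked, the finite case analysis is complete and the proposition follows.
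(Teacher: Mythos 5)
Your proposal is correct and follows essentially the same route as the paper, which justifies this proposition only in the prose immediately preceding it: the parity analysis of $\alpha = a\pi/2$ in each plane yielding the $\{X,Y,Z\}$-fusion trichotomy, exclusion of $Z$-fusion by the entangling hypothesis, and absorption of the remaining parameter $c$ into the local Clifford freedom. The only small imprecision is your claim that $c$ contributes a $\pi$-phase: from the paper's examples ($c=1$ turning $X$-basis errors into $Y$-basis errors) it is a $\pi/2$-phase, i.e.\ an $S$-type gate, but this is still a local Clifford and does not affect the conclusion.
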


\section{Flow structure for fusion networks}\label{sec:flow}

In the previous section we characterised fusion measurements that induce Pauli errors on their input qubits.
The aim of this section is to describe the flow structure that enables correction of these Pauli errors.
We give a general definition of fusion network, but we then focus on developing a notion of flow --- called XY-flow ---
for a subclass of fusion networks that use exclusively X and Y fusions.
Following~\cite{browneGeneralizedFlowDeterminism2007}, we introduce the notion of an XY-fusion pattern describing an FBQC computation as a sequence of instructions.
Assuming that all fusions are successful, we show that any XY-fusion network with XY-flow can be implemented deterministically by an XY-fusion pattern.
The resulting pattern can moreover be factorised such that all fusions appear before single-qubit measurements.
Finally, we show how that any decomposition of a labelled open graph as an XY-fusion network has XY-flow
provided that the original open graph has Pauli flow.

\begin{remark}
  A related prior work~\cite{debeaudrapPauliFusionComputational2020} introduces a notion of flow for a similar model of computation.
  Their model is defined on a two-qubit fusion measurement arising from lattice surgery operations.
  However, these induce different errors than those appearing in photonic Type I and Type II fusion, which are outside the Pauli Fusion model.
\end{remark}

\subsection{Fusion networks}\label{subsec:fusion-network}

In FBQC, a resource state of photons is prepared and it is probed by a sequence of
destructive fusions and single-qubit measurements.
A \emph{fusion network} specifies a configuration of fusions and single-qubit measurement to be performed on the resource state.
In our definition, we assume that each node in the resource graph state is implemented by multiple photons ---
one for each fusion and one for the single-qubit measurement --- entangled as a GHZ state to each other.
This is equivalent to treating fusion as a non-destructive measurement, acting as follows in the success case:
\begin{equation}
  \tikzfig{fnetwork1}
\end{equation}
This approach is convenient as it allows us to directly relate fusion networks and the standard MBQC notion of labelled open graph.
It also allows us to obtain a general form for the probability of success, as shown in \cref{subsec:fusion-prob}.
Moreover, we may recover the action of destructive fusions by measuring the remaining photons in the $X$ basis and,
as shown in \cref{subsec:destructive-networks}, any fusion network in our sense gives rise to an equivalent destructive one.
We can thus give a general definition of fusion networks, allowing any symmetric fusion with green failure and Pauli error, as
described in \cref{prop:characterisation}.

\begin{definition}[Fusion network]
  A fusion network, denoted by $\mathcal{F} = (G, I, O, F, \lambda, \alpha, c)$, is given by the following:
  \begin{enumerate}
    \item an open graph $(G, I, O)$ (called \enquote{resource graph}),
    \item a set of fusions $F \sub M(\comp{O} \times \comp{O})$,
    \item an assignment of measurement planes $\lambda: \comp{O} + F \to \set{\XYm, \XZm, \YZm, X, Y, Z}$,
    \item an assignment of measurement angles $\alpha: \comp{O} + F \to [0, 2\pi)$ ($\alpha(v)$ is set to zero if $\lambda(v) \in \set{X, Y, Z}$), and
    \item a Clifford parameter for each qubit $c: \comp{O} \to \set{0, 1, 2, 3}$.
  \end{enumerate}
  where $M$ denotes the multi-set construction and $+$ denotes the disjoint union.
\end{definition}

\begin{remark}
  The definition of fusion network given here references a single resource graph.
  In practice, the graph $G$ is the disjoint union of multiple copies of the same basic resource state, such as the ones depicted in \cref{fig:example-resuorce-states}.
\end{remark}

Following \cref{prop:characterisation}, a successful fusion has the effect of introducing an additional
node in the graph, measured in an arbitrary plane and angle.
Thus, any fusion network $\mathcal{F}$ defines a \emph{target open graph}, denoted $\mathcal{M_F}$, capturing the computation performed when we post select on fusion successes.

\begin{definition}[Target open graph]
  Given a fusion network $\mathcal{F} = (G, I, O, F, \lambda, \alpha, c)$ with $G = (V, E)$.
  Any fusion $f \in F$ contributes an extra vertex to the graph, labelled $v_f$.
  The \emph{target open graph} of $\mathcal{F}$ is $\mathcal{M_F} \coloneqq (G_\mathcal{F} = (V_\mathcal{F}, E_\mathcal{F}), I, O, \lambda_\mathcal{F}, \alpha_\mathcal{F})$,
  where:
  \begin{gather*}
      V_\mathcal{F} = V \cup \set{v_f | f \in F} \qquad
      E_\mathcal{F} = E \cup \set{ (v_f, w) | w \text{ belongs to } f \in F }\\
      \lambda_\mathcal{F}(u) =
    \begin{dcases}
      \lambda(f), & \text{if } u = v_f \text{ for some } f \in F \\
      \YZm, & \lambda(u) = \XZm \land c(u) \bmod 2 \equiv 1 \\
      \XZm, & \lambda(u) = \YZm \land c(u) \bmod 2 \equiv 1 \\
      \lambda(u), & \text{otherwise}
    \end{dcases}
    \qquad
    \alpha_\mathcal{F}(u) =
    \begin{dcases}
      \alpha(f), & \hspace*{-1.2cm}\text{if } u = v_f \text{ for some } f \in F \\
      \alpha(u) + \frac{c(u)\pi}{2}, & \lambda(u) = \XYm\\
      (-1)^{\left\lceil\! \frac{c(u)}{2}\! \right\rceil} \alpha(u), & \lambda(u) = XZ \\
      (-1)^{\left\lfloor\! \frac{c(u)}{2}\! \right\rfloor} \alpha(u), & \lambda(u) = YZ
    \end{dcases}
  \end{gather*}
  The \emph{target linear map} of the fusion network $T(\mathcal{F})$ is the target linear map of $\mathcal{M_F}$.
\end{definition}

In other words, nodes and edges of $G$ are extended with those coming from the set of fusions $F$.
The fusion measurement planes and angles are part of the new single-qubit measurement parameters.
Furthermore, some of the original single-qubit measurements are modified if their Clifford parameters are non-zero.
Clifford parameters on measured qubits correspond to changes in measurement planes.
We can capture these by the following equations:
\[
  \tikzfig{figures/fusion/clifford-parameter-change}
\]

\begin{example}\label{ex:network}
  \label{ex:fusion-network}
  Consider a fusion network with a pair of lines as the resource graph and two fusions.
  The target measurement graph is obtained by adding a new node in the graph for each fusion.
  \[
    \tikzfig{figures/network-graph}
    \qquad \longrightarrow \qquad
    \tikzfig{figures/network-target1}
  \]
\end{example}

We are interested in a particular subclass of fusion networks that use $X$ and $Y$ fusions exclusively.

\begin{definition}[XY-fusion network]
  An \emph{XY-fusion network} is a tuple $\mathcal{F} = (G, I, O, F, \lambda, \alpha)$, where
  \begin{itemize}
    \item $(G, I, O)$ is an open graph,
    \item $F \sub M(\comp{O} \times \comp{O})$ is a set of fusions,
    \item $\lambda = (\lambda_V, \lambda_F)$ with $\lambda_V: \comp{O} \to \set{\XYm, \XZm, \YZm, X, Y, Z}$ and $\lambda_F: F \to \set{\Xm, \Ym}$ assigns measurement planes to single-qubit measurement and fusions, respectively, and
    \item $\alpha: \comp{O} \to [0, 2\pi)$ assigns measurement angles to each non-output qubit.
  \end{itemize}
\end{definition}

In this case, the target open graph can be further simplified: $X$ fusions merge the nodes they are applied to into one,
and $Y$ fusions add a hadamard edge between them.

\begin{definition}[Simplified target graph]
  The simplified target graph of $\mathcal{F}$ is $\mathcal{M_F} \coloneqq (G_\mathcal{F} = (V_\mathcal{F}, E_\mathcal{F}), I, O, \lambda_F, \alpha)$, where
\begin{itemize}
  \item $V_\mathcal{F} = (V \setminus V_X) \cup \set{v_f | f \in F, \lambda_F(f) = \Xm }$,
  \item $E_\mathcal{F} = E_{unchanged} \cup E_X \cup E_Y$ where
  \begin{itemize}
    \item $E_{unchanged} = \set{e | (v, w) = e \in E \text{ where } v, w \notin V_X}$ are unchanged edges,
    \item $E_X = \set{ (v_f, w) | v \in V_X | w \in N(v) | f \in F \text{ where } v \text{ belongs to } f }$ are edges connected to new vertices that are added by X-fusions, and
    \item $E_Y = \set{ f \in F \text{ where } \lambda_F(f) = Y }$ are the extra edges added by Y-fusions.
  \end{itemize}
\end{itemize}
  where $V_X = \set{v | f \in F \text{ where } v \text{ belongs to } f \text{ and } \lambda(f) = \Xm}$.
Note that the target linear map of $\mathcal{G_F}$ is $T(\mathcal{M_F})$.
\end{definition}

\begin{example}\label{ex:XYnetwork}
  \[
    \tikzfig{figures/network-fusion}
    \qquad \longrightarrow \qquad
    \tikzfig{figures/network-target2}
  \]
\end{example}

\subsection{XY-fusion patterns}

Following the literature on flow~\cite{browneGeneralizedFlowDeterminism2007,danosExtendedMeasurementCalculus2009,backensThereBackAgain2021},
we can now define a notion of pattern that specifies the linear map implemented by a fusion-based computation as a sequence of operations.

\begin{definition}{XY-fusion pattern}
    A \emph{XY-fusion pattern} consists of an $n$-qubit register $V$ with distinguished sets $I, O \subseteq V$ of input and output qubits and a sequence of commands consisting of the following operations:
  \begin{itemize}
    \item Preparations $N_i$, which initialise a qubit $i \in \comp{I}$ in the state $\ket{+}$.
    \item Entangling operators $E_{ij}$, which apply a $CZ$-gate to two distinct qubits $i$ and $j$.
    \item Destructive fusions $F^{\lambda, \cvar{s}, \cvar{k}}_{ij}$ where $\lambda \in \set{\Xm, \Ym}$: $F^{X}$ is an X-fusion and $F^Y$ is a Y-fusion, and $s, k \in \{0,1\}$ are the success outcome and measurement outcome, respectively.
    \item Destructive measurements $M_i^{\lambda, \alpha, \cvar{k}}$, which project a qubit $i\notin O$ onto the orthonormal basis $\{\ket{+_{\lambda, \alpha}},\ket{-_{\lambda, \alpha}}\}$, where $\lambda$ is the measurement plane, $\alpha$ is the measurement angle and $k \in\{0,1\}$ indicates the measurement outcome.
    The projector $\ket{+_{\lambda, \alpha}}\bra{+_{\lambda, \alpha}}$ corresponds to outcome $\cvar{k} = 0$ and $\ket{-_{\lambda, \alpha}}\bra{-_{\lambda, \alpha}}$ corresponds to outcome $\cvar k = 1$.
    \item Corrections $[X_i]^k$, which depend on a measurement outcome (or a linear combination of measurement outcomes) $k\in\{0,1\}$ and act as the Pauli-$X$ operator on qubit $i$ if $k$ is $1$ and as the identity otherwise,
    \item Corrections $[Z_j]^l$, which depend on a measurement outcome (or a linear combination of measurement outcomes) $l\in\{0,1\}$ and act as the Pauli-$Z$ operator on qubit $j$ if $l$ is $1$ and as the identity otherwise.
  \end{itemize}
\end{definition}
\begin{remark}
  Note that this definition assumes that all the correction commands are performed on the qubits in the register.
  They can therefore not be applied on fusion nodes, i.e. on the qubits resulting from a Type I fusion.
  This simplified model is sufficient in the setting of XY-fusions. For fusions with arbitrary measurement planes and angles,
  a more refined definition of pattern may be required.
\end{remark}

An XY-fusion pattern is runnable if no command acts on a qubit already measured or not yet prepared (except preparation commands) and no correction depends on a qubit not yet measured.
Any runnable XY-fusion pattern has an underlying XY-fusion network given by forgetting the correction commands and adding inputs for the qubits in $I$ and outputs for the qubits in $O$.
The resource graph is given by the entangling commands $E_{ij}$,
the fusion pairs are given by the commands $F_{ij}^\lambda$ and
the measurement labels by the commands $M_i^{\lambda, \alpha}$.
Any XY-fusion pattern with $m$ single-qubit measurements and $f$ fusions defines $2^{m + 2f}$ branches given by post-selecting on the two possible outcomes of $M_i^{\lambda, \alpha}$ commands
and the $4$ possible outcomes of fusion commands $F_{ij}^\lambda$.
The \emph{success branches} are the $2^{m + f}$ branches where every fusion is successful.
We say that an XY-fusion pattern is \emph{deterministic on success} if all the success branches are proportional,
i.e. they implement the same linear map.
Similarly as for measurement patterns, we may define strong, uniform and step-wise determinism for the success branches of
XY-fusion patterns; see \cref{subsec:MBQC}.

\begin{example}\label{ex:XYpattern}
  As an example, consider the pattern defined by the following sequence:
  \[[X_4]^{l} [Z_2]^{j} [X_2]^{k} M^{\XYm, \alpha, \cvar{l}}_{3} M^{X, \cvar{k}}_{1} F^{X, \cvar{s}, \cvar{j}}_{13} E_{34} E_{12} N_4 N_3 N_2 N_1 \, .\]
  This pattern has $8$ success branches obtained by setting the different values of $k, l, j \in \{0, 1 \}$ in the following ZX diagram:
  \[
    \tikzfig{fpattern}
  \]
  where two stars have been cancelled by the scalars from the two entangling gates.
  By rewriting the ZX diagram above, we can show that these $8$ branches are proportional to each other:
  \[
    \scalebox{.9}{\tikzfig{fpattern-rewrite}}
  \]
  Therefore, this specific pattern is deterministic on success.
  Moreover, each of the branches carries the same scalar, making the pattern strongly deterministic.
  Since the rewrite above holds for any angle $\alpha$, the pattern is also uniformly deterministic.
  By considering the pattern truncated at single-qubit measurement commands, a similar rewrite shows that it is also step-wise deterministic on success.
\end{example}
\begin{remark}
  In this section we are only interested in \emph{proportionality} between linear maps, and we do not consider the probabilities of individual outcomes.
  We will analyse these probabilities in \cref{sec:universality}.
\end{remark}

\subsection{XY-flow and determinism on success}

We now define a notion of flow for fusion networks that makes them deterministically implementable by a fusion pattern.
Following~\cite{browneGeneralizedFlowDeterminism2007}, we prove that our notion of flow is both necessary and sufficient
for an XY-fusion pattern to be uniformly, strongly and stepwise deterministic on success.
Moreover, every such pattern can be factorized such that all fusions appear before single-qubit measurements.

\begin{definition}[XY-flow]
  \label{def:flow}
  An XY-flow for an XY-fusion network $\mathcal{F} = (G, I, O, F, \lambda, \alpha)$ is a Pauli flow $(p, \leq)$ on the target open graph $\mathcal{M_F}$,
  such that no corrections need to be applied on fusion nodes.
  Concretely, for any fusion node $f \in F$:
  \begin{center}
    if $\lambda(f) = X$ then for any $v \in \mathcal{M_F}$, $ f\notin \Odd(p(v))$.
  \end{center}
\end{definition}

\begin{remark}
  Note that the condition above is precisely what is necessary to define a flow on the target open graph which does not require corrections on fusion nodes.
  For $Y$-measured nodes this is already the case by conditions $1-3$ in \cref{def:pauli-flow}, so we do not need to impose additional conditions.
  To extend this definition to general fusion networks, one may use the condition that for any fusion node $f \in F$ and
  vertex $v \in M_\mathcal{F}$, $f \notin p(v)$ and $f \notin \Odd(p(v))$.
\end{remark}

\begin{theorem}\label{thm:flow-pattern}
  Given an XY-fusion network $\mathcal{F}$ with XY-flow $(p, <)$ the XY-fusion pattern defined by:
  \[
    \left( \prod^<_i X^{k_i}_{g(i)} Z^{k_i}_{\Odd(g(i))} M_i^{\lambda_i, \alpha_i, \cvar{k_i}}\right)
    \left(\prod_{f = (i, j) \in F} X^{k_f}_{g(f)} Z^{k_f}_{\Odd(g(f))} F_{ij}^{\lambda(f), \cvar{s_f}, \cvar{k_f}}\right)
    E_G N_{\comp{I}}
  \]
  where $g(i) = p(i)\, \cap \{ j \, \vert \, i < j \}$, $\prod^{<}$ denotes concatenation in the order $<$ and
  $\prod$ denotes concatenation in any order,
  is uniformly, strongly, and stepwise deterministic on success and implements the target linear map $T(\mathcal{F})$ when all fusions are successful.
\end{theorem}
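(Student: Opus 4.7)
My plan is to reduce this result to the standard Pauli flow theorem stated at the end of \cref{subsec:MBQC}, by treating each fusion as the preparation, entanglement, and measurement of an auxiliary node in the target open graph $\mathcal{M_F}$. By \cref{def:flow}, an XY-flow on $\mathcal{F}$ is exactly a Pauli flow $(p,<)$ on $\mathcal{M_F}$ with the extra property that no $X$-measured fusion node lies in $\Odd(p(v))$ for any $v$. The Pauli flow theorem then directly yields a uniformly, strongly and step-wise deterministic measurement pattern $\Pi$ on $\mathcal{M_F}$ realising $T(\mathcal{M_F}) = T(\mathcal{F})$.

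The central step is to show that $\Pi$ can be refactored so that every command associated with a fusion node occurs before every command on a resource-graph vertex, without breaking Pauli flow. The key observation is that fusion nodes impose no ordering constraints on other nodes: for an $X$-fusion node $f$, condition 1 of \cref{def:pauli-flow} is vacuous since $\lambda(f)=X$, and condition 2 is vacuous by the XY-flow constraint $f\notin\Odd(p(v))$; for a $Y$-fusion node $f$, conditions 1 and 2 are vacuous since $\lambda(f)\in\set{X,Y}\cap\set{Y,Z}$. Hence we may replace $<$ by a linear extension $<'$ that places every fusion node strictly below every non-fusion vertex, and verify that $(p,<')$ remains a Pauli flow on $\mathcal{M_F}$ (conditions~3--9 are all local at a single vertex and unaffected by the reordering). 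With this new order, $\Pi$ naturally splits into a fusion block followed by a single-qubit measurement block with Pauli corrections given by $g(i) = p(i)\cap\set{j\mid i<'j}$.

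To convert the fusion block into genuine fusion commands, I would recognise each triple $M_{v_f}^{\lambda(f),0,\cvar{k_f}}\, E_{i,v_f}\, E_{j,v_f}\, N_{v_f}$ as an instance of the success branch of $F^{\lambda(f),0,\cvar{k_f}}_{ij}$: this is precisely the ZX rewrite underlying \cref{prop:characterisation} and its specialisations in \cref{sec:three-fusions}, and the parameter $\cvar{s_f}=0$ corresponds to post-selecting on success. Grouping the fusion edges into $E_G$ plus the fusion-node attachments, and then discharging the fusion-node subroutines into $F^{\lambda(f),\cvar{s_f},\cvar{k_f}}_{ij}$, produces exactly the pattern displayed in the theorem. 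Uniform, strong and step-wise determinism on success are inherited from $\Pi$: uniform because the reordering is independent of $\alpha$; strong because all post-selected success branches have equal amplitude in $\Pi$; step-wise because truncating after any command of the fusion block or after any single-qubit measurement gives a pattern obtained from a prefix of $\Pi$ by the same grouping.

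I expect the main obstacle to be step~two: carefully checking that promoting fusion nodes to be $<'$-minimal genuinely preserves every Pauli flow condition, and that the resulting correction sets $g(f)$ do not accidentally reintroduce corrections on other fusion nodes. The XY-flow constraint is tailored precisely to rule this out for $X$-fusions, while for $Y$-fusions the absorbability of $X$ and $Z$ corrections into a Pauli-$Y$ measurement outcome (encoded via the bit $\cvar{k_f}$) ensures that no information is lost when corrections are pushed across the fusion block.
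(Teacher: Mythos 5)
Your proposal follows essentially the same route as the paper's proof: invoke the Pauli-flow determinism theorem on the target open graph $\mathcal{M_F}$, observe that conditions 1--2 of \cref{def:pauli-flow} (together with the XY-flow constraint $f \notin \Odd(p(v))$ for $X$-fusion nodes) never force a fusion node to be ordered after any other vertex, and hence make all fusion nodes minimal before regrouping their preparation, entangling and measurement commands into fusion commands. The one imprecision --- your claim that conditions 3--9 are purely local, whereas condition 3 quantifies over pairs $w \leq v$ and is therefore in principle sensitive to the reordering --- is shared by the paper's own (terser) proof, so your argument is faithful to it and, if anything, more detailed.
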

\begin{proof}
  Since $\mathcal{M_F}$ has Pauli flow, we have a correction function $p: \comp{O} + F \to \mathcal{P}(\comp{O} + F)$
  satisfying the Pauli flow conditions.
  Since the errors in the success branches of $\mathcal{F}$ correspond exactly to the errors in $\mathcal{M_F}$,
  by~\cite[Theorem 4]{browneGeneralizedFlowDeterminism2007}, $\mathcal{F}$ is uniformly, strongly and step-wise deterministic on success
  and implements the target linear map $T(\mathcal{F}) = T(\mathcal{M_F})$.
  It remains to show that we can factorise the pattern as above.
  Therefore, by conditions $1-2$ in \cref{def:pauli-flow}, if $\lambda(f) = Y$, we can set $f < v$ for any vertex $v \in \comp{O} + F$.
  If $\lambda(f) = X$, by \cref{def:flow} we have $f \notin \Odd(p(v))$ for any node $v$ in $\mathcal{M_F}$, and by condition $1$ in \cref{def:pauli-flow} we can set $f < v$ for any $v \in \comp{O} + F$.
  This gives us the factorisation required, where every fusion appears before single qubit measurements.
\end{proof}

\begin{theorem}
  If a runnable XY-fusion pattern is uniformly strongly, and stepwise deterministic on success, then the underlying XY fusion network has XY-flow.
\end{theorem}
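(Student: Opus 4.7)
The strategy is to reduce this converse to the known converse of the Pauli flow theorem established in~\cite{simmonsRelatingMeasurementPatterns2021}. The key correspondence is that a success branch of an XY-fusion pattern on $\mathcal{F}$ is, up to rewriting, a branch of a standard MBQC measurement pattern on the target open graph $\mathcal{M_F}$, subject to the constraint that no correction commands are applied on the fusion nodes $v_f$.

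First, I would use \cref{prop:characterisation} to rewrite each successful fusion $F_{ij}^{\lambda, 1, \cvar{k}_f}$ as a composite of preparing a new qubit $v_f$, entangling it with $i$ and $j$ via $CZ$ gates, and performing a single-qubit Pauli measurement $M_{v_f}^{\lambda(f), 0, \cvar{k}_f}$. This yields an equivalent measurement pattern $\mathcal{P}'$ on $\mathcal{M_F}$ whose success branches correspond exactly to the success branches of the original XY-fusion pattern, and in which corrections appear only on the original (non-fusion) vertices in $V$. By hypothesis, the original pattern is uniformly, strongly, and stepwise deterministic on success, so $\mathcal{P}'$ inherits the same properties as an MBQC pattern on $\mathcal{M_F}$.

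Second, I would apply the converse direction of the Pauli flow theorem to $\mathcal{P}'$ to conclude that $\mathcal{M_F}$ admits a Pauli flow $(p, \leq)$. I would then show that this Pauli flow satisfies the XY-flow condition: for every $X$-measured fusion node $f$, $f \notin \Odd(p(v))$ for all $v \in \mathcal{M_F}$. The reasoning is that the converse theorem yields a flow whose required corrections are those actually applied in $\mathcal{P}'$; since $\mathcal{P}'$ applies no correction on any fusion node, the flow cannot require a $Z$ correction on an $X$-measured fusion node $f$, as such a correction would genuinely alter the post-measurement state and break determinism. For $Y$-measured fusion nodes, Pauli flow conditions 1--3 already ensure that any propagated corrections combine to a global phase, so no additional restriction is needed (this is the content of the \cref{def:flow} remark).

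The main obstacle will be making precise the alignment between the converse theorem's output and the correction structure of $\mathcal{P}'$. The converse of \cite{simmonsRelatingMeasurementPatterns2021} produces \emph{some} Pauli flow without \emph{a priori} restricting where corrections may live. To bridge this gap, I would either appeal to a refined version of the converse, which takes the set of admissible correction targets as input, or argue directly: starting from any Pauli flow on $\mathcal{M_F}$, isolate the branches in which an $X$-measured fusion node $f$ would receive a $Z$ correction, and use uniform and stepwise determinism of $\mathcal{P}'$ to conclude that these branches must already coincide with the identity branch, allowing the flow to be modified so that $f$ is removed from $\Odd(p(v))$ while preserving the partial order and the remaining Pauli flow conditions. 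Verifying that such modifications can always be performed consistently across all fusion nodes is where the most technical care will be required.
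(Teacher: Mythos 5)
Your proposal follows essentially the same route as the paper: replace each successful fusion command by preparing a fresh vertex $v_f$, entangling it to its targets with $CZ$ gates, and Pauli-measuring it, so that the success branches become the branches of a standard measurement pattern on $\mathcal{M_F}$; then invoke the converse determinism theorem of~\cite{browneGeneralizedFlowDeterminism2007} and observe that the absence of corrections on fusion nodes yields the extra XY-flow condition. The only difference is that you flag, more candidly than the paper does, the gap between ``the pattern applies no corrections on fusion nodes'' and ``the flow produced by the converse theorem requires none'' --- the paper asserts this in one sentence, whereas you correctly identify it as the step needing the most care.
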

\begin{proof}
  For any runnable XY-fusion pattern, we may construct a corresponding measurement pattern where for each fusion command $F_{ij}^\lambda$,
  a new vertex $f$ is added to the qubit register, and the command is replaced by $M_{f}^\lambda(f) E_{if} E_{fj}$.
  Then the success branches of the XY-fusion pattern are proportional to the $2^{m + \absolutevalue{F}}$ branches of the resulting measurement pattern.
  The underlying geometry of this measurement pattern is precisely the target open graph of the underlying network $\mathcal{F}$ of the fusion pattern.
  Since the measurement pattern is uniform, strongly and stepwise deterministic, its underlying geometry must have gflow, by~\cite[Theorem 3]{browneGeneralizedFlowDeterminism2007}.
  Moreover, since the XY-fusion pattern only contains corrections on qubits in the register, there is a flow for $\mathcal{M_F}$ with no corrections on fusion nodes.
  Therefore $\mathcal{F}$ has XY-flow.
\end{proof}

\subsection{Decomposing open graphs as XY-fusion networks}\label{subsec:flow-preserving}

Suppose we wish to implement a given labelled open graph $\mathcal{G}$ as an XY-fusion network.
Several different fusion networks may exist that have $\mathcal{G}$ as their simplified target graph.
We now show that any such decomposition of $\mathcal{G}$ as a fusion network $\mathcal{F}$ is guaranteed to have XY-flow,
provided that $\mathcal{G}$ has Pauli flow.

\begin{restatable}[X-fusion]{proposition}{xFusionFlowPreserving}\label{prop:Xflow-preserving}
  The following open graph rewrite preserves the existence of Pauli flow:
  \[
    \tikzfig{fusion/XFusionFlowPreservingGraph}
  \]
  where $\lambda(f) = \lambda(b) = X$, $\lambda(a) = \lambda(v_f)$, and $\alpha(a) = \alpha(v_f)$.
\end{restatable}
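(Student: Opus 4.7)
The plan is to take a Pauli flow $(p, <)$ on the left-hand side and construct a Pauli flow $(p', <')$ on the right-hand side by transferring the flow data and using the new X-measured vertex $b$ to absorb any parity mismatches. Since $v_f$ inherits the plane and angle of $a$, and, up to the rewrite, essentially its neighbourhood, I would set $p'(v_f)$ to be $p(a)$ with each occurrence of $a$ replaced by $v_f$, and likewise rename $a \mapsto v_f$ inside every other correction set. For $b$, the defining constraint from \cref{def:pauli-flow} is $b \in \Odd(p'(b))$ (condition 7); since the only new neighbour created by the rewrite connects $b$ to $v_f$, taking $p'(b)$ to contain $v_f$ is the natural choice. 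The order $<'$ extends $<$ by placing $b$ immediately below $v_f$; this is always legal because $\lambda(b) = \Xm$ makes condition 1 vacuous on correction sets that contain $b$.

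The verification of the nine Pauli flow conditions then splits into three local regions. For vertices unaffected by the rewrite, every condition transfers verbatim from $(p, <)$, since their labels, their neighbourhoods, and the relevant fragments of their correction sets are unchanged. For $v_f$, the conditions follow from the analogous conditions on $a$ under the renaming $a \mapsto v_f$, using $\lambda(v_f) = \lambda(a)$ and $\alpha(v_f) = \alpha(a)$, together with the fact that the parity structure of $\Odd(p'(v_f))$ is inherited from $\Odd(p(a))$ up to the isolated contribution of $b$. For $b$ itself, the conditions reduce to a short check from $\lambda(b) = \Xm$, the small neighbourhood of $b$ fixed by the rewrite, and $v_f \in p'(b)$.

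The main obstacle, and the place where the argument is nontrivial, is the parity bookkeeping for $\Odd(p'(v))$ at vertices $v$ whose correction sets used to reference $a$: the rewrite changes the incidences around $v_f$, and this can flip the parity of $\Odd(p'(v))$ at one or two vertices. The key observation is that toggling $b$ in or out of $p'(v)$ flips $\Odd(p'(v))$ exactly at the neighbours of $b$ in the right-hand side, which is a short, controlled list; moreover such a toggle incurs no ordering cost because, by condition 1, X-measured vertices appearing in correction sets impose no constraint. Hence every parity discrepancy introduced by the rewrite can be absorbed by an appropriate insertion of $b$ into the relevant correction sets, leaving all other conditions intact and producing a valid Pauli flow on the right-hand side, as required.
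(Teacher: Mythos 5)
Your proposal takes a different route from the paper, and as written it does not go through. The paper does not verify the nine Pauli-flow conditions by hand: it decomposes the X-fusion rewrite into a chain of elementary rewrites --- copying, local complementation, pivoting, and state change --- each of which is already known to preserve the existence of Pauli flow (\cref{lem:copy,lem:lc,lem:pivot,lem:state-change}, citing~\cite{simmonsRelatingMeasurementPatterns2021,mcelvanneyFlowpreservingZXcalculusRewrite2023}), so the whole proof is a single graphical derivation. A direct construction of the flow, as you attempt, is legitimate in principle, but it must be carried out against the actual rewrite. The rewrite here relates the single vertex $v_f$ to a \emph{three}-vertex configuration $a$--$f$--$b$, in which $f$ is the X-measured fusion vertex adjacent only to $a$ and $b$, and the external neighbourhood of $v_f$ is split between $a$ and $b$ (this is how the proposition is used in \cref{thm:flow-simplified-graph} and in \cref{subsec:destructive-networks}). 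Your construction never defines a correction set for $f$, never checks the flow conditions at $f$, and treats $b$ as a pendant vertex hanging off $v_f$ rather than as one of the two fused qubits carrying part of the neighbourhood; the parity bookkeeping is therefore done on the wrong graph.

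Beyond the structural mismatch, the central claim that toggling the X-measured vertex $b$ into a correction set ``incurs no ordering cost'' only addresses condition 1 of \cref{def:pauli-flow}. The toggle changes $\Odd(p'(v))$ at every neighbour of $b$, and conditions 2 and 3 then impose ordering and parity constraints at those neighbours that are not vacuous merely because $b$ is X-measured: if the toggle brings a non-$\Ym$/$\Zm$ vertex $w$ into $\Odd(p'(v))$, you need $v <' w$, which may contradict the inherited order. Your choice $v_f \in p'(b)$ has the same problem in a sharper form, since it makes $\Odd(p'(b))$ contain the whole neighbourhood of $v_f$ and hence forces $b <' w$ for every non-$\Ym$/$\Zm$ neighbour $w$ of $v_f$ --- including neighbours measured before $v_f$ --- while placing $b$ too low in the order breaks condition 2 for every vertex whose correction set meets $N(b)$. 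Resolving this tension requires a more careful choice of $p'(b)$ and $p'(f)$ (and of where $b$, $f$ sit in the order) than the sketch provides; this is precisely the content of the elementary lemmas the paper reuses. Finally, you only construct a flow in one direction, whereas \cref{thm:flow-simplified-graph} uses the proposition as an equivalence; the paper's chain of invertible graphical rewrites yields both directions at once.
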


\begin{restatable}[Y-fusion]{proposition}{yFusionFlowPreserving}
  The following open graph rewrite preserves the existence of Pauli flow:
  \[
    \tikzfig{fusion/YFusionFlowPreservingGraph}
  \]
  where $\lambda(f) = Y$, $c(a) = c(b) = 0$ on the left and $c(a) = c(b) = 1$ on the right-hand side.
\end{restatable}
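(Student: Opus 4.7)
The plan is to establish the equivalence in both directions by constructing a Pauli flow on one side of the rewrite from a given Pauli flow on the other. The starting observation is that the rewrite is local: it inserts (or removes) a $Y$-measured vertex $v_f$ connecting $a$ and $b$, while toggling $c(a)$ and $c(b)$ between $0$ and $1$. By the target-open-graph construction from \cref{subsec:fusion-network}, this $c$-flip swaps the measurement planes \XZm $\leftrightarrow$ \YZm at $a$ and $b$ and correspondingly adjusts their angles. This change precisely compensates the effect of introducing a Hadamard-type edge through $v_f$ (as in \cref{ex:Y-fusion}), so the two sides represent equivalent MBQC computations on the qubits external to $\{a, b, v_f\}$.

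Given a Pauli flow $(p, <)$ on one side, I would define the new correction function $p'$ to agree with $p$ on every vertex other than $a$, $b$, and $v_f$, and choose $p'(v_f)$ to satisfy condition~9 of \cref{def:pauli-flow} via one of its two admissible configurations ($v_f \in p'(v_f) \land v_f \notin \Odd(p'(v_f))$, or the symmetric alternative). Conditions 1--3 of \cref{def:pauli-flow} guarantee that $Y$-measured vertices impose no ordering constraints from the rest of the graph and need never themselves carry corrections; hence $v_f$ can be placed minimally in the partial order $<'$ without introducing cycles and without affecting the flow behaviour at unaffected vertices. The order on all other vertices is inherited directly from $<$.

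The main obstacle will be verifying conditions 4--6 at $a$ and $b$ after the rewrite: their odd neighbourhoods gain or lose the contribution of $v_f$, and this parity shift must be reconciled with the $c$-induced swap of measurement planes. I plan to handle this by a case analysis on $\lambda(a)$ and $\lambda(b)$, showing that in each combination a small modification of $p'(a)$ or $p'(b)$ (namely, toggling membership of $v_f$) absorbs the parity change so that the plane-compatibility conditions remain satisfied. The key arithmetic observation is that adding $v_f$ to a correction set flips exactly the $\Odd$-parity at $a$ and $b$, which is precisely what is required to rebalance the conditions after the $c$-toggle. The reverse direction follows by running the construction backwards: remove $v_f$ from every correction set that contained it, flip $c(a), c(b)$ back to $0$, and verify that the resulting $(p, <)$ still satisfies the Pauli flow conditions at $a$, $b$, and at any vertex which previously had $v_f$ in its correction set or odd neighbourhood.
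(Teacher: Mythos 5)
Your overall strategy---directly constructing a Pauli flow on one side from a Pauli flow on the other---is genuinely different from the paper's proof, which instead expresses the rewrite as a short sequence of rewrites already known to preserve the existence of Pauli flow (local complementation about the $Y$-measured vertex, together with the state-change and copy rules of \cref{lem:lc,lem:state-change,lem:copy}), and then simply cites the literature results for those primitives. Your route could in principle be carried out, since the local-complementation lemma is itself proved by exactly such an explicit transformation of correction sets, but as sketched it has two concrete gaps.

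First, defining $p'$ to agree with $p$ on every vertex outside $\{a,b,v_f\}$ is not sufficient. The rewrite toggles the adjacency between $a$ and $b$, so for \emph{any} vertex $u$ elsewhere in the graph with $a \in p(u)$, the membership of $b$ in $\Odd(p(u))$ flips (and symmetrically with $a$ and $b$ exchanged), and $v_f$ newly enters $\Odd(p'(u))$ whenever exactly one of $a,b$ lies in $p(u)$. Conditions 2 and 3 of \cref{def:pauli-flow} must therefore be re-verified at every such $u$, not only at $a$ and $b$; the plane swap \XZm $\leftrightarrow$ \YZm at $a$ and $b$ induced by the $c$-toggle changes which of these conditions are exempt, so this is not automatic. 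Second, your key move---``toggling membership of $v_f$ in $p'(a)$ or $p'(b)$ to absorb the parity change''---is not a free choice. Once $v_f < v$ in the order (which conditions 1--2 applied to $p'(v_f)$ generally force for $v \in \{a,b\}$), condition 3 \emph{determines} whether $v_f \in p'(v)$: it must hold exactly when an odd number of $\{a,b\}$ lies in $p'(v)$. You would need to show that this forced value happens to produce the parity required by conditions 4--9 at $a$ and $b$ after the plane swap; that coincidence is precisely the content of the local-complementation lemma, and it is the step your sketch assumes rather than proves.
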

The proofs are in Appendix~\ref{app:flow}.

We can now show that Pauli flow on the simplified target open graph $\mathcal{G_F}$ is both necessary and sufficient for $\mathcal{F}$ to have XY-flow.

\begin{theorem}\label{thm:flow-simplified-graph}
  An XY-fusion network $\mathcal{F}$ has XY-flow if and only if the simplified target graph $\mathcal{G_F}$ has Pauli flow.
\end{theorem}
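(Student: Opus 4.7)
The plan is to prove both directions by iteratively applying the two flow-preserving open graph rewrites of Proposition~\ref{prop:Xflow-preserving} and its $Y$-fusion counterpart. These rewrites exactly describe the passage from $\mathcal{M_F}$ to $\mathcal{G_F}$: each $X$-fusion $f \in F$ triggers the $X$-fusion rewrite, merging the fusion vertex $v_f$ with its two neighbouring resource qubits into a single vertex; each $Y$-fusion triggers the $Y$-fusion rewrite, replacing $v_f$ by a Hadamard edge between its two neighbours and toggling the relevant Clifford parameters. A routine check shows that the order of these local rewrites does not matter, so we obtain a well-defined sequence of graph rewrites transforming $\mathcal{M_F}$ into $\mathcal{G_F}$.

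For the forward direction ($\Rightarrow$), I would assume $\mathcal{F}$ has XY-flow, which by \cref{def:flow} means in particular that $\mathcal{M_F}$ admits a Pauli flow. Applying the flow-preserving rewrites iteratively transforms $\mathcal{M_F}$ into $\mathcal{G_F}$ while preserving the existence of Pauli flow at every step, so $\mathcal{G_F}$ has Pauli flow.

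For the backward direction ($\Leftarrow$), I would start from a Pauli flow on $\mathcal{G_F}$ and reverse the same rewrites to obtain a Pauli flow $(p,<)$ on $\mathcal{M_F}$. What remains is to check the extra XY-flow condition: for every $X$-fusion node $f$, one has $f \notin \Odd(p(v))$ for all $v$. For $Y$-fusion nodes, no additional constraint is imposed by \cref{def:flow}, so the lifted flow is automatically valid there. For $X$-fusion nodes, I would inspect the construction of the lifted flow used in the proof of \cref{prop:Xflow-preserving}: the fusion vertex $v_f$ is $X$-measured and sits symmetrically between the two merged vertices, so the construction can be arranged to include $v_f$ only in $p(v_f)$ itself (as an $X$-self-correction allowed by Pauli flow), without placing it in any $\Odd(p(v))$ for $v \neq v_f$.

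\textbf{Main obstacle.} The forward direction is essentially a repackaging of the two propositions. The real work lies in the backward direction, where the mere existence of a Pauli flow on $\mathcal{M_F}$ does not on its own rule out $Z$-corrections on $X$-fusion nodes; one must use the specific form of the lifted flow coming out of the construction in \cref{prop:Xflow-preserving} (rather than a generic Pauli flow) to guarantee $f \notin \Odd(p(v))$ for all $v$. Nailing down this structural property of the lifted correction map is where the proof has to be careful, and where a black-box appeal to the two propositions is insufficient.
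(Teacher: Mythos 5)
Your overall route is the same as the paper's: both directions reduce to the two flow-preserving rewrites, and the only extra work is in the backward direction, where one must additionally verify the XY-flow condition $f \notin \Odd(p(v))$ for $X$-fusion nodes. You correctly identify this as the crux, but you do not actually close it --- you end by saying the structural property of the lifted correction map "has to be nailed down," which is precisely the step the theorem needs. As written, the backward direction is therefore incomplete: a generic Pauli flow on $\mathcal{M_F}$ obtained by reversing the rewrites may well place an $X$-fusion node in some odd neighbourhood, and "the fusion vertex sits symmetrically between the two merged vertices" is not by itself an argument.

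The missing step is a short explicit construction. Let $v$ be the merged vertex in $\mathcal{G_F}$ corresponding to an $X$-fusion $f$ with endpoints $a, b$ in $\mathcal{M_F}$, and let $g$ be a Pauli flow on $\mathcal{G_F}$. Whenever $v$ occurs in a correction set, replace it by the \emph{pair} $\{a, b\}$, i.e.\ set $p(u) = g(u) - \{v\} + \{a, b\}$. Every neighbour of $v$ in $\mathcal{G_F}$ is a neighbour of exactly one of $a$ or $b$ in $\mathcal{M_F}$, so the odd neighbourhoods of all pre-existing vertices are unchanged and the Pauli flow conditions for $u$ still hold; whereas the fusion node $f$ is adjacent to \emph{both} $a$ and $b$, hence sees an even number of correction-set members and satisfies $f \notin \Odd(p(u))$. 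Note also that your emphasis on keeping $v_f$ "only in $p(v_f)$ as an $X$-self-correction" is aimed at the wrong quantity: membership of an $X$-measured node in a correction set is already harmless by condition 1 of \cref{def:pauli-flow}; the only thing that must be controlled is membership in odd neighbourhoods (which triggers $Z$-corrections), and the $\{a,b\}$-substitution is what controls it.
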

\begin{proof}
  This follows from the two propositions above.
  For $X$-fusion we moreover need to show that when rewriting from $\mathcal{G_F}$ to $\mathcal{M_F}$ the newly introduced fusion node is not in the odd neighbourhood of some correction set.
  Using the notation of \cref{prop:Xflow-preserving},
  suppose that $v \in \Odd(g(u))$ for some node $u$ in $\mathcal{G_F}$ where $g$ is the Pauli flow on $\mathcal{G_F}$,
  then if $u$ is a neighbour of $v$, it is a neighbour of either $a$ or $b$ in $\mathcal{M_F}$ (and not a neighbour of both).
  Therefore, we can set $p(u) = g(u) - \{v\} + \{a, b\}$ as the correction function in $\mathcal{M_F}$ without changing the connectivity of
  $u$ to its correction set, and thus wihtout violating the Pauli flow conditions for $u$.
  Then we have $f \notin \Odd(p(u))$, as required.
\end{proof}

\subsection{Destructive vs non-destructive fusion networks}\label{subsec:destructive-networks}

In this section thus far, we treated fusion as a non-destructive measurement that may be applied multiple times on the same node of the resource graph state.
As discussed in \cref{subsec:fusion-network}, this may be achieved by emitting an additional photon for each node belonging to a fusion.
Equivalently, we may consider an \enquote{inflated} resource graph where each node is used in at most one fusion and such that $\lambda(u) = X$ whenever $u$ belongs to a fusion.
This inflated graph is obtained by unfusing a pair of $X$ measured nodes for each fusion, as in the following example:
\[
  \tikzfig{fnetwork-decomp}
\]
We now show that the resulting inflated fusion network is equivalent to the original one and that it has flow if and only if the original network has flow.
\begin{proposition}
  The following open graph rewrite preserves the existence of Pauli flow:
  \[
    \tikzfig{fnetwork-destructive-rewrite}
  \]
  where $\lambda(a), \lambda(b), \lambda(f)$ are arbitrary and $\lambda(u_i) = \lambda(v_i) = X$ for $i \in \{0, 1\}$.
  Moreover, no corrections are required on $u_i$ or $v_i$.
\end{proposition}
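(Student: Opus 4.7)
The plan is to prove the biconditional statement by explicit transfer of the Pauli flow in both directions. Since the rewrite only inserts two $X$-measured nodes on each edge incident to the fusion $f$, the essential insight is that an $X$-measured vertex with exactly two neighbours behaves as an identity in the Pauli flow calculus: its correction set can be chosen to contain only its \emph{far} neighbour, which forces it into its own odd neighbourhood (satisfying condition 7 of \cref{def:pauli-flow}) while keeping it outside every other correction set and odd neighbourhood. This is the mechanism that will also deliver the ``no corrections on $u_i, v_i$'' clause automatically.

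For the forward direction, assume the left-hand graph carries a Pauli flow $(p, <)$. I will define $(p', <')$ on the right-hand graph as follows. On vertices not in $\{a, b, f, u_0, u_1, v_0, v_1\}$, set $p'(w) = p(w)$. For the inserted chain on the $a$-side, put $p'(u_0) = \{u_1\}$ and $p'(u_1) = \{f\}$ (if $\lambda(f) \in \{X, XY, XZ, YZ\}$) or $p'(u_1) = p(a) \triangle \{a\}$ with an appropriate shift if the flow at $a$ originally reached $f$ through the direct edge; symmetrically on the $b$-side. Finally, on $a$ and $b$ themselves, keep $p'(a) = p(a)$ and $p'(b) = p(b)$, observing that the odd neighbourhood of any correction set in the original graph that passed through the direct edge to $f$ still reaches $f$ via the chain, because $u_0, u_1$ are not in any such correction set. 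The partial order $<'$ extends $<$ by interleaving $u_0, u_1$ between $a$ and $f$ (and $v_0, v_1$ between $b$ and $f$).

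For the reverse direction, given a Pauli flow $(p', <')$ on the inflated graph satisfying the ``no corrections on $u_i, v_i$'' hypothesis, I contract the chain. Concretely, since $u_i, v_i$ do not appear in any $p'(w)$ nor in any $\mathrm{Odd}(p'(w))$ for $w \notin \{u_i, v_i\}$, removing them from the graph and re-identifying the two endpoints of each chain preserves the defining equations of the flow. The resulting map $p(w) \coloneqq p'(w) \setminus \{u_0, u_1, v_0, v_1\}$ together with the induced partial order gives a Pauli flow on the destructive graph.

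The main obstacle will be the case analysis required to verify Pauli flow conditions 1--9 at the original vertices $a$ and $b$ after lifting, because their odd neighbourhoods acquire the new vertex $u_1$ (or $v_1$) in place of $f$. In particular, when $\lambda(f) = Y$ (so condition 3 is engaged and the flow treats $f$ symmetrically with respect to membership and odd-neighbourhood), and when $a$ or $b$ are measured in a Pauli basis themselves, one must check that the substitution does not break the conditions on odd-neighbourhood membership for correction sets that previously relied on the direct edge to $f$. This is where the chain length two (rather than one) is crucial: inserting a single $X$-measured node would flip the parity of the odd neighbourhood at $f$, whereas two nodes preserve it, ensuring that $f \in \mathrm{Odd}(p'(w))$ if and only if $f \in \mathrm{Odd}(p(w))$ for all $w$.
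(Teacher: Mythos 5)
Your strategy---explicitly transferring the flow rather than reducing to known flow-preserving rewrites---is legitimate in principle, but the concrete assignments you give violate the Pauli flow axioms, so there is a genuine gap. First, with $p'(u_1)=\set{f}$ and, symmetrically, $p'(v_1)=\set{f}$, both odd neighbourhoods equal $N(f)=\set{u_1,v_1}$; since $\lambda(u_1)=\lambda(v_1)=X\notin\set{\Ym,\Zm}$, condition~2 of \cref{def:pauli-flow} forces $u_1<v_1$ and $v_1<u_1$ simultaneously, contradicting strictness of the partial order. Second, the claim that the inserted vertices stay ``outside every other correction set and odd neighbourhood'' while keeping $p'(w)=p(w)$ on the old vertices is false: odd neighbourhoods are not transitive along a chain, so if $a\in p(w)$, then on the left $a$ contributes $f$ to $\Odd(g(w))$, but on the right it contributes $u_0$ instead---putting $u_0\in\Odd(p'(w))$, which is exactly the correction on $u_0$ you must exclude, while losing the intended parity at $f$. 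Dually, if $f\in p(w)$ was used to place $a,b$ in $\Odd(g(w))$, on the right it places $u_1,v_1$ there instead. Repairing this requires systematically augmenting correction sets with the new $X$-measured vertices (e.g.\ adding $u_0$ and $v_0$ when $f\in p(w)$, and $u_1$ when $a\in p(w)$) so as to reroute the parities back to $a$, $b$ and $f$; this is precisely the case analysis you defer to the final paragraph and never carry out. The reverse direction suffers the same defect: merely deleting $u_i,v_i$ from $p'(w)$ does not restore the correct odd neighbourhoods at $a$, $b$ and $f$.

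For comparison, the paper sidesteps this bookkeeping entirely: inserting a pair of $X$-measured nodes is the inverse of the $X$-fusion rewrite of \cref{prop:Xflow-preserving}, which preserves Pauli flow because it decomposes into copy, local complementation and pivot, rewrites already proven flow-preserving in the literature. The only new content is the ``no corrections'' clause, which the paper obtains for one inserted node from the argument in \cref{thm:flow-simplified-graph} and for the other by a one-line parity fix: if $v\in\Odd(p(w))$, replace $p(w)$ by its symmetric difference with $\set{u}$, which toggles $v$ out of the odd neighbourhood without disturbing the conditions at $w$ (and, since $u$ is $X$-measured, incurs no new ordering constraint from condition~1). If you wish to keep your direct approach, that symmetric-difference trick is also the tool you need to make the explicit construction go through.
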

\begin{proof}
  It is sufficient to show that the following rewrite preserves the existence of Pauli flow:
  \[
    \tikzfig{fnetwork-destructive-rewrite2}
  \]
  with $\lambda(u) = \lambda(v) = X$ and such that both $u$ and $v$ are not in the odd neighbourhood of some correction set.
  The fact that this rewrite preserves the existence of Pauli flow is precisely \cref{prop:Xflow-preserving}.
  And the fact that $u$ does not need to hold corrections is proved in \cref{thm:flow-simplified-graph}.
  It remains to show that $v$ does not need to hold corrections.
  Let $g$ be the Pauli flow on the left-hand side and $p$ the Pauli flow on the right-hand side.
  Suppose $a \in \Odd(g(w))$ for some node $w$.
  After the rewrite, on the RHS, we either have that $a \in \Odd(p(w))$ or $a \notin \Odd(p(w))$.
  In the first case, we must have an even number of neighbours of $v$ which belong to $g(w)$, so $v \notin \Odd(p(w))$ and the result follows.
  In the second case, there is an odd number of neighbours of $v$ which is in $p(w)$, but we can then define a new correction set $p'(w) = p(w) + \set{u}$
  to ensure that $v \notin \Odd(p'(w))$.
  Since $u$ is not a neighbour of $w$, this does not change the connectivity of $w$ to its correction set,
  and thus gives a valid Pauli flow for the RHS\@.
  Therefore, no corrections need to be applied on $v$ and the result follows.
\end{proof}

As a consequence, any fusion network in our sense can be implemented by an inflated fusion network with destructive fusions.
Note that, in this formulation, we are correcting additional errors induced by the $X$ measured nodes, which would not arise in a physical implementation.
It is thus possible that a more refined notion of flow exists for destructive fusion networks.

\section{Optical protocols}\label{sec:protocols}

Currently available optical setups are built from linear optical circuits, photon sources, optical routers, delay lines, and photon detectors.
An \emph{optical protocol} is a sequence of instructions for these setups to perform a given computation.
In this section, we introduce a \emph{dataflow programming language} for optical protocols.
This language has a category-theoretic interpretation in terms of \emph{monoidal streams}~\cite{dilavoreMonoidalStreamsDataflow2022},
which define the unrolling of the time evolution of an experimental setup.
Technically, we instantiate the $\bf{Stream}$ construction~\cite{dilavoreMonoidalStreamsDataflow2022} on
the monoidal category $\bf{C}$ obtained by combining $\bf{LO}$ circuits and $\bf{ZX}$ diagrams using the triangle node~\eqref{eq:triangle}.
The graphical language that we obtain is a formal language which allows for both graphical rewriting and recursive reasoning. 
 
\subsection{Stream processes}

We define a stream process recursively by what it does at time step zero, together with a stream describing what it does at future time steps.
We use letters $X, Y$ to denote infinite sequences $X_0, X_1, X_2, \dots$ of objects in $\bf{C}$, 
and use $X^\plus$ to denote the sequence obtained from $X$ by removing the head and by $\partial X$ the sequence $(I, X_0, X_1, \dots)$
obtained by adding the monoidal unit $I$ to $X$ as the head.

\begin{definition}[Stream]\label{def-stream}
  A stream process $\bf{f}: X \to Y$ in $\bf{Stream}(\bf{C})$ is a stream of types $M$ (called \enquote{memory}),
  a process $f_0 : M_0 \otimes X_0 \to M_1 \otimes Y_0$ in $\bf{C}$ (called \enquote{now}) and a stream $\bf{f}^\plus : X^\plus \to Y^\plus$ (called \enquote{later}).
  \[
    \tikzfig{figures/stream-def}
  \]
  The wire labelled $M_0$ carries the initial state of the memory, $X_0$ and $Y_0$ are the input and output at time-step $0$,
  and $M_1$ is the memory created at time-step $0$ which serves as the initial memory for the stream $\bf{f}^\plus$.
\end{definition}

Diagrams in $\bf{Stream}(\bf{C})$ contain thin wires corresponding to a specific time-step and thick wires for streams.

\begin{remark}
  Note that the definition above requires the memory type to be specified for each time step (including time step zero). 
  These are called \emph{intensional} streams. It is sometimes convenient to quotient the set of streams 
  by \enquote{sliding} along the memory, giving extensional equality. Further quotienting by coinduction gives the correct notion of 
  \emph{observational} equality: two stream processes are observationally equal when we cannot distinguish them from their input/output behaviour. 
  See \cite{dilavoreMonoidalStreamsDataflow2022} for details. The intentional definition allows us to easily initialise stream processes but we will be interested in observational equality 
  when proving results about these processes.
\end{remark}

The simplest stream to initialize is the \emph{constant stream}.
Given a process $f: x \to y$ in $\bf{C}$ we obtain a stream $\bf{f}: X \to Y$ where $X = (x, x, \dots)$ and similarly for $Y$, 
with empty memory $M = I$, with $f_0 = f$ and $\bf{f}^\plus = \bf{f}$.
We denote the constant stream induced by a diagram $f$ simply by thickening its wires.
For example, the following constant stream defines the swap between two constant objects $X = (x, x, \dots)$ and $Y=(y, y, \dots)$ 
in $\bf{Stream}(\bf{C})$:
\[
  \tikzfig{figures/stream-constant}
\]
The equation above is read as a recursive definition: the swap stream is the \enquote{swap} now and itself later.
We may define the more general class of \emph{memoryless streams} to be those such that the memory type is the unit of the tensor $M = I$. 
These correspond to sequences $\set{f_t: x_t \to y_t}_{t \in \mathbb{N}}$ in the base category.
For example, we can now define the swap operation between any two objects $X= (X_0, X_1, \dots)$ and $Y = (Y_0, Y_1, \dots)$
as the sequence $\set{\mathtt{swap}_t: X_t \otimes Y_t \to Y_t \otimes X_t}_{t \in \mathbb{N}}$.

In order to link different time steps and model \emph{feedback} of information, we need streams with a memory.
We can obtain these from memoryless streams by taking the feedback $\tt{fbk}_S(\bf{f}): X \to Y$ defined for any stream 
$\bf{f}: \partial S \otimes X \to S \otimes Y$. This corresponds to adding $S$ to the memory of the stream by feeding back its output values to the inputs,
as shown below.
\[
  \scalebox{0.85}{\tikzfig{figures/stream-feedback}}
\]
We can use this to model \emph{delay}.
For example, the delay of length $1$ is the feedback of the swap $\partial X \otimes X \to X \otimes \partial X$:
\[
  \scalebox{0.85}{\tikzfig{figures/stream-delay}}
\]
The delay of length $d$ is given by the following composition:
\[
  \tikzfig{figures/stream-delay-d}
\]
Streams represent infinite processes but it is often useful to consider their action for a fixed number of time-steps.
This is done by \emph{unrolling} the stream.

\begin{definition}[Unrolling]\label{def:unrolling}
  Given a stream $\bf{f} : X \to Y$ with memory $M$, the unrolling for $n$ time-steps of $\bf{f}$ is a process in $\bf{C}$ of the form:
  \[
    \mathtt{unroll}_n(\bf{f}) \colon M_0 \otimes X_0 \otimes X_1 \otimes \dots X_{n} \to Y_0 \otimes Y_1 \otimes \dots Y_{n} \otimes M_{n + 1}
  \]
  defined by induction as follows:
  \begin{align*}
    \mathtt{unroll}_0(\mathbf{f}) = \; & \mathtt{swap}_{M_1, X_0} \circ f_0 \\
    \mathtt{unroll}_n(\bf{f}) = \; & (\mathtt{id}_{X_0} \otimes \mathtt{unroll}_{n - 1}(\bf{f}^\plus) ) \circ (\mathtt{unroll}_0(\mathbf{f}) \otimes \mathtt{id}_{Z})
  \end{align*}
  where $Z = X_1 \otimes X_2 \otimes \dots X_{n - 1}$.
\end{definition}

Below are the three first unrollings of a generic stream $\bf{f} : X \to Y$ with memory $M$.

\begin{align*}
  \mathtt{unroll}_0(\bf{f}) \quad&= \quad \tikzfig{stream-unroll}\\
  \mathtt{unroll}_1(\bf{f}) \quad&= \quad \tikzfig{stream-unroll1}\\
  \mathtt{unroll}_2(\bf{f}) \quad&= \quad \tikzfig{stream-unroll2}
\end{align*}

In order to reason diagrammatically about the unrolling it is useful to express the inductive definition above in terms of string diagrams as follows.
\[
  \tikzfig{unrolling}
\]
Note that this graphical definition avoids the bureaucracy of ordering input and output wires. 
This notation, and the one used in \cref{def-stream}, could be formalised in terms of \enquote{open diagrams}~\cite{romanOpenDiagramsCoend2020}.

\subsection{Streams of linear optics}

The language of stream processes introduced above can be built over any base symmetric monoidal category $\bf{C}$.
Taking linear optical circuits as the base, we are further able to discard a given mode and prepare the empty state.
This gives rise to a useful class of memoryless processes in $\bf{Stream}(\bf{LO})$, called \emph{routers}. 
For example, the binary oscillating router is defined by:
\[
  \tikzfig{figures/router-binary}
\]
And a similar 2 to 1 router is obtained using the discard map:
\[
  \tikzfig{figures/router-2-1}
\]
We assume that we have access to routers that can be controlled by a stream of classical variables $x_t$.
We may construct arbitrary routers from the binary router.
For example, the following setup implements an identity if $x_t = 0$ and a swap if $x_t = 1$.
\[
  \tikzfig{figures/router-general}
\]
By composing delays and routers, we can model some important components in photonic computing.
For example, the $d$-ary multiplexer going from time encoding to spatial encoding, and its inverse, the demultiplexer, are defined as follows:
\[
  \tikzfig{figures/stream-mulitplexer} \qquad \qquad \qquad \tikzfig{figures/stream-demultiplexer}
\]
In fact, it is possible to use routers and delays to reorder the time ordering of any inputs arbitrarily:
\begin{lemma}\label{lemma:permutations}
  The following setup implements any permutation $\sigma$ of length $d$ in time encoding:
  \[
    \tikzfig{figures/router-permutation}
  \]
\end{lemma}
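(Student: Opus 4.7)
The plan is to verify that the depicted setup realizes the composition $\tt{mux}_d \circ P_\sigma \circ \tt{demux}_d$, where $P_\sigma$ is the spatial permutation on $d$ parallel wires obtained from the symmetric monoidal structure of $\bf{C}$. Since the demultiplexer and multiplexer have just been constructed from routers and delays, applying a spatial crossing of wires between them is itself built from routers and delays, giving a setup of the form shown.

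The correctness argument proceeds as follows. First, I would use the definition of $\tt{demux}_d$ to show that, after $d$ time steps, the incoming time-encoded stream $(x_0, x_1, \dots, x_{d-1})$ appears on $d$ parallel spatial output wires with wire $i$ carrying $x_i$. This is where the specific construction of $\tt{demux}_d$ as a cascade of binary routers alternating with delays is used --- the controlling classical variables are chosen so that the $i$-th time slot is routed to the $i$-th spatial wire. Next, $P_\sigma$ permutes the contents of these wires so that wire $i$ carries $x_{\sigma^{-1}(i)}$. Finally, $\tt{mux}_d$, being the mirror image of $\tt{demux}_d$, emits the contents of the $d$ spatial wires in consecutive time slots, producing $(x_{\sigma^{-1}(0)}, \dots, x_{\sigma^{-1}(d-1)})$ in time encoding.

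Formally, I would unroll the setup for $d$ time steps using \cref{def:unrolling}, reducing the stream to a concrete $\bf{LO}$ morphism. The resulting diagram can then be rewritten by pushing the spatial permutation past the common delay lines, allowing direct comparison of the input and output indices. A compact way to conclude is to invoke the identity $\tt{mux}_d \circ \tt{demux}_d = \tt{id}$ modulo a delay of $d$ (which follows from the recursive definitions), together with the fact that inserting $P_\sigma$ between them simply relabels which spatial wire carries which time slot.

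The main obstacle is bookkeeping of latency and memory: the demultiplexer must buffer all $d$ inputs before outputting them in parallel, so the equality only holds up to an initial transient and an overall delay of $d$ time steps. The cleanest way to handle this, rather than unrolling explicitly, is to appeal to observational equality of streams as described in the remark following \cref{def-stream} and argue coinductively --- the setup and the target permutation agree on all outputs after the initial transient, which suffices for the observational semantics.
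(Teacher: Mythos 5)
Your route is genuinely different from the paper's. The paper's setup is a single fixed piece of hardware --- one controlled router fanning the time-$t$ input into a bank of $2d$ delay lines and a second controlled router reading them out --- and its entire proof is the choice of control sequences $x_t = t + \sigma(t) \bmod 2d$ and $y_t = d + t \bmod 2d$. In other words, the permutation lives entirely in the classical control data, the wiring is independent of $\sigma$, and correctness is immediate from chasing where the photon arriving at time $t$ ends up. You instead propose the factorisation $\mathtt{mux}_d \circ P_\sigma \circ \mathtt{demux}_d$ with a spatial permutation sandwiched between the (de)multiplexers, and argue correctness compositionally and coinductively. That is a legitimate alternative construction, and it buys modularity (you reuse the already-defined mux/demux and the symmetric monoidal structure), but it loses the feature the paper actually needs downstream in \cref{prop:pattern-implementability}: there the hardware is fixed once and only the control parameters $\sigma, \rho$ vary per pattern, which your construction only achieves if $P_\sigma$ is itself realised by a further layer of controlled switches rather than by rewiring.

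Two points in your argument need repair before it closes. First, the demultiplexer does not deliver all $d$ values on the spatial wires simultaneously --- wire $i$ receives its value at time step $i$ --- so the step ``after $d$ time steps the stream appears on $d$ parallel wires'' silently requires extra per-wire delays to synchronise before $P_\sigma$ is applied, and dually after it; without this the identity $\mathtt{mux}_d \circ \mathtt{demux}_d = \mathtt{id}$ ``modulo a delay of $d$'' does not hold as stated. Second, observational equality of streams, as defined by coinduction on input/output behaviour, does not quotient by a global time shift: a stream and its $d$-step delay are observably different. The paper sidesteps this by building the latency $d$ explicitly into the readout control ($y_t = d + t$), so the statement being proved is ``implements $\sigma$ with latency $d$'' on the nose, not an equality up to transient. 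You should either state the lemma with the explicit delay, as the paper implicitly does, or make precise what equivalence relation ``up to an initial transient'' denotes.
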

\begin{proof}
  Given a permutation $\sigma$, we set $x_t = t + \sigma(t) \mod 2d$ and $y_t = d + t \mod 2d$.
\end{proof}
There are more efficient ways of routing permutations. 
For example, the following setup~\cite{bartolucciSwitchNetworksPhotonic2021}
implements arbitrary delay of size $2^d$ using $d$ binary routers.
\[
  \tikzfig{figures/router-bitstring-permutation}
\]
where the instructions for routers are obtained from the binary encoding of the delay.

We distinguish between routers and \emph{switches}.
The control parameters of a router for every time step are set before executing the program.
With switches, the routing can be actively controlled by a measurement outcome or a classical variable computed at run-time.
Such actively controlled switches are denoted as routers but with the control parameter drawn inside the box.
For example, the switch with two inputs and two outputs is defined by the constant stream induced by the following controlled process:
\[
  \tikzfig{switch-def}
\]

We are now ready to introduce the basic optical components required to perform MBQC: a correction module and a measurement module.
Any sequence of Pauli correction can be implemented using switches with control parameters $x_t, z_t \in \set{0, 1}$, as follows:
\[
  \tikzfig{RUS/CorrectionModule}
\]
where $y_t = x_t \oplus z_t$.
Then, for the measurement module, we need to be able to measure in the $X$, $Y$, or $Z$ bases at different time-steps, typically fixed before running the experiment.
Given a choice of measurement plane $\lambda_t \in \set{X, Y, Z}$ and angle $\alpha_t \in [0, 2\pi)$ for each time-step $t$,
the measurement module $M_{\lambda, \alpha}$ is given by the following setup:
\[
  \tikzfig{RUS/MeasurementModule}
\]

\subsection{Streams of ZX diagrams}

In \cref{subsubsec:resource-states}, we described two methods for resource state generation,
these have different representations as streams of ZX diagrams.
Photonic resource state generators~\cite{bartolucciCreationEntangledPhotonic2021} typically do not have memory 
and provide constant-size resource states at every time-step.
They can thus be modeled by the constant stream induced by their underlying graph.
For example, emitters of the $4$-star and square resource states are defined by the following streams:
\[
  \scalebox{0.7}{\tikzfig{figures/stream-star}} \qquad \qquad \qquad \scalebox{0.7}{\tikzfig{figures/stream-square}}
\]
where we use the triangle generator to indicate that the qubits are encoded as photons in dual-rail.

Matter-based methods for resource state generation have a particularly simple representation in our language.
Spin-based emitters, such as quantum dots, produce a stream of photons entangled with the atom.
Generally, they may be considered as a \enquote{photonic machine gun}~\cite{lindnerProposalPulsedOnDemand2009}, 
and represented accordingly:
\[
  \tikzfig{figures/stream-emitter}
\]
The atom is the memory of the stream, entangled via a Z spider to the dual-rail states of the emitted photons.
At each time-step $t$ we may perform a single qubit unitary $x_t$ on the atom.
Special cases of interest are GHZ state generators when the white box is the identity,
linear clusters when it is a Hadamard gate and variable GHZ-linear clusters when it can be programmed
arbitrarily, as shown in \cref{fig:fbqc}.

Combining streams of linear optics and ZX diagrams, we can reason about optical protocols used in photonic quantum computing.
For example, the measurement and correction modules defined above with linear optical components, give rise to the expected streams of ZX diagrams.
\begin{lemma}\label{lemma-correct-measure}
  \[
    \tikzfig{RUS/MBQCModule}
  \]
\end{lemma}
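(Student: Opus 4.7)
The plan is to reduce this statement to a pointwise verification at each time step, exploiting the fact that both the measurement module $M_{\lambda, \alpha}$ and the correction module are built from constant (or memoryless) streams induced by switches controlled by streams of classical parameters $\lambda_t, \alpha_t, x_t, z_t$. Since no quantum memory is shared between time steps in either module, observational equality of the two streams reduces to verifying the equation in the base category $\bf{C}$ for each fixed choice of control parameters.

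First, I would unroll the definition of the measurement module $M_{\lambda, \alpha}$ pointwise. At each time step, the switch controlled by $\lambda_t$ selects one of three linear optical circuits for measuring in the $X$, $Y$, or $Z$ basis (with phase $\alpha_t$), followed by photon detection. Applying the dual-rail encoding equations for $X_\alpha$ and $Z_\alpha$ measurements from \cref{sec:bgd-dual-rail} (and the analogous $Y_\alpha$ case obtained by conjugation with $S$), each branch of the switch is rewritten into the corresponding ZX measurement effect $\bra{\pm_{\lambda_t, \alpha_t}}$ with outcome bit $\cvar{k_t}$ emitted by the detector.

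Second, I would perform the same pointwise unrolling on the correction module. Here the switches controlled by $x_t$ and $z_t$ dispatch between identity, beam splitters, and phase shifters realizing Pauli $X$ and $Z$ on the dual-rail line; these are exactly the two cases recorded in the dual-rail Hadamard and $Z$-spider equations of \cref{sec:bgd-dual-rail}, so each branch rewrites to a $\pi$-phased X or Z spider applied to the dual-rail qubit after pushing triangles through. Composing the two pointwise rewrites gives the measure-then-correct ZX diagram at each time step, with the classical outcome $\cvar{k_t}$ feeding into the correction controls as in the body of the lemma.

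Finally, I would package these pointwise equalities back up into a stream equation. Because the feedforward from measurement outcome to correction is instantaneous (the outcome $\cvar{k_t}$ at time $t$ controls the correction at the same time $t$, with no delay involved), the resulting diagram is itself a memoryless stream, and observational equality of memoryless streams is precisely a sequence of base-category equalities. The main obstacle I anticipate is not conceptual but bookkeeping: one must track the classical wires carefully through the switch semantics to confirm that the outcome variable produced by the measurement module is the same bit consumed by the correction module, and that the combinatorics of $\lambda_t$-dependent branches correctly assemble into a single ZX measurement effect parametrised by $\lambda_t$ rather than three separate branches. Once this is checked, no global or coinductive argument is needed since the streams carry no memory.
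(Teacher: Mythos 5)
Your proposal is correct and follows essentially the same route as the paper's proof: reduce to a per-time-step equality in the base category because both modules are memoryless streams, then enumerate the branches determined by the binary correction controls and the ternary choice of $\lambda_t$, and close each case with the dual-rail encoding equations of \cref{sec:bgd-dual-rail}. The extra care you flag about tracking the classical outcome wire and assembling the $\lambda_t$-branches into a single parametrised measurement effect is exactly the bookkeeping the paper's terse proof leaves implicit.
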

\begin{proof}
  Since both the correction and measurement modules are memoryless streams,
  it is sufficient to prove that the equation above holds for any given time step $t$.
  The proof is then obtained by enumerating the different cases for the binary choices of $x_t$ and $y_t$ and the ternary choice of $\lambda_t$.
  In each case, the routers and switches define a specific path and the result follows from the equations of \cref{sec:bgd-dual-rail}.
\end{proof}
\begin{remark}
  Note that routers and switches with more than one output do not exist in $\bf{Stream}(\bf{ZX})$.
  This is because the qubit space $\mathbb{C}^2$ does not allow for the empty state.
\end{remark}

The language of streams allows for different forms of graphical recursive reasoning.
To illustrate this, we now consider two simple properties of the photonic machine gun. 
The first holds for the infinite process and is proved by \emph{coinduction}~\cite{kozenPracticalCoinduction2017}. 
The second holds for any finite unrolling of the stream, and we prove it by \emph{induction} instead.
\begin{lemma}\label{lemma:coinduction}
  \[
    \tikzfig{figures/stream-emitter-phase}
  \]
\end{lemma}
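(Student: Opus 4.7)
The plan is to prove this by \emph{coinduction}, as flagged immediately before the lemma. Recall that two stream processes in $\bf{Stream}(\bf{C})$ are (observationally) equal precisely when their \enquote{now} components agree as processes in $\bf{C}$ and their \enquote{later} components are again (observationally) equal as streams. So the proof splits into two parts: a one-step ZX calculation on the \enquote{now} and a recursive closing on the \enquote{later}.

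First, I would unfold each side one time step using Definition~\ref{def-stream} and the explicit definition of the photonic machine gun. The emitter's memory wire carries the atom, its \enquote{now} is a Z-spider connecting the incoming atom, the outgoing atom (after the user-supplied unitary $x_t$), and the freshly emitted dual-rail photon (produced via the triangle in~\eqref{eq:triangle}), and its \enquote{later} is the same emitter stream. Unfolding both sides of the claim therefore yields two diagrams in $\bf{C}$ that differ only in where the phase sits relative to this Z-spider.

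Second, I would equate the two \enquote{now} diagrams using spider fusion. A Z-phase absorbed into the central Z-spider can equivalently be emitted on any of its legs; in particular, a phase on the atom input can be pushed onto the dual-rail output leg (or vice versa), which is exactly the relocation the statement asserts. This step is a single application of the ZX spider rule, together with the dual-rail translations from \cref{sec:bgd-dual-rail} needed to move the phase past the triangle cleanly. Third, I would observe that after this one-step rewrite the \enquote{later} components on both sides are instances of the same equation being proved, so the coinductive hypothesis closes the argument. Concretely, I package the statement as a bisimulation: the relation that pairs stream processes differing only by the location of a single Z-phase across the Z-spider is a bisimulation, which by the coinduction principle for $\bf{Stream}(\bf{C})$ yields observational equality.

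The main obstacle will be bookkeeping around the memory wire rather than the ZX step itself. The intensional definition of streams distinguishes processes that differ in how the atom is threaded through memory at each step, so I would either invoke the \enquote{sliding} equivalence mentioned in the remark after Definition~\ref{def-stream} (working up to observational equality) or track the memory types explicitly and check that both unfoldings produce matching $M_1$. Once that is handled, the ZX rewrite is a single spider fusion and the coinductive step is a direct self-reference, so the proof collapses to one rewrite plus one appeal to coinduction.
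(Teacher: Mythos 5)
Your proposal matches the paper's proof: the paper likewise unfolds one time step via \cref{def-stream}, performs the spider-fusion rewrite on the \enquote{now} component to relocate the Z-phase across the emitter's central Z-spider, and then closes by assuming the statement for the \enquote{later} part of the stream. Your additional remarks on bisimulation and memory bookkeeping are just a more explicit account of the same coinductive argument the paper uses.
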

\begin{proof}
  We prove this by coinduction, first using \cref{def-stream}, then by assuming the hypothesis in the future of the stream.
  \begin{align*}
      \tikzfig{figures/stream-emitter-proof-0}\quad
      &\tikzfig{figures/stream-emitter-proof}\\
      &\tikzfig{figures/stream-emitter-proof-2}
  \end{align*}
\end{proof}

\begin{lemma}
  \[
    \tikzfig{lemma-measure-emitter}
  \]
\end{lemma}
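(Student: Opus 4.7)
The plan is to prove this lemma by induction on the unrolling depth $n$, as suggested by the paragraph preceding the statement, which contrasts it with the coinductive approach used for Lemma 4.8. The base case ($n = 0$) would be handled by unrolling the emitter stream exactly once using Definition 4.2, which produces a single emission entangled via a Z-spider to the spin memory. At this point the output photon has a dual-rail encoding, so I can replace the linear-optical measurement module acting on it with its ZX-diagrammatic equivalent via Lemma 4.6 (\textbf{lemma-correct-measure}). A short local ZX rewrite then matches the RHS restricted to one time step.

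For the inductive step, I would expand $\mathtt{unroll}_{n+1}(\mathbf{f})$ using its recursive definition, which splits the computation into one application of $\mathtt{unroll}_0(\mathbf{f})$ on the first photon and an application of $\mathtt{unroll}_n(\mathbf{f}^{+})$ on the remaining $n$ photons, with the spin memory threaded between them. Each of the $n+1$ output wires is fed into its own measurement module. Applying Lemma 4.6 to the measurement module acting on the first photon replaces that linear-optical block by the dual-rail-encoded ZX measurement, and then the inductive hypothesis applied to the remaining $n$ time steps rewrites the rest of the stream. The final step is to glue the two pieces via ZX spider fusion along the shared Z-spider produced by the emitter, so the whole diagram collapses to the stated RHS.

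The main obstacle I expect is handling the bookkeeping between the intensional/observational semantics of streams and the concrete ZX rewrites: the unrolling introduces a fresh memory wire at each step, and one must check that the inductive step does not rely on the identity of intermediate memory objects but only on their types matching at the sliding boundary (in the sense of the ``sliding'' quotient mentioned after Definition 4.2). A secondary subtlety, depending on the exact formulation of the RHS, is the propagation of classical outcomes and any controlled single-qubit unitary $x_t$ applied to the spin at each step: these must be tracked through each inductive step, using the fact that Z-rotations on the spin commute with the emission as established in Lemma 4.8, so that accumulated spin phases can be absorbed into the global structure of the unrolled ZX diagram.
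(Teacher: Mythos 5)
Your plan for the first part of the lemma matches the paper's: the paper likewise uses \cref{lemma-correct-measure} together with spider fusion to replace the optical measurement/correction modules by ZX measurement effects, and then proves the collapse of the unrolled emitter by induction on $n$ via \cref{def:unrolling}, exactly as you describe (the paper applies \cref{lemma-correct-measure} once up front as a separate ``first equality'' rather than re-invoking it inside each inductive step, but that is only a difference of bookkeeping). Your worry about intensional versus observational equality is a non-issue here: the induction is carried out entirely on the unrolled diagrams, which are ordinary morphisms in the base category, so no sliding or coinductive quotient is ever needed.

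The genuine gap is that the lemma is a chain of equalities and your proposal stops one equality short. After the induction one is left with a ZX diagram still carrying the $n$ classical measurement outcomes $a_t$ as $a_t\pi$ phases accumulated on the spin spider. The final equality of the statement only holds \emph{after discarding these outcomes}: one must observe that the $2^n$ possible outcome strings $(a_0,\dots,a_{n-1})$, summed as a mixture in the CP interpretation, jointly induce a single uniformly random bitflip $c_n\pi$, which is what licenses the simplified right-hand side. This is a probabilistic coarse-graining argument, not a phase-commutation argument, so your plan to handle the classical data by ``absorbing spin phases'' via \cref{lemma:coinduction} does not supply it --- that lemma pushes a fixed controlled phase through the emitter, whereas what is needed here is the identification of a sum over outcome branches with a single random Pauli error. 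Without this step the proof establishes only the middle expression of the lemma, not its final form.
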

\begin{proof}
  The first equality follows from \cref{lemma-correct-measure} and the spider fusion rule as shown below.
  \[
    \scalebox{0.8}{\tikzfig{lemma-measure-proof-1}}
  \]
  We prove the second equality by induction. The statement for $n=0$ is easy to show. Then using \cref{def:unrolling}, we have
  \[
    \scalebox{0.8}{\tikzfig{lemma-measure-proof-2}}
  \]
  Finally, the last equality holds after discarding the measurement outcomes $a_t$. Indeed there are $2^n$ possible measurement outcomes for the list $a_t$, 
  but they together induce one random bitflip $c_n \pi$ after $n$ time-steps. The statement then follows from:
  \[ 
    \tikzfig{lemma-measure-proof-3}
  \]
\end{proof}

\section{Universality in linear optics}\label{sec:universality}

Any claim of universality in linear optics has to deal with the probabilistic nature of linear optical entanglement.
In the case of fusion measurements, the probability of successfully entangling the qubits is only $\frac{1}{2}$.
The language of streams allows us to reason about fusion as an iterated stochastic process. 
We use it to prove correctness of repeat-until-success protocols that can be used to boost the probability of success of any fusion with green failure.
Finally, we show universality of a simple FBQC architecture based on a single emitter.

\subsection{Fusion as a probabilistic process}\label{subsec:fusion-prob}

Let us consider the circuit of a non-destructive fusion measurement with green failure, parametrised by three phases $\theta_1$, $\theta_2$ and $\theta_3$.
\begin{equation}\label{fusion-green}
  \tikzfig{fusion-green}
\end{equation}
Recall from \cref{prop:type-I}, that the behaviour of Type I fusion measurements on dual-rail encoded qubits can be described by a sum of ZX diagrams.
A similar equation holds for the circuit defined above.
\begin{equation}\label{fusion-green-decomp}
  \tikzfig{fusion-prob-1}
\end{equation}
where $\cvar s = \cvar a \oplus \cvar b$ and $\cvar k = \cvar s \cvar b + \neg \cvar s (1 - \frac{\cvar a + \cvar b}{2})$.
The two diagrams above represent the action of fusion in case of success and failure, respectively.
The \emph{probability of success} for an input state $\Psi$ is obtained by taking the \emph{trace} of the success diagram, 
and discarding the classical output $\cvar{k}$, which corresponds to summing over its possible values.
\begin{equation}
  \Pr(\cvar{s}=1\,\vert\Psi) = \; \tikzfig{fusion-prob-2} \; = \; \tikzfig{fusion-prob-2-1} \; = \; \tikzfig{fusion-prob-2-2}
\end{equation}
Note that the probability will usually depend on the input state $\Psi$.
For example, we may engineer input states for which the fusion \enquote{always succeeds}.
\[
  \tikzfig{fusion-prob-fail}
\]
The calculation above uses dotted wires to represent ZX diagrams in the standard (rather than the CP) interpretation.
Replacing the first input with a green $\theta_1 + \pi$ spider we obtain an example where the fusion \enquote{always fails}.
If the input state is the completely mixed state on two qubits, the probability is found to be $\frac{1}{2}$ by the following derivation.
\[
  \tikzfig{fusion-prob-fully-mixed}
\]

By the purification theorem, a general mixed state $\Psi$ can be expressed in terms of a pure state $\psi$ on a larger space:
\[
  \tikzfig{purification}
\]
The dependence of the probability of success on the input can be avoided if we assume that the inputs are unmeasured qubits in a graph state.
Then the state $\psi$ has the following form:
\[
  \tikzfig{fusion-prob-3}
\]
where $\ket{G}$ is a graph state.
\begin{proposition}\label{prop:green2qhalf}
  When the inputs are unmeasured qubits in a graph state $\ket{G}$, 
  the success probability of any fusion with green failure is $\frac{1}{2}$.
\end{proposition}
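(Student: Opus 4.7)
The plan is to compute the success probability directly by combining the Kraus decomposition of the green-failure fusion with the well-known fact that the reduced density matrix of any subset of qubits in a graph state has a uniform diagonal in the $Z$-basis.

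First, I would apply equation~\eqref{fusion-green-decomp} to rewrite
\[
  \Pr(\cvar s = 1 \mid \ket{G}) \;=\; \sum_k \mathrm{tr}\!\left(M_{s,k}^\dagger M_{s,k}\,\rho_{12}\right),
\]
where $\rho_{12}$ is the reduced state on the two fusion-input qubits, obtained by tracing out the remainder of the graph state in the purification picture above the statement. The success branch in~\eqref{fusion-green-decomp} is a $Z$-spider decorated only by the input phases $\theta_1,\theta_2$ and a Pauli error $\cvar k \pi$ on the output; since rotations and Pauli errors are unitary, $M_{s,k}^\dagger M_{s,k}$ is independent of $k$, $\theta_1$, $\theta_2$ and $\theta_3$. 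Summing over $k$ and accounting for the $\star$ scalars collapses the success POVM element to the $Z$-parity projector $\Pi_{00}+\Pi_{11}$ on the two fusion inputs. This step is essentially the same diagrammatic manipulation that was carried out in the excerpt to evaluate the probability on the two examples (maximally mixed input and eigenstates of the parity), now packaged as a single identity.

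Second, I would show that $\mathrm{tr}\!\left((\Pi_{00}+\Pi_{11})\rho_{12}\right)=\tfrac12$ for an arbitrary graph state. The cleanest route is purely graphical inside the ZX calculus: attaching a $Z$-basis effect $\bra{a}$ to any qubit of a graph state yields, up to a scalar of $\tfrac{1}{\sqrt 2}$, another graph state (this is the standard ``removing a $Z$-measured vertex'' rewrite, compatible with \cref{example:graph-sate}). Iterating this rewrite on all qubits except the two inputs of the fusion lets me evaluate each diagonal entry $\langle ab\vert\rho_{12}\vert ab\rangle$ independently; all of them come out equal to $1/4$, so their sum over $(00)$ and $(11)$ is $1/2$. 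Equivalently, but less graphically, one can invoke the global fact that $\ket{G}=\frac{1}{\sqrt{2^n}}\sum_{x}(-1)^{q(x)}\ket{x}$ for a quadratic form $q$ depending on $G$, so the marginal distribution of the $Z$-basis on any subset of qubits is uniform.

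Combining the two steps gives $\Pr(\cvar s = 1 \mid \ket{G}) = 1/2$ regardless of $G$ or of the phases $\theta_1,\theta_2,\theta_3$. The part I expect to be slightly delicate is scalar bookkeeping: the $\star$ factors in the fusion decomposition, the $\tfrac{1}{\sqrt 2}$ from each $Z$-measurement rewrite on the graph state, and the squaring from the CP interpretation must all be tallied so that the answer is exactly $1/2$ rather than only a constant independent of $G$. A complementary sanity check, already present in the excerpt, is the maximally-mixed input computation; by purification this is in fact the universal case once the rest of the graph is traced out, which provides an alternative phrasing of step two.
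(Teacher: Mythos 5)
There is a genuine gap, and it sits in your first step. The success POVM element of a fusion with \emph{green} failure is not the $Z$-parity projector $\Pi_{00}+\Pi_{11}$. Green failure forces the local unitaries $U_1,U_2$ preceding the Type~I core to contain a Hadamard-like component (cf.\ \cref{prop:green-failure} and \cref{eq:GreenU1U2}): that is exactly what turns the failure branch from a $Z$-basis projector into an equatorial one. Consequently the summed success effect is $(U_1^\dagger\otimes U_2^\dagger)(\Pi_{00}+\Pi_{11})(U_1\otimes U_2)$, a parity projector in rotated \emph{equatorial} product bases depending on $\theta_1,\theta_2$ --- not a $Z$-diagonal operator. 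The paper's own examples just above the proposition refute your identification: it exhibits product inputs built from green (equatorial) spiders for which the fusion ``always succeeds'' or ``always fails''; every equatorial product state has uniform $Z$-basis marginals, so if the success POVM were $\Pi_{00}+\Pi_{11}$ those probabilities would all be exactly $\tfrac12$. This breaks your second step as well: knowing only that the diagonal entries $\langle ab|\rho_{12}|ab\rangle$ equal $\tfrac14$ does not determine $\operatorname{tr}(P\rho_{12})$ for a non-$Z$-diagonal rank-2 projector $P$ (e.g.\ $\rho_{12}=\ket{++}\bra{++}$ has uniform $Z$-diagonal but gives probability $1$ against the $X$-parity projector).

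The argument can be repaired, and your closing remark actually points at the repair: one needs the stronger statement that $\rho_{12}$ is \emph{exactly} the maximally mixed state $I/4$, after which \emph{any} rank-2 projector yields $\tfrac12$ and the dependence on $\theta_1,\theta_2$ disappears. This is where the hypothesis ``unmeasured qubits in a graph state'' does real work and must be invoked explicitly: because the fusion is non-destructive, each fused vertex is a $Z$-spider retaining at least one further leg (to its neighbours in $G$ and/or its own surviving output), and tracing out that leg kills all $Z$-basis off-diagonal terms of the fusion legs; combined with the uniformity of graph-state $Z$-marginals this gives $\rho_{12}=I/4$, reducing the proposition to the maximally-mixed computation already done in the text. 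Without this dephasing step the claim would in fact be false for degenerate cases such as two isolated, destructively consumed vertices. Note finally that the paper does not argue via POVMs at all: its proof plugs the graph-state purification directly into the success-probability diagram and rewrites in ZX until only the scalar $\star\star=\tfrac12$ times the norm of the graph state remains; your operator-algebraic route is a legitimate alternative, but only once the two points above are fixed.
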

\begin{proof}
  \begin{align*}
    \Pr(\cvar{s}=1\,|\Psi)\quad
    &\tikzfig{fusion-prob-graph-state}\\
    &\tikzfig{RUS/fusion-prob-4}
  \end{align*}
\end{proof}

The action of fusion on unmeasured qubits in a graph state may be seen as a non-demolition measurement, 
obtained by composing the fusion operation with Z spiders as above.
It is useful to write this operation as a classical probability distribution over causal maps, as follows.

\begin{proposition}\label{prop:fusion-stochastic}
  Let $F_\theta$ be the optical circuit defined above and $F^\alpha_\theta$ the same optical circuit followed by a 
  measurement in the $\XZm$ plane of angle $\alpha$. Then we have:
  \[
    \tikzfig{fusion-stochastic}
  \]
  \[
    \tikzfig{fusion-stochastic2}
  \]
\end{proposition}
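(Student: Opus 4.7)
The plan is to combine the Kraus decomposition of equation~(\ref{fusion-green-decomp}) with the success-probability calculation of~\cref{prop:green2qhalf}. Passing to the CP interpretation, equation~(\ref{fusion-green-decomp}) already expresses $F_\theta$ as a sum of two completely positive maps --- one for success and one for failure --- so the task reduces to rewriting this sum as a genuine probability distribution over two causal maps. This is only expected to hold on graph-state inputs, which is precisely how $F_\theta$ is used in the proposition, where the inputs come from Z spiders of an ambient graph state.

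First, I would isolate the success branch of~(\ref{fusion-green-decomp}) and sum over the Pauli error variable $\cvar{k}$, obtaining a CP map that, on the qubit subspace, is the standard Z-spider-based fusion up to an overall scalar. By~\cref{prop:green2qhalf}, this scalar evaluates to $\tfrac{1}{2}$ in the CP interpretation when the inputs are attached to a graph state, so rescaling by $2$ yields a causal map. For the failure branch, the green-failure hypothesis ensures the map stays within the dual-rail qubit subspace, and the same calculation as in the proof of~\cref{prop:green2qhalf}, run with the complementary outcome bit, gives failure probability $\tfrac{1}{2}$ on graph-state inputs; rescaling by $2$ therefore also produces a causal map. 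This establishes the first equation, with the two causal branches each carrying weight $\tfrac{1}{2}$.

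For the second equation, I would push an $\XZm(\alpha)$ measurement through each branch: in the success branch the extra measurement fuses with the central Z-spider of the success diagram, becoming an ordinary single-qubit measurement on the output; in the failure branch it acts independently on one of the disconnected qubits. Since an $\XZm$ measurement is itself a probabilistic mixture of two causal Kraus maps of weight $\tfrac{1}{2}$ each (by the Born rule), composing with the already-causal success/failure branches yields the claimed probability distribution over four causal maps. The main obstacle will be the scalar bookkeeping --- tracking the $\star$ and $\tfrac{1}{\sqrt{2}}$ factors through the CP interpretation ($\interp{\star}_{CP} = \tfrac{1}{2}$) so that the weights come out exactly $\tfrac{1}{2}$ --- and being careful about where the graph-state input assumption is invoked, since the failure branch is not causal on an arbitrary input state.
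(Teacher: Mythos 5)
Your proposal is correct and follows essentially the same route as the paper: the paper's proof is a one-line appeal to the decomposition in \cref{fusion-green-decomp} together with $\interp{\star}_{CP} = \tfrac{1}{2}$, which is exactly the combination of Kraus decomposition and scalar bookkeeping that you describe. Your additional appeal to \cref{prop:green2qhalf} and the explicit causality check of each branch on graph-state inputs merely spell out details the paper leaves implicit.
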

\begin{proof}
  This follows from \cref{fusion-green-decomp} and $\interp{\star}_{CP} = \frac{1}{2}$.
\end{proof}

\subsection{Repeat-until-success}\label{subsec:RUS}

Boosting the probability of success of fusion measurements is an essential requirement for scaling FBQC.
Linear optical approaches include the use of ancillary photons~\cite{griceArbitrarilyCompleteBellstate2011, ewertEfficientBellMeasurement2014} and switch networks~\cite{bartolucciSwitchNetworksPhotonic2021}.
In matter-based approaches, fusion can be boosted by assuming photons are input from the atom in an entangled state.
We are particularly interested in the \emph{repeat-until-success} protocol of~\cite{limRepeatUntilSuccessLinearOptics2005,degliniastySpinOpticalQuantumComputing2024},
that allows for near-deterministic CZ gates assuming photons are generated as a GHZ resource state.
They use a version of the $H$-fusion measurement given in \cref{ex:Y-fusion} in order to obtain CZ entanglement.
We now generalise the protocol to arbitrary fusions with green failure and give a formal proof of its correctness.

Starting from the circuit in (\ref{fusion-green}), we define the destructive fusion module, implementing arbitrary fusions with green failure, as the following stream:
\[
  \tikzfig{RUS/FusionModule}
\]
This module produces two streams of classical outputs: the success values $\cvar{s_t}$ and the errors $\cvar{k_t}, \cvar{j_t}$.
We are now ready to state our results about repeat-until-success protocols. The proofs are given in \cref{app:RUS}.

\begin{definition}[RUS protocol]
  The repeat-until-success fusion protocol is defined by the following setup:
  \[
    \tikzfig{repeat-until-success}
  \]
  Note that the control parameter of the switch takes the value $s_{t-1}$ of the previous success outcome.
\end{definition}

\begin{restatable}{theorem}{RUS}\label{thm:RUS}
  Any fusion with green failure can be boosted with a repeat-until-success protocol. More precisely, the following holds for $n \geq 1$:
  \[
  \scalebox{.8}{\tikzfig{repeat-boosted}}
  \]
  where $T$ is the time of the first successful fusion (if it exists) and:
  \[
    c_{t} = c_{t - 1} \oplus (\neg s_t) k_t \oplus s_t a_t \quad \qquad d_{t} = d_{t - 1} \oplus (\neg s_t) (\neg k_t) \oplus s_t b_t
  \]
  with $s_{\minu 1} = 0$, $c_{\minu 1} = d_{\minu 1} = 1$.
\end{restatable}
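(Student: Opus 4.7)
The plan is to prove the statement by induction on $n$, using the unrolling of the stream definitions together with the stochastic decomposition of fusion given in \cref{prop:fusion-stochastic}. The setup is a memoryful stream: the atom acts as coherent memory (via the GHZ emitter, cf.\ \cref{lemma:coinduction}), the accumulated Pauli flags $(c_{t-1}, d_{t-1})$ and last success bit $s_{t-1}$ act as classical memory controlling the switch and the final correction module. At each time step the fusion module emits either a success branch (with $\cvar{s}_t = 1$, contributing outcomes $a_t, b_t$) or a green-failure branch (with $\cvar{s}_t = 0$ and error label $k_t$), each weighted by the scalar $\star$, and the switch routes the next photon either to the fusion or to the bypass depending on whether a success has already occurred.

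First I would establish the base case $n = 1$ by unrolling \cref{def:unrolling} once: apply \cref{prop:fusion-stochastic} to replace the fusion module by a sum of two causal branches, use \cref{lemma:coinduction} to pull the atom through the emitter, and verify directly that the resulting expression matches the claimed formula when we read off $T$ from $s_0$ and check that $c_0, d_0$ are produced by the stated recurrences starting from $c_{\minu 1} = d_{\minu 1} = 1$, $s_{\minu 1} = 0$. The scalars $\star$ combine with the $\frac{1}{2}$ prefactor and the emitted-photon normalisation exactly so that, after summing over the branches, the total weight on the \enquote{$T \leq n$} branches tends to $1$ as $n$ grows, giving the factor $1 - \frac{1}{2^n}$.

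For the inductive step, assume the equation holds for $n - 1$; I would then unroll the stream one additional time-step on the right. Using the spider fusion rule to combine the Z-spider of the emitter at step $n$ with the memory line and then again applying \cref{prop:fusion-stochastic}, the new fusion splits into its success and failure branches. In the failure case ($s_n = 0$), the switch is still routing into the fusion (since $s_{n-1} = 0$), the green-failure spider contributes a $Z$-phase that is absorbed into the updated flags via the recurrences for $c_n$ and $d_n$, and the chain of previous failures is extended by one. In the success case ($s_n = 1$) with $s_{n-1} = 0$, the fusion fires for the first time, setting $T = n$ and contributing the outcome bits $a_n, b_n$ that appear in the recurrences for $c_n, d_n$. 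If instead $s_{n-1} = 1$ then the switch has already diverted the photons past the fusion module and $T$ is unchanged; one checks that the recurrences collapse to $c_n = c_{n-1}, d_n = d_{n-1}$ in this regime. Gluing the three subcases via the spider/color rules of ZX and the correction module equation from \cref{lemma-correct-measure} recovers exactly the stated right-hand side with $n$ replaced by $n$.

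The main obstacle is bookkeeping of the classical side-channels: one must show that the switch-gated feedback of $s_{t-1}$ correctly partitions the sum over branches into \enquote{before $T$} and \enquote{after $T$} contributions, and that the apparently ad hoc recurrences for $c_t, d_t$ are exactly the ones produced by pushing the green-failure $Z$-phases and success-branch Pauli errors through the spiders of the resource state into the final correction module. A clean way to handle this is to prove a strengthened inductive invariant with two cases conditioned on $s_{n-1}$, so that in each case the diagrammatic rewrite is a single application of spider fusion plus \cref{lemma-correct-measure}, and then take the union over $T \in \{0, \dots, n\}$ at the end.
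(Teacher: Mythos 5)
Your proposal matches the paper's proof in both structure and substance: the paper also proceeds by induction on $n$, establishing the base case $n=1$ directly from \cref{prop:fusion-stochastic} and handling the inductive step by unrolling one further time-step and splitting the new fusion into its success and failure branches via the same stochastic decomposition, with the switch conditioned on the previous success outcome organising the case analysis exactly as you describe. The extra bookkeeping you flag (tracking $c_t, d_t$ through the branches and conditioning on $s_{n-1}$) is precisely what the paper's chain of diagrammatic rewrites carries out implicitly.
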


As a consequence, the probability of success of a repeat-until-success fusion protocol after $n + 1$ time-steps is $1 - \frac{1}{2^{n + 1}}$.
Note that, even though the first two terms are success cases, we cannot unify them as one diagram because the first has $4$ output variables
while the second only has $3$. For the particular cases of $X$ and $Y$ fusions, we can simplify the equations to obtain the following corollaries.

\begin{restatable}[X fusion RUS]{corollary}{XRUS}
  For $n \geq 1$ we have:
  \[
    \tikzfig{repeat-X}
  \]
  where $x_T = k_T \oplus j_T$ and $z_t = s_t(c_t \oplus d_t) \oplus (\neg s_t)c_t$.
\end{restatable}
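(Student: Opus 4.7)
The plan is to derive this corollary by specializing Theorem \ref{thm:RUS} to the X-fusion case and simplifying the resulting mixture of ZX diagrams using the structure of X-fusion established in Section 3.

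First I would instantiate the three phases $\theta_1, \theta_2, \theta_3$ of the general green-failure fusion circuit~(\ref{fusion-green}) so that the success branch realizes X-fusion. By \cref{prop:characterisation} applied to $\lambda = \XYm$ with $\alpha = 0$, the success outcome then acts as an X-spider between the two input qubits, and the measurement errors indexed by $k_T$ and $j_T$ can be pulled out of the fusion node onto the input wires as Pauli-$X$ gates. The failure branch, being green, still contributes $c_t$ and $d_t$ as $Z$ corrections on the two output wires, exactly as in Theorem \ref{thm:RUS}.

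Next I would carry out the ZX bookkeeping. Theorem \ref{thm:RUS} expresses the boosted protocol as a mixture of diagrams indexed by the time $T$ of first success, with accumulated corrections $Z^{c_t}, Z^{d_t}$ sitting on the two output wires. In the success term, these two $Z$ corrections meet at the X-spider produced by X-fusion: one of them commutes through trivially to the surviving leg as a single $Z$, while the other becomes a $\pi$-phase on the X-spider and hence acts as a $Z$ on the opposite leg via the $\pi$-commutation rule. Combining these yields the total $Z$ correction $c_T \oplus d_T$ on the output, which matches $z_T$ when $s_T = 1$. In the all-failures term, only one $c_t$ survives as the $Z$ correction, matching $z_t$ when $s_t = 0$. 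The $X$ correction on the output comes entirely from pulling the X-fusion measurement errors $k_T \pi$ and $j_T \pi$ through the X-spider, yielding a single $X^{k_T \oplus j_T}$, i.e.\ $x_T = k_T \oplus j_T$.

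The main obstacle will be the careful bookkeeping of the Pauli corrections as they propagate through the X-spider: in particular, verifying the conditional dependence $z_t = s_t(c_t \oplus d_t) \oplus (\neg s_t) c_t$ requires checking both the success and all-failure cases of Theorem \ref{thm:RUS} simultaneously. This reduces to repeated applications of spider fusion and $\pi$-commutation in the ZX calculus, and the computation can be presented compactly by a single diagrammatic rewrite per term in the mixture.
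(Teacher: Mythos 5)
Your proposal takes essentially the same route as the paper: the paper's proof of this corollary is precisely an appeal to \cref{thm:RUS} followed by a single ZX-diagrammatic simplification of each term of the resulting mixture for the X-fusion case, yielding $x_T = k_T \oplus j_T$ and the conditional form of $z_t$. The bookkeeping you describe --- pushing the accumulated $c_t, d_t$ phases through the X-spider of the success branch and merging the $k_T, j_T$ errors into a single $X$ correction --- is exactly the content of the paper's displayed rewrite.
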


\begin{restatable}[Y fusion RUS]{corollary}{YRUS}
  For $n \geq 1$ we have:
  \[
    \tikzfig{repeat-Y}
  \]
  where $z_t = (k_T \oplus j_T) \oplus c_t$ and $y_t = (k_T \oplus j_T) \oplus d_t$ if $T < t$ and $y_t = (k_T \oplus j_T) \oplus \neg c_t$ if $T=t$.
\end{restatable}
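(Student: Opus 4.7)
The plan is to obtain this corollary as a specialization of Theorem~\ref{thm:RUS} to the $Y$-fusion case. First, I would identify the parameters $(\theta_1, \theta_2, \theta_3)$ of the green-failure fusion in Equation~\eqref{fusion-green} that realise a $Y$-fusion in the sense of Example~\ref{ex:Y-fusion}. The success branch of $Y$-fusion introduces a Hadamard edge between the two output qubits with Pauli errors governed by the measurement outcomes $k_T$ and $j_T$; this is the key structural difference from the $X$-fusion case, where the success branch merges the two qubits instead.

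Next, I would substitute this success branch into the diagram produced by Theorem~\ref{thm:RUS} and simplify via ZX rewriting. For every time step $t > T$, the two output wires of the module carry only the Pauli errors accumulated during the preceding failures, namely $c_t$ on the first wire and $d_t$ on the second, composed with the fusion output $k_T \oplus j_T$ on both. The appearance of a $Z$-correction on one wire and a $Y$-correction on the other --- rather than the two $X$-corrections of Corollary~\ref{cor:XRUS} --- comes from the colour change induced by the Hadamard edge: pushing an $X$ error through the Hadamard turns it into a $Z$, and combining it with the existing $X$ error on the other wire produces a $Y$. This yields $z_t = (k_T \oplus j_T) \oplus c_t$ and, for $T < t$, $y_t = (k_T \oplus j_T) \oplus d_t$.

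The subtle case is $t = T$. At the exact time of the first successful fusion, the recursive definition from Theorem~\ref{thm:RUS} updates $d_T = d_{T-1} \oplus (\neg s_T)(\neg k_T) \oplus s_T b_T$, but since $s_T = 1$, the $\neg k_T$ contribution is suppressed in favour of $b_T$, while $c_T = c_{T-1} \oplus a_T$. Unwinding these recurrences together with the initial conditions $c_{\minu 1} = d_{\minu 1} = 1$ shows that $d_T$ and $c_T$ differ in parity by exactly the accumulated failure steps, giving $y_T = (k_T \oplus j_T) \oplus \neg c_T$ as stated. The initial condition $c_{\minu 1} = d_{\minu 1} = 1$ is precisely what forces the negation in this boundary case.

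The main obstacle is the diagrammatic tracking of how the Hadamard edge introduced by the $Y$-fusion commutes past the Pauli corrections accumulated in $c_t, d_t$, and the verification that the complement $\neg c_T$ in the $t = T$ case is the correct reconciliation of the two recurrences at the success boundary. This requires careful application of the $\pi$-commutation rule and a case split on whether the preceding step was a failure or the initial boundary, but no new graphical ideas beyond those already used in the general RUS proof.
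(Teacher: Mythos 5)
Your overall route is the paper's: both specialise \cref{thm:RUS} to the $Y$-fusion success branch and then push the resulting $X^{k_T\oplus j_T}$ error through the fusion structure by ZX rewriting, with the Hadamard edge accounting for the colour change that turns the two $X$-type corrections of the $X$-fusion corollary into the $Z$- and $Y$-type corrections here. That mechanism is sound and matches the paper's (purely diagrammatic) derivation, which likewise introduces the abbreviation $x_T=k_T\oplus j_T$ before rewriting.

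However, your justification of the boundary case $t=T$ has a genuine gap. You obtain $y_T=(k_T\oplus j_T)\oplus\neg c_T$ by arguing that unwinding the recurrences shows $d_T$ and $c_T$ are complementary. That identity is false in general: for $t<T$ every step is a failure, so $c_t\oplus d_t=(c_{t-1}\oplus d_{t-1})\oplus 1$ with $c_{-1}\oplus d_{-1}=0$, giving $c_{T-1}\oplus d_{T-1}=T\bmod 2$, and at the success step $c_T\oplus d_T=(T\bmod 2)\oplus a_T\oplus b_T$, which depends on the parity of $T$ and on the outcomes $a_T,b_T$ and is not identically $1$. (Indeed, if $d_T=\neg c_T$ always held, the corollary would not need two separate formulas for $y_t$.) The $\neg c_t$ at $t=T$ is not an arithmetic consequence of the recurrences; it must be read off the structure of the success term of \cref{thm:RUS} at the success time --- the very term the theorem notes cannot be unified with the later-success terms because it carries a different set of output variables --- and then pushed through the $Y$-fusion rewrite. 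Replacing your reconciliation of $c_T$ and $d_T$ with that diagrammatic case would close the gap.
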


\subsection{A universal architecture}\label{subsec:architecture}

We now propose a simple architecture based on a single quantum emitter, linear optics, active switching and classical feedforward.
We then show that this architecture can be used to implement arbitrary MBQC patterns and thus achieve universal quantum computation.
This proof has the following assumptions:
\begin{enumerate}
    \item resource state generation is deterministic,
    \item photons are indistinguishable,
    \item all components have perfect efficiency.
\end{enumerate}
Relaxing any of these assumptions defines a landscape for optimisation depending on the error model.
We leave these considerations for future work and focus on showing universality in this idealised setting.

The architecture studied here is built from a spin-based emitter, the delay, measurement and correction modules defined in \cref{sec:protocols}, 
and the repeat-until-success fusion module described in \cref{subsec:RUS}. We define it as the following diagram.
\begin{equation}\label{architecture-1}
  \tikzfig{figures/architecture-1}
\end{equation}
Let us consider what happens when we unroll the stream defined by the diagram, step by step, for some particular choice of the parameters.
By recursively applying \cref{def:unrolling}, we produce a diagram in $\bf{C}$ which is structured as follows:
\begin{itemize}
  \item the top part of the diagram consists of a variable GHZ-linear cluster produced by the emitter,
  \item the middle part is obtained by unrolling the delay module and equates to a permutation of the qubits,
  \item the bottom part is a sequence of fusions and single qubit measurements of arbitrary types.
\end{itemize}
As an example, the following is the success term of a possible unrolling of the architecture:
\[
  \tikzfig{figures/architecture-unrolling}
\]
By unrolling time steps and pushing triangles from left to right, we turn the diagram progressively 
from an optical circuit into a ZX diagram and a corresponding MBQC pattern.
This efficient rewriting process produces a ZX diagram capturing the quantum computation that has been executed. 
We can thus efficiently \emph{verify} that the protocol in (\ref{architecture-1}) performs a given quantum computation.

To prove universality of the architecture, we show that it can be used to implement any XY-fusion network where the resource graph is a line.
\begin{definition}
  A linear XY-fusion pattern is an XY-fusion pattern where the $n$-qubit register is totally ordered $V = \set{1, \dots, n}$ and the entangling commands are restricted to be of the form $E_{i, i \plus 1}$.
  A linear XY-fusion network is an XY-fusion network where the resource graph is a disjoint union of lines.
\end{definition}
\begin{proposition}\label{prop:hamiltonian-path}
  For any labeled open graph $\mathcal{G}$ with flow, there is a linear XY-fusion network $\mathcal{F}$ with flow with the same target linear map $T(\mathcal{G}) = T(\mathcal{F})$.
\end{proposition}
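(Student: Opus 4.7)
The strategy is to construct an explicit linear XY-fusion network $\mathcal{F}$ whose simplified target graph $\mathcal{M_F}$ coincides, as a labelled open graph, with $\mathcal{G}$, and then invoke \cref{thm:flow-simplified-graph} to transfer the Pauli flow on $\mathcal{G}$ to an XY-flow on $\mathcal{F}$. Since the target linear map of a linear XY-fusion network equals that of its simplified target graph, this immediately gives $T(\mathcal{F}) = T(\mathcal{M_F}) = T(\mathcal{G})$.

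The construction proceeds edge by edge. For every edge $e = (u, v) \in E(\mathcal{G})$, place a 2-vertex path with endpoints $u_e$ and $v_e$ into the resource graph $G$, and for every isolated vertex of $\mathcal{G}$ place a single-vertex path. The resulting resource graph is trivially a disjoint union of paths, so $\mathcal{F}$ is linear. For each vertex $v \in V(\mathcal{G})$ of degree $d_v \geq 2$ we now have $d_v$ copies $v_{e_1}, \ldots, v_{e_{d_v}}$ of $v$ in the resource graph; pick a distinguished copy $v^*$ and add an X-fusion $(v^*, v_{e_i})$ to $F$ for each $i$ with $v_{e_i} \neq v^*$. Under the simplified target graph construction of \cref{subsec:fusion-network}, these X-fused copies collapse into a single vertex whose neighborhood is exactly $N_\mathcal{G}(v)$ by the definition of $E_X$.

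Labels, angles, and input/output data are chosen to match $\mathcal{G}$: set $\lambda_V(v^*) = \lambda(v)$ and $\alpha(v^*) = \alpha(v)$, and mark $v^*$ as input (resp.\ output) whenever $v \in I$ (resp.\ $v \in O$). The auxiliary copies $v_{e_i} \neq v^*$ are consumed by the X-fusions and may be given any label. For an input vertex $v$, preparing the auxiliary copies in $\ket{+}$ and routing the input wire through $v^*$ correctly teleports the input state onto the merged vertex of $\mathcal{M_F}$ — this is a direct spider-fusion calculation using the X-fusion description of \cref{sec:three-fusions}; the symmetric argument handles outputs.

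Applying \cref{thm:flow-simplified-graph} to the isomorphism $\mathcal{M_F} \cong \mathcal{G}$ then yields XY-flow on $\mathcal{F}$ directly from the Pauli flow on $\mathcal{G}$, completing the proof. The main obstacle is establishing this isomorphism rigorously: one must verify that the merged vertex in the simplified target inherits the intended measurement plane, angle, and I/O designation, and that input and output wires survive the X-fusion mergers unchanged. These are routine diagrammatic checks using the ZX-calculus identities of \cref{sec:character}, after which the cited theorem delivers both the required flow and the equality of target linear maps.
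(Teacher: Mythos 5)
Your construction is genuinely different from the paper's. The paper keeps the resource graph a \emph{single} line: it takes a Hamiltonian completion of $G$, realising each added completion edge by inserting a $Z$-measured bridge vertex (an operation that preserves Pauli flow and the target map, citing~\cite{mcelvanneyCompleteFlowpreservingRewrite2023}), uses the resulting Hamiltonian path as the resource line, and turns every remaining edge into a $Y$-fusion, which simply adds that edge back. You instead split each vertex into one copy per incident edge, take a disjoint union of two-vertex lines as the resource graph, and re-glue the copies with a star of $X$-fusions. Both routes are legitimate: yours avoids the Hamiltonian-completion step and the $Z$-insertion lemma, at the price of roughly one fusion per edge endpoint rather than one per chord, and of producing many short lines rather than the single line that the emitter of \cref{subsec:architecture} naturally provides. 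One caveat on your appeal to \cref{thm:flow-simplified-graph}: a hub $v^*$ participates in $d_v-1$ $X$-fusions, whereas the simplified-target-graph definition and \cref{prop:Xflow-preserving} introduce one new vertex per fusion and treat a single merge at a time; you must iterate the merge and check that the whole star genuinely collapses to one vertex with neighbourhood $N_{\mathcal{G}}(v)$, which the paper's one-shot definition does not literally give you.

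The one concrete error is the claim that the auxiliary copies $v_{e_i}\neq v^*$ ``may be given any label.'' They may not: they must carry the Pauli-$X$ measurement with angle $0$. This is precisely the hypothesis $\lambda(b)=X$ in \cref{prop:Xflow-preserving}, and it is what makes the leftover single-qubit effect on each auxiliary copy a phaseless green effect that is absorbed into the merged spider. If an auxiliary copy carried, say, an \XYm measurement with angle $\pi/4$, the merged vertex would acquire an extra non-Clifford effect, $\mathcal{M_F}$ would fail to be isomorphic to $\mathcal{G}$, and $T(\mathcal{F})\neq T(\mathcal{G})$. A smaller wrinkle: when $v\in O$ has degree at least $2$, every spanning set of fusions on its copies must touch the copy carrying the output wire, which conflicts with the requirement $F\subseteq M(\comp{O}\times\comp{O})$; the paper's construction shares this issue whenever a chord is incident to an output, so it is best read as a defect of the definition rather than of your argument. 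With the $X$-labelling fixed, your proof goes through.
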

\begin{proof}
  We use exclusively the $Y$-fusion measurement which adds a hadamard edge between nodes.
  Given any labeled open graph $\mathcal{G} = (G, I, O, \lambda, \alpha)$ with flow, we may extend it to an equivalent labeled open graph $(G', I, O, \lambda, \alpha)$
  such that $G'$ has a Hamiltonian path by finding a Hamiltonian completion of $G$. The additional edges in the completion are constructed by introducing nodes measured in the Z-basis, an operation which preserves the existence of Pauli flow \cite{mcelvanneyCompleteFlowpreservingRewrite2023}.
  Then we may construct a linear XY fusion network $(L, F)$ where $L \sub G'$ is the Hamiltonian path and $F$ is the set of all remaining edges. It is easy to check 
  that this fusion network has the same target linear map as $G$.
\end{proof}
\begin{proposition}\label{prop:pattern-implementability}
  The protocol in (\ref{architecture-1}) has settings $\lambda, \alpha, \sigma, \rho, u$ that implement any runnable linear XY-fusion pattern, with probability arbitrarily close to $1$.
\end{proposition}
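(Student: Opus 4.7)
The plan is to compile any runnable linear XY-fusion pattern $P$ directly into a configuration of the architecture in (\ref{architecture-1}), and to argue correctness by unrolling the resulting stream and matching the output ZX diagram with the diagram defined by the pattern. From $P$ I first extract: the disjoint union of lines $L = L_1 \sqcup \dots \sqcup L_r$ that serves as resource graph, the set $F$ of X/Y fusion pairs, the single-qubit measurement labels $(\lambda_v,\alpha_v)$, and the correction function $p$ coming from the XY-flow guaranteed by \cref{thm:flow-pattern} and \cref{thm:flow-simplified-graph}.

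The main construction proceeds in three layers, matching the three sections of the architecture. For the top (emitter) layer, I use the variable GHZ-linear emitter identified in \cref{subsec:MBQC}: setting $u_t = H$ produces a cluster edge between consecutive emitted photons, while $u_t = I$ leaves them GHZ-entangled to the same spin. A sequence of $H$ steps therefore emits each line $L_i$ directly, and applying an $X$-basis measurement (or equivalent reset) on the spin disentangles between disjoint lines. Crucially, to accommodate the repeat-until-success (RUS) protocol of \cref{subsec:RUS}, I allocate at each fusion endpoint a block of $N$ consecutive $I$-steps so that $N$ copies of a photon are emitted GHZ-entangled to the same line vertex, at the cost of only a local rewrite of the target graph which the flow accommodates by \cref{thm:flow-simplified-graph}. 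For the middle (delay/router) layer, I set $\sigma$ so that each fusion partner pair is time-aligned at the fusion module: by \cref{lemma:permutations} any permutation of the time-encoded photon stream is implementable with delays and routers, and choosing the permutation corresponding to the pairing $F$ plus the time-order prescribed by the flow's partial order $<$ suffices. For the bottom layer, I set the measurement module to $(\lambda_v, \alpha_v)$ at the appropriate time slot for each non-output, non-fusion vertex, and set the correction module by feeding forward the measurement and fusion outcomes according to the correction function $p$; by \cref{thm:flow-pattern} this makes the success branches of the pattern agree with $T(P)$.

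The probability bound uses \cref{thm:RUS} together with its corollaries for X and Y fusion. Allocating $N$ RUS attempts per fusion bounds the failure probability of that fusion by $2^{-N}$. The pattern has finitely many fusions $|F|$, and by a union bound the overall success probability is at least $1 - |F|\,2^{-N}$, which exceeds any prescribed $1 - \epsilon$ for sufficiently large $N$. Verification that the chosen configuration actually realises $T(P)$ on the joint success branch proceeds by the unrolling procedure of \cref{def:unrolling}: unrolling for the total number of time-steps used by the construction, and pushing triangles from left to right, turns the circuit into a ZX diagram which by \cref{lemma-correct-measure}, \cref{lemma:coinduction} and the RUS theorem is equal to the ZX diagram of the pattern's success branch.

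The main obstacle I expect is bookkeeping the interaction between the probabilistic RUS blocks and the fixed router schedule $\sigma$. Because the number of photons consumed by a fusion depends on when success first occurs, the routing of subsequent photons has to be conditioned on earlier fusion outcomes, not only on the static parameter $\sigma$. This is precisely what the active switch with control $\cvar{s}_{t-1}$ in (\ref{architecture-1}) is designed to handle, so the resolution is to absorb the conditional rerouting into the switch rather than into $\sigma$, reserving a worst-case window of $N$ time-slots per fusion in the static schedule and using the switch to divert subsequent photons once success is detected. Once this timing discipline is established, the three layers compose cleanly and the equality of ZX diagrams follows from the rewrites already proved in \cref{sec:protocols} and \cref{sec:flow}.
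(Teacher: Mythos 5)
Your proposal follows essentially the same route as the paper's proof: emit $kf_i+1$ (your $N$-block) photons per line vertex from the GHZ-linear emitter with a Hadamard on the spin between vertices, route them to RUS fusions or single-qubit measurements via \cref{lemma:permutations}, and bound the overall success probability by driving the per-fusion failure rate $2^{-k}$ down (your union bound versus the paper's product $(1-2^{-k})^f$ are interchangeable here). The only detail the paper includes that you omit is that the number of RUS repetitions must have the right parity (the paper takes $k$ odd) so that the residual error of the boosted fusion is Pauli and hence correctable by the flow.
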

\begin{proof}
  Fix a linear XY-fusion pattern. Let $f_i$ be the number of fusions applied to qubit $i$. The total number of fusion operations is $f = \frac{1}{2} \sum_{i = 1}^n f_i$.
  For each qubit $i$, we emit $kf_i + 1$ photons entangled as a GHZ state with the atom, where $k$ is a positive integer. Between rounds we either apply a hadamard gate on the atom
  if the command $E_{i, i \plus 1}$ is present, or else we emit an additional photon to be measured in the $X$ basis.
  By setting the parameters $\rho$ and $\sigma$ we may route these photons arbitrarily in either a RUS fusion or a single-qubit measurement, following \cref{lemma:permutations}.
  We use $k$ photons for each node in a RUS fusion operation, giving us a probability of success of $(1 - \frac{1}{2^k})$ for each of the $f$ fusions.
  If the RUS fusion fails after $k$ rounds we restart the whole computation.
  Finally, we can apply any sequence of single qubit measurements and corrections on the remaining $n$ photons, following \cref{lemma-correct-measure}.
  In order to achieve a total success probability $\epsilon$ close to $1$ we just have to set $k$ an odd integer such that $(1 - \frac{1}{2^k})^f > 1 - \epsilon$.
\end{proof}

Note that the RUS $X$ and $Y$ fusions defined above may induce additional $Z$ errors on the target qubits. 
Even for $Y$-fusion, it is sufficient to set $n$ to be even (i.e. repeat an odd number of times) to ensure that the error is Pauli.
Thus, in order to correct these errors in an XY-fusion pattern, we must add $Z$ corrections on the target qubit. 
This is always possible with the factorisation given in \cref{thm:flow-pattern}, since fusion nodes precede their target qubits in the partial order.

\begin{theorem}\label{thm:universality}
    The protocol in (\ref{architecture-1}) has settings $\lambda, \alpha, \sigma, \rho, u$ that implement any given qubit unitary, with probability arbitrarily close to $1$.
\end{theorem}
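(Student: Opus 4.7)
The plan is to chain together the results already established in this section with the standard universality of MBQC. Given any target qubit unitary $U$, I would first invoke the well-known compilation of qubit circuits into labelled open graphs with gflow (cited in the excerpt as~\cite{backensThereBackAgain2021}): this produces a labelled open graph $\mathcal{G}$ with $T(\mathcal{G}) = U$ and gflow, which is in particular a Pauli flow.

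Next, I would apply \cref{prop:hamiltonian-path} to $\mathcal{G}$ to obtain a linear XY-fusion network $\mathcal{F}$ with the same target linear map $T(\mathcal{F}) = T(\mathcal{G}) = U$. Since the construction in the proof of \cref{prop:hamiltonian-path} only inserts $Z$-measured vertices (which preserves flow) and uses $Y$-fusions for the non-Hamiltonian edges, the simplified target graph $\mathcal{G}_{\mathcal{F}}$ is equal to $\mathcal{G}$ (up to the $Z$-extensions) and therefore still has Pauli flow. By \cref{thm:flow-simplified-graph}, $\mathcal{F}$ then has XY-flow, and by \cref{thm:flow-pattern} it can be compiled into a runnable linear XY-fusion pattern $P$ that is uniformly, strongly, and step-wise deterministic on success, realising $T(\mathcal{F}) = U$, and crucially factorised so that all fusion commands precede the single-qubit measurement commands.

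Finally, I would feed $P$ into \cref{prop:pattern-implementability} to realise it on the architecture~(\ref{architecture-1}) by setting the emitter parameters $u$, the routing parameters $\sigma, \rho$, and the measurement labels $\lambda, \alpha$ appropriately, with overall success probability at least $(1 - 2^{-k})^f$ which can be pushed arbitrarily close to $1$ by choosing $k$ large.

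The subtle point, and the step I expect to require the most care, is reconciling the RUS fusion module with the XY-flow corrections of $P$. As the excerpt flags just before the theorem, RUS $X$- and $Y$-fusions may leak an extra $Z$-type byproduct on the target qubits compared to the ideal fusion of the characterisation. I would handle this by noting that (i) taking the number of RUS rounds odd ensures the residual errors are Pauli, and (ii) because the factorisation from \cref{thm:flow-pattern} places every fusion command strictly before its target qubit in the partial order, any such additional $Z$ byproduct can be absorbed into the $Z$-correction $Z^{k_f}_{\Odd(g(f))}$ that is already scheduled before the subsequent measurement of the target qubit. Thus the determinism-on-success of $P$ lifts from the idealised pattern to the physical protocol, and conditioning on the event that all RUS fusions succeed within the allotted rounds (which occurs with probability $\geq 1 - \epsilon$) yields the target unitary $U$ exactly.
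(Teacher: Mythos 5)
Your proposal is correct and follows essentially the same route as the paper: compile the unitary to a labelled open graph with flow, apply \cref{prop:hamiltonian-path} to obtain a linear XY-fusion network with the same target linear map, and conclude via \cref{prop:pattern-implementability}. The extra care you take with the RUS $Z$-byproducts mirrors the remark the paper makes immediately before the theorem, so it is a faithful (and slightly more detailed) rendering of the paper's own argument rather than a different proof.
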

\begin{proof} 
  Given any qubit unitary, we may represent it as a labeled open graph $\mathcal{G}$ with flow. By \cref{prop:hamiltonian-path}, there is a linear XY fusion pattern $\mathcal{F}$
  with flow and the same target linear map. This gives rise to a runnable linear XY fusion pattern and the result follows by \cref{prop:pattern-implementability}.
\end{proof}

The architecture above can be extended in several different ways. At the level of resource state generation, we may use multiple linear-GHZ emitters instead of one.
This would allow us to parallelise the computation, reducing photon delays and atom coherence time, although it comes at the cost of flower fidelities as photons emitted from different sources have higher distinguishability.
The architecture could also be extended with a module implementing local Clifford unitaries on the emitted photons. 
This would allow us to use any graph LC-equivalent to a linear cluster as resource state, as proposed in~\cite{zilkCompilerUniversalPhotonic2022}.
The architecture above assumes that corrections need not be applied on fusion nodes, which is justified for XY-fusion networks by \cref{thm:flow-pattern}. 
In order to implement more general fusion networks, we may use an additional $2$ to $1$ router to allow corrections after Type I fusion, as follows.
\[
  \tikzfig{figures/architecture-3}
\]

\section{Conclusion}\label{sec:conclusion}

A graphical framework for photonic quantum computing was presented aiming at bringing together linear optics, MBQC, and dataflow programming.

We used a combination of ZX diagrams and linear optical circuits to analyse the action of fusion measurements and their induced errors.
We characterised all fusion measurements whose failure outcome is a projector in the $\XYm$ plane (green failure)
and whose success outcome induces correctable Pauli errors on their input qubits; see \cref{thm-characterisation}.
Building on this, we gave a general definition of fusion networks and developed a notion of XY-flow for 
fusion networks using exclusively X and Y fusions, \cref{def:flow}, allowing us to correct undesired measurement outcomes.
The resulting patterns can be factorized such that all fusions appear before single-qubit measurements; see \cref{thm:flow-pattern}.
Moreover, we showed that any decomposition of an open graph with Pauli flow as an XY-fusion network
is guaranteed to have XY-flow; see \cref{thm:flow-simplified-graph}.
An interesting avenue for future work is the definition of flow for fusion networks with more general types of fusion, 
such as fusions in arbitrary planes and angles and $n$-ary versions of these measurements.

We presented a dataflow language describing streams of linear optical circuits and ZX diagrams and 
enabling the analysis of optical protocols involving measurements, routers, delays, switches, time-delayed emitters, and classical feedforward.
This graphical calculus allows us to inductively prove the correctness of new repeat-until-success protocols that
boost the probability of success of arbitrary fusion measurements with green failure; see \cref{thm:RUS}.
We also used our framework to give a constructive proof of universality for a simple optical architecture based on a single quantum emitter; see \cref{thm:universality}.
We believe that the potential applications of this calculus extend well beyond fusion-based quantum computing, 
towards the analysis of quantum communication protocols and distributed algorithms.

As a further development of the proposed framework, each step in the outlined compilation process can be optimised.
From open graphs to fusion networks, the number of fusions required could be minimised under different constraints on the allowed fusions and resource graphs.
From fusion networks to optical protocols, the time difference between photon emission and measurement, as well as the number of optical components that the photon needs to traverse, 
should be minimised to reduce the probability of photon loss.

\section*{Acknowledgements}

We would like to thank Will Simmons for reviewing an earlier version of this manuscript.
We greatly benefited from discussions with Ross Duncan, Pierre-Emmanuel Emeriau, Paul Hilaire, Dan Mills, Razin A.\@ Shaikh, and Richie Yeung.

\bibliographystyle{quantum}
\bibliography{preamble/references}

\onecolumn
\appendix
\allowdisplaybreaks

\section{QPath calculus}\label{sec:qpath}

We review the axioms of the QPath calculus~\cite{defeliceQuantumLinearOptics2023}, which are used in \cref{sec:mixedfusion}.
Diagrams in \textbf{QPath} are generated by:
\[
  \tikzfig{qpath/w-qpath}
  \qquad \qquad
  \tikzfig{qpath/wdag-qpath}
  \qquad \qquad
  \tikzfig{qpath/phase}
  \qquad \qquad
  \tikzfig{LOQC/nPhotonState}
  \qquad \qquad
  \tikzfig{LOQC/nPhotonMeasurement}
\]
for all $n \in \N$ and $c \in \C$.
Comparing with the graphical language \textbf{LO} defined in \cref{sec:LO}, the difference is that instead of the beamsplitter generator of \textbf{LO}, \textbf{QPath} has the generator
\begin{equation}
    \scalebox{0.7}{\tikzfig{qpath/w-qpath}} \qquad \overset{\interp{\cdot}}{\longmapsto} \qquad W\,:\, \ket{n} \quad \mapsto \quad \sum_{k=0}^n \binom{n}{k}^{\frac{1}{2}} \ket{k}\ket{n - k}
\end{equation}
along with its horizontal reflection which has the adjoint interpretation.

This enables a more descriptive diagram for the beam splitter,
\begin{equation}
    \tikzfig{RUS/beamsplitter-qpath}
\end{equation}

The \textbf{QPath} calculus admits the following graphical rewrite rules.
Additionally, all rules hold under transposition of the linear maps, which in \textbf{QPath} is represented by horizontal reflection of the diagrams.
\begin{equation*}
    \scalebox{0.8}{\tikzfig{qpath/rules-qpath}}
\end{equation*}

\section{Kraus decomposition of Type-I fusion}\label{sec:mixedfusion}

To prove \cref{prop:type-I}, we derive diagrams in \textbf{QPath} for each measurement outcome of the linear optical circuit implementing Type I fusion:
\begin{equation*}
    D^{\cvar{a},\cvar{b}}
    \quad\coloneqq\quad
    \tikzfig{RUS/rus0b}
\end{equation*}
Because the input to $D^{\cvar{a},\cvar{b}}$ is two dual rail qubits and hence at most two photons, and no photons are created in this process, we can restrict our attention to only measurement outcomes observing at most two photons.

From to the definitions of the dual-rail encoding \cref{eq:dual-rail-encoding}, we can define the follwing decomposition:
\begin{equation}
  \label{eq:dual-rail-decomposition}
  \tikzfig{RUS/dualrail-decomp}
\end{equation}
Using this, we can now prove a set of lemmas for the different cases of $D^{\cvar{a},\cvar{b}}$.
First, $D^{\cvar{a}=0,\cvar{b}=0}$ evaluates to:
\begin{equation}
    \tikzfig{RUS/d00}
\end{equation}

\noindent Continuing on, we compute $D^{\cvar{a}=1,\cvar{b}=0}$:
\begin{equation}
    \tikzfig{RUS/d10}
\end{equation}
Similarly, $D^{\cvar{a}=0,\cvar{b}=1}$ computes to:
\begin{align}
    \tikzfig{RUS/d01-0}\quad
    &\tikzfig{RUS/d01-1}\\
    &\tikzfig{RUS/d01-2}\\
    &\tikzfig{RUS/d01-3}
\end{align}
We can show that $D^{\cvar{a}=1,\cvar{b}=1}$ evaluates to:
\begin{equation}
    \tikzfig{RUS/d11}
\end{equation}
This means that the probability of observing $D^{\cvar{a}=1,\cvar{b}=1}$ is zero which is due to the Hong-Ou-Mandel effect.

For the last two cases of $D^{\cvar{a},\cvar{b}}$, we use the following equality:
\begin{equation}
  \tikzfig{RUS/lem2}
\end{equation}
Proceeding with the calculations, we compute $D^{\cvar{a}=0,\cvar{b}=2}$ as follows:
\begin{equation}
    \tikzfig{RUS/d02}
\end{equation}
and finally, $D^{\cvar{a}=2,\cvar{b}=0}$ evaluates to:
\begin{equation}
    \tikzfig{RUS/d20}
\end{equation}

To realize $D^{\cvar{s},\cvar{k}}$, we determine the action of a classical function from measurement outcomes to the Boolean variable indicators of success ($\cvar{s}$) and correction ($\cvar{k}$):
\begin{center}
    \begin{tabular}{ll|ll} 
        \hline
        \cvar{a} & \cvar{b} & \cvar{s} & \cvar{k}\\
        \hline
        0 & 0 & 0 & 1\\
        1 & 0 & 1 & 0\\
        0 & 1 & 1 & 1\\
        2 & 0 & 0 & 0\\
        0 & 2 & 0 & 0\\
        \hline
    \end{tabular} 
\end{center}
For three of the five possible values of $(\cvar{s},\cvar{k})$, the original measurement outcomes $(\cvar{a},\cvar{b})$ can be identified.
For the case $(\cvar{s}=0,\cvar{k}=1)$, it happens that $D^{\cvar{a}=2,\cvar{b}=0} = D^{\cvar{a}=0,\cvar{b}=2}$ up to a global phase which is thereafter eliminated upon invoking the CPM construction.
Because of this, in mixed-state quantum mechanics we have
\begin{equation}
    D^{\cvar{s}=0,\cvar{k}=1} = D^{\cvar{a}=2,\cvar{b}=0} + D^{\cvar{a}=0,\cvar{b}=2} = 2 \,\, D^{\cvar{a}=2,\cvar{b}=0}
\end{equation}
Therefore, $D^{\cvar{s},\cvar{k}}$ is a non-destructive measurement that is a sum of four terms, one for each possible value of $(\cvar{s},\cvar{k})$.
By performing a mixed sum over all possible measurement outcomes of $D^{\cvar{a},\cvar{b}}$, and quotienting by $\cvar{s}$ and $\cvar{k}$, we obtain the proposition.

\typeOneFusionProp*
\begin{proof}
  \[
    \tikzfig{RUS/mixed-fusion-thm}
  \]
\end{proof}

\section{Proof of characterisation}\label{app:fusion}

In section, we prove \cref{thm-characterisation}, characterizing all fusion measurements with green failure and Pauli error.

\characterization*

We are interested in fusions with green failure satisfying the following equation:
\begin{equation}
    \tikzfig{ZX/PauliFusionCommutation}
\end{equation}
for some $w,x,y,z \in \{0,1\}$.

First note that in order for the $\cvar{j} \pi$ error to commute to the inputs as a Pauli error, 
$\phi$ must satisfy the commutation relation
\begin{equation}
    \tikzfig{ZX/PauliFusionCase34}
\end{equation}
for some $\cvar{p}, \cvar{q} \in \{0,1\}$. That is, $\phi$ must be a Clifford phase. 
Therefore, necessarily $\phi = v \frac{\pi}{2}$ for some $v \in \{0,1,2,3\}$.
We thus consider two cases: (i) $v$ is odd and (ii) $v$ is even.

If $v$ is even, we can take $\phi = 0$; the $\phi = \pi$ case is redundant because of the random error $\cvar{k} \pi$.
In this case, the $\cvar{j} \pi$ error induces $Z$ errors on both input qubits. 
In order to correct $\cvar{k} \pi$ error, one of $\alpha_1$ or $\alpha_2$ must be Clifford. Without loss of generality, 
we assume $\alpha_1$ is a Clifford phase, i.e. $\alpha_1 = d \frac{\pi}{2}$, and the result follows with $\lambda = \XZm$.
\[ 
  \tikzfig{ZX/PauliFusionCase1}
\]

If $v$ is odd, we can take $\phi = \frac{\pi}{2}$; the $\phi = -\frac{\pi}{2}$ case is redundant because of the random error $\cvar{k}$.
Then, the $\cvar{j} \pi$ error induces a $Y$ error when commuting with $\phi$. Since $Y = XZ$, this induces an $Z$ error 
on both input qubits and an $X$ error which merges with the $\cvar{k} \pi$ error. In order to correct the resulting 
$(\cvar{k} \oplus \cvar{j}) \pi$ error, either $\alpha_1$ or $\alpha_2$ must be a an integer multiple of $\frac{\pi}{2}$. 
Without loss of generality we can set $\alpha_1 = d \frac{\pi}{2}$, and the result follows with $\lambda = \YZm$.
\[ 
  \tikzfig{ZX/PauliFusionCase2}
\]

The third case, where $\lambda = \XYm$ is obtained when $\beta = \frac{\pi}{2} + \alpha$ for some angle $\alpha$ and 
$\phi = \frac{\pi}{2}$. This is technically an instance of the case where $v$ is odd, but we distinguish 
it from the other cases because the errors are different, as they are induced by a single-qubit measurement in the $\XYm$ plane. 
\[ 
  \tikzfig{ZX/PauliFusionCase3}
\]

\section{Proof of flow preserving rewrites}\label{app:flow}

In this appendix, we show that the target open graph and the simplified target graph of an XY-fusion network are equivalent, and the existence of flow on one implies that of the other.
To prove this, we use rewrites that preserves the existence of Pauli flow~\cite{simmonsRelatingMeasurementPatterns2021,mcelvanneyCompleteFlowpreservingRewrite2023, mcelvanneyFlowpreservingZXcalculusRewrite2023}.

We use an alternative notation to simplify the diagrams, and replace a Hadamard
between two spiders by a blue dashed edge, as illustrated below.
\[
    \tikzfig{ZX/elements/HadamardEdge}
\]
Both the blue edge notation and the Hadamard box can always be translated back
into spiders when necessary.
We refer to the blue edge as a Hadamard edge.

\begin{lemma}[Copy]
  \label{lem:copy}
  Copying preserves the existence of Pauli flow~\cite[Lemma D.6]{simmonsRelatingMeasurementPatterns2021}.
  Graphically, this corresponds to the copy rule of the ZX calculus~\cite[Lemmas 2.7 and 2.8 ]{mcelvanneyFlowpreservingZXcalculusRewrite2023}:
  \[
    \tikzfig{fusion/CopyFlowPreserving}
  \]
\end{lemma}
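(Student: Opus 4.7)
The plan is to leverage the flow-preserving rewrite results already established in~\cite[Lemma D.6]{simmonsRelatingMeasurementPatterns2021} and~\cite[Lemmas 2.7 and 2.8]{mcelvanneyFlowpreservingZXcalculusRewrite2023}. First I would translate the graphical rewrite displayed in the statement into its open-graph meaning: the ZX copy rule, applied within a graph state, replaces a single $Z$-measured vertex $v$ with several $Z$-measured copies $v_1, \dots, v_k$ whose neighbourhoods partition the original neighbourhood of $v$. This identification places the lemma squarely within the scope of the cited results, so that invoking them would close the proof immediately.

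For a self-contained argument, I would instead construct a Pauli flow on each side of the rewrite from a Pauli flow $(p, <)$ on the other. The key step is the transformation of correction sets: whenever $v$ appears in some $p(u)$, one replaces it with the unique copy $v_i$ that inherits the neighbour of $u$ formerly associated with $v$, and conversely, going back one coalesces the $v_i$ into a single vertex. Because the ZX copy rule partitions the neighbours of $v$ among the $v_i$, the parity of $|N(u) \cap p(u)|$ is preserved through the rewrite, so conditions 4--9 of \cref{def:pauli-flow} transfer across. The new vertices, being $Z$-measured, can each receive the singleton correction set $p'(v_i) = \{v_i\}$ which satisfies condition 8 trivially and condition 2 because a $Z$-measured vertex need not be earlier than its $Z$-correction partners.

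The main obstacle lies in the bookkeeping around the partial order. One must extend $<$ to place all copies $v_i$ at a position compatible with the original position of $v$, and check that this extension is still strict and acyclic --- which follows because the copies share the same neighbourhood structure as $v$ and are $Z$-measured, so conditions 1--3 do not introduce any new ordering constraints between them. A secondary subtlety is ensuring that the partition of neighbours used to route the correction sets matches the partition used by the rewrite itself, which is forced by the graph-theoretic shape of the rewrite displayed in the lemma. Once these checks are carried out in both directions, the equivalence of Pauli flow existence is established.
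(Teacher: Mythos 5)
Your first paragraph already coincides with what the paper does: the lemma is stated purely as an import from the literature, with the citations to \cite[Lemma D.6]{simmonsRelatingMeasurementPatterns2021} and \cite[Lemmas 2.7 and 2.8]{mcelvanneyFlowpreservingZXcalculusRewrite2023} serving as its entire justification --- no proof environment follows it. So invoking those results is not merely sufficient; it \emph{is} the paper's proof, and on that route your proposal is fine.

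Your supplementary self-contained sketch should not be relied on as written, for two reasons. First, the copy rule of the ZX calculus is the rule that copies a Pauli $X$-spider state through a $Z$-spider; on open graphs this amounts to deleting a vertex measured in a Pauli basis and toggling a $\pi$ phase on each of its neighbours. It is not the operation you describe of splitting a $Z$-measured vertex into several copies whose neighbourhoods partition the original neighbourhood --- that is closer to a vertex-unfusion rewrite, which is a different (and not generally flow-preserving) transformation. Second, even within your own reading, assigning the singleton correction set $p'(v_i)=\{v_i\}$ does not make the ordering conditions of \cref{def:pauli-flow} vacuous: since $\Odd(\{v_i\})=N(v_i)$, condition 2 forces $v_i < w$ for every neighbour $w$ of $v_i$ not measured in $Y$ or $Z$, while condition 1 simultaneously forces $u < v_i$ for every $u$ with $v_i\in p(u)$ (as $\lambda(v_i)=Z\notin\{X,Y\}$); showing these constraints are jointly satisfiable is precisely the non-trivial content that the cited lemmas supply. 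None of this affects the validity of closing the proof by citation, which is all the paper itself does.
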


\begin{lemma}[Local Complementation]
  \label{lem:lc}
  Local complementation about a vertex $u$ preserves the existence of Pauli flow~\cite[Lemma D.12]{simmonsRelatingMeasurementPatterns2021}.
  In the ZX calculus, this rule is also called local complementation, and is given as follows~\cite[Lemma 2.10]{mcelvanneyFlowpreservingZXcalculusRewrite2023}:
  \[
    \tikzfig{fusion/LCFlowPreserving}
  \]
\end{lemma}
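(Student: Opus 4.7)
The plan is to establish the two claims of the lemma separately: first that the ZX-diagrammatic rewrite holds as an equality (up to scalar) between the two depicted diagrams, and second that the induced transformation on labelled open graphs preserves the existence of Pauli flow. For the first part, I would appeal to the well-known local-complementation identity in the ZX calculus \cite{duncanGraphtheoreticSimplificationQuantum2020}, which says that applying $\sqrt{-iX}$ to a vertex $u$ of a graph state and $\sqrt{iZ}$ to each of its neighbours has the effect of toggling all edges between pairs of neighbours of $u$. The rewrite in the statement is precisely this identity, repackaged so that the $\sqrt{\pm i}$ rotations appear as $\pm\pi/2$ phases absorbed into the measurement labels of the non-output qubits adjacent to $u$. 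I would derive it step by step using only the spider fusion rule, the Hadamard colour-change, and the Euler decomposition of the Hadamard gate, in the same spirit as the derivation of the pivoting rule in \cite{duncanGraphtheoreticSimplificationQuantum2020}.

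For the second part, the strategy is to construct an explicit Pauli flow on the locally complemented graph from a given Pauli flow $(p,<)$ on the original one, and then verify that each of the conditions in \cref{def:pauli-flow} is preserved. Writing $G \star u$ for the local complement of $G$ about $u$, and $N(u)$ for the neighbours of $u$ in $G$, I would define the new correction sets by
\begin{equation*}
  p'(v) \;\coloneqq\; \begin{cases} p(v)\,\triangle\,N(u) & \text{if } u \in p(v) \text{ and } v\neq u,\\ p(v) & \text{otherwise,}\end{cases}
\end{equation*}
keeping the same partial order $<$, and then compare $\Odd_{G\star u}(p'(v))$ with $\Odd_G(p(v))$. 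The identity $\Odd_{G\star u}(K) = \Odd_G(K) \triangle \bigl(\lvert K\cap N(u)\rvert\bmod 2\bigr)\cdot N(u) \triangle (u \in K)\cdot N(u)$ (suitably interpreted as symmetric differences of sets) is the combinatorial backbone: it shows that the odd neighbourhoods change only by symmetric difference with $N(u)$, which is precisely compensated by our adjustment of $p(v)$. The measurement-plane changes on $u$ and its neighbours, which swap $\XYm \leftrightarrow \XZm$ and flip signs of $\YZm$-angles, match the corresponding swaps between conditions (4)--(6) of \cref{def:pauli-flow}, and the Pauli conditions (7)--(9) are preserved because $\lambda(w)\in\{\Xm,\Ym,\Zm\}$ interact symmetrically with local complementation.

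The main obstacle, I expect, will be bookkeeping rather than conceptual: checking all nine conditions of \cref{def:pauli-flow} across the various cases (whether $v=u$, $v\in N(u)$, $v\notin N(u)\cup\{u\}$, and whether $u\in p(v)$ or not), and in particular making sure that conditions (1)--(3), which depend on the measurement labels of the vertices in $p(v)$ and $\Odd(p(v))$, still hold after the plane swaps on $N(u)$. The trickiest subcases involve $Y$-measured vertices in $N(u)$, since condition (3) intertwines membership in $p(v)$ and in $\Odd(p(v))$; here the symmetric-difference identity above, combined with the fact that $u\in p(v)$ in those branches, gives the required cancellation. Since both~\cite[Lem.\,D.12]{simmonsRelatingMeasurementPatterns2021} and~\cite[Lem.\,2.10]{mcelvanneyFlowpreservingZXcalculusRewrite2023} carry out exactly this case analysis, I would cite them for the detailed verification rather than reproduce it in full, and use the present discussion as a guide to the reader.
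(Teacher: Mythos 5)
The paper itself offers no proof of this lemma: it is imported wholesale from the literature, with the Pauli-flow preservation cited to~\cite[Lemma D.12]{simmonsRelatingMeasurementPatterns2021} and the ZX form to~\cite[Lemma 2.10]{mcelvanneyFlowpreservingZXcalculusRewrite2023}. Your decision to ultimately defer to those references is therefore consistent with the paper, and your account of the ZX side (the $\sqrt{-iX}$ on $u$, $\sqrt{iZ}$ on its neighbours identity) is standard and fine.

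However, the explicit construction you offer as the ``combinatorial backbone'' contains two concrete errors, so as written your sketch would not verify the flow conditions. First, the identity for odd neighbourhoods is wrong in its last term: since local complementation toggles edges among the neighbours of $u$, a vertex $w \in N(u)$ gains adjacency to $(N(u)\setminus\{w\})\cap K$, so the correct statement is
$\Odd_{G\star u}(K) = \Odd_G(K) \mathbin{\triangle} \bigl(\lvert K\cap N(u)\rvert \bmod 2\bigr)\cdot N(u) \mathbin{\triangle} \bigl(K\cap N(u)\bigr)$,
with $K\cap N(u)$ in place of your $(u\in K)\cdot N(u)$. (Check it on the path $a$--$u$--$b$ with $K=\{a\}$: one gets $\Odd_{G\star u}(K)=\{u,b\}$, whereas your formula gives $\{u,a,b\}$.) Second, and more importantly, the update $p'(v) = p(v)\mathbin{\triangle} N(u)$ when $u\in p(v)$ is not the transformation that works. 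Toggling $N(u)$ in the correction set changes the odd neighbourhood by $\Odd_{G\star u}(N(u))$, which is in general not contained in $N(u)\cup\{u\}$, so it disturbs the flow conditions at vertices far from $u$ where nothing compensates. The standard construction (as in~\cite{simmonsRelatingMeasurementPatterns2021} and in the gflow case in~\cite{backensThereBackAgain2021}) instead toggles $u$ \emph{itself}, roughly $p'(v)=p(v)\mathbin{\triangle}\{u\}$ under an appropriate membership condition on $u$ in $p(v)$ and $\Odd_G(p(v))$: since $N_{G\star u}(u)=N_G(u)$, this perturbs the odd neighbourhood by exactly $N(u)$, which is precisely the set of vertices whose measurement labels change, and that is the cancellation mechanism. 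If you intend the sketch to guide the reader rather than merely decorate the citation, these two points need to be corrected; otherwise, stating the lemma with the citations, as the paper does, is sufficient.
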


\begin{lemma}[Pivot]
  \label{lem:pivot}
  Pivoting about an edge $(u, v)$ preserves the existence of Pauli flow~\cite[Lemma D.21]{simmonsRelatingMeasurementPatterns2021}.
  In the ZX calculus, this rule is also called pivoting, and is given as the following rewrite rule~\cite[Lemma 2.11]{mcelvanneyFlowpreservingZXcalculusRewrite2023}:
  \[
    \tikzfig{fusion/PivotFlowPreserving}
  \]
\end{lemma}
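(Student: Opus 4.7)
The plan is to prove the Pivot Lemma by reducing it to the already-established Local Complementation Lemma (\cref{lem:lc}) via the classical graph-theoretic identity that pivoting about an edge $(u,v)$ is equivalent to three successive local complementations: first about $u$, then about $v$, then about $u$ again. Since each intermediate step preserves the existence of Pauli flow by \cref{lem:lc}, the composition does too. This decomposition is attractive because it recycles machinery that is already in place, avoiding the need to redo a combinatorial verification of all nine Pauli flow conditions from scratch.

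For the ZX picture, I would apply the local-complementation rewrite of \cref{lem:lc} three times in the same order and simplify. The key step is to verify that after the three composed rewrites, the spiders, phases, and edge structure rearrange to exactly the pivot rewrite statement. This amounts to tracking: (i) how Hadamard edges between $u$, $v$, and their neighborhoods toggle after each local complementation, which produces the characteristic $N(u)\triangle N(v)$ toggling of pivot; (ii) how the measurement phases on $u$, $v$, and the common neighbors $N(u)\cap N(v)$ pick up the $\pm\pi/2$ corrections at each step and ultimately combine into the Pauli phases required on $u$ and $v$ in the pivot rule; and (iii) how the partial order is refined after each step so that Pauli flow conditions hold at every intermediate stage.

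The main obstacle is the bookkeeping of measurement planes, angles, and correction sets through three successive local complementations. Each complementation changes the plane/angle of every vertex in $N(u)$ (respectively $N(v)$), and the intermediate graphs will in general not match the form of the pivot rewrite --- only the composite does. A cleaner but more laborious alternative, which I would fall back on if this bookkeeping proves unwieldy, is to construct the new correction function $p'$ on the pivoted graph directly from the original $p$ by the rule $p'(w) = p(w) \triangle \bigl(\,p(w)\cap\{u,v\}\text{-dependent shift over }N(u)\triangle N(v)\bigr)$, and then verify conditions 1--9 of \cref{def:pauli-flow} one by one; this approach sidesteps non-flow-preserving intermediates at the cost of a case split on the Pauli measurement labels of $u$ and $v$. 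Either way, once the graph-level statement is in hand, the corresponding ZX equation follows by unfolding the definition of the graph state as a ZX diagram and pushing the local-complementation/pivot rule through the dual translation.
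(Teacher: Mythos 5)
You should first note that the paper offers no proof of \cref{lem:pivot} at all: both the graph-level claim and the ZX rewrite are imported by citation, to \cite[Lemma D.21]{simmonsRelatingMeasurementPatterns2021} and \cite[Lemma 2.11]{mcelvanneyFlowpreservingZXcalculusRewrite2023} respectively. Your argument is therefore a self-contained re-derivation rather than a variant of anything in the paper, and its core is sound. Pivoting about an edge $(u,v)$ is indeed the composite of local complementations about $u$, then $v$, then $u$ again, and since \cref{lem:lc} asserts preservation of the \emph{existence} of Pauli flow, transitivity across the three steps gives the graph-level statement; you are also right that it does not matter that the intermediate labelled open graphs fail to have pivot form, because only existence is being threaded through. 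The work you defer is exactly where the remaining content sits: (i) verifying that the composite of the three label updates (plane and angle changes on $N(u)$, then $N(v)$, then $N(u)$ again) reproduces the label update of the pivot rule, which is the computation that yields the $\pi$ phases on the common neighbourhood and the plane exchange on $u$ and $v$; and (ii) for the displayed ZX equation, that three applications of the local-complementation rewrite of \cref{lem:lc} compose to the pivot rewrite. Both are standard calculations but must actually be carried out before the lemma is established. Your fallback of constructing the new correction function $p'$ directly and checking conditions 1--9 of \cref{def:pauli-flow} is closer in spirit to how the cited reference obtains the result; the triple-complementation route buys reuse of \cref{lem:lc} at the cost of phase bookkeeping, while the direct route buys explicit control of the correction sets at the cost of a case split on the measurement labels of $u$ and $v$. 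Either path is legitimate.
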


\begin{lemma}[State Change]
  \label{lem:state-change}
  \[
    \normalfont
    \tikzfig{ZX/state-change}
  \]
\end{lemma}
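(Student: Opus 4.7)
The plan is to reduce the state change lemma to the flow-preserving rewrites already established in this appendix: Copy (\cref{lem:copy}), Local Complementation (\cref{lem:lc}), and Pivot (\cref{lem:pivot}). First I would unpack the graphical statement on both sides, interpreting each as a labelled open graph in the sense of \cref{subsec:MBQC}, and identify precisely which nodes, edges, measurement planes, and measurement angles are affected by the rewrite. Because only a bounded neighbourhood of the graph is touched, it should suffice to handle the rewrite locally and leave the rest of the open graph untouched.

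Next I would derive the diagrammatic equivalence purely within the ZX calculus. This should proceed by a short sequence of spider fusion, color change, and $\pi$-commutation moves, followed by at most a single application of Copy to push a fixed state past a neighbouring spider. The goal at this stage is to exhibit the state change as either (i) literally an instance of Copy/LC/Pivot, or (ii) a composite of these three in which every intermediate diagram is still a well-formed labelled open graph.

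Having established the ZX equivalence, I would then transport a Pauli flow $(p, <)$ on the left-hand side to a Pauli flow $(p', <')$ on the right-hand side. Concretely, I would modify $p$ only on vertices whose neighbourhoods or measurement labels have changed, and verify conditions 1--9 of \cref{def:pauli-flow} in turn. The partial order should transfer essentially unchanged, with at most the removal of a vertex deleted by the rewrite or the insertion of a new vertex placed maximally.

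The main obstacle will be controlling the odd neighbourhoods $\Odd(p(v))$ after the rewrite: a local state change can silently flip parities at vertices far from the rewrite site through their correction sets, so the bookkeeping of which $p'(v)$ need compensating additions is the delicate part. I expect this to be handled by an argument analogous to the $\{u\}$-adjustment used in the proof of \cref{thm:flow-simplified-graph}, where one adds a Pauli-measured neighbour to the correction set to cancel an unwanted contribution to the odd neighbourhood without affecting any Pauli flow condition on the added vertex. Once this parity bookkeeping is settled, the remaining conditions follow directly from those of $(p, <)$.
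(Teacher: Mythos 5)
The paper states \cref{lem:state-change} without proof: unlike \cref{lem:copy,lem:lc,lem:pivot}, which explicitly assert preservation of Pauli flow and cite the literature for it, the State Change lemma is a bare ZX-diagram equality, and the intended justification is an elementary rewrite (spider fusion, colour change and phase manipulation) that the authors leave implicit. Your proposal never actually supplies that content. The one step that would constitute the proof --- exhibiting the concrete sequence of ZX moves turning the left-hand diagram into the right-hand one --- is described only as something that ``should proceed by a short sequence of spider fusion, color change, and $\pi$-commutation moves''. Without naming the states involved, the phases, or the rule applied at each step, this is a plan rather than a proof, and it cannot be checked.

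The remainder of the proposal addresses a claim the lemma does not make, and does so with a misplaced worry. A state change alters only the measurement label (plane and angle) of a single vertex; it adds and removes no vertices and toggles no edges. Consequently, even if one did want to verify flow preservation here, the correction function $p$ and the order $<$ would transfer verbatim, and $\Odd(p(v))$ would be untouched for every $v$, since odd neighbourhoods depend only on the graph's adjacency. Your anticipated ``main obstacle'' --- parities flipping at distant vertices through their correction sets --- cannot occur for this rewrite; it is a feature of local complementation and pivoting, which genuinely change adjacency. What would actually require checking is the plane-specific conditions of \cref{def:pauli-flow} at the single relabelled vertex (for instance, that exchanging \XZm and \YZm exchanges the roles of $v \in p(v)$ and $v \in \Odd(p(v))$), and this is precisely the point your sketch passes over.
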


\xFusionFlowPreserving*
\begin{proof}
  \[
    \tikzfig{fusion/XFusionFlowPreservingProof}
  \]
  Since each of the rewrites preserves the existence of Pauli flow, the additional errors appearing above can always be corrected, but we have given them here for completeness.
\end{proof}

\yFusionFlowPreserving*
\begin{proof}
  \[
    \tikzfig{fusion/YFusionFlowPreservingProof}
  \]
\end{proof}

\section{Proofs of repeat-until-success}\label{app:RUS}

We prove the results of \cref{subsec:RUS}.

\RUS*
\begin{proof}
    We prove this inductively. The $n = 1$ case follows from \cref{prop:fusion-stochastic}:
    \[ 
        \scalebox{.8}{\tikzfig{repeat-proof-basecase}}
    \]
    Then, given the statement for $n$, we show it for $n + 1$:
    \allowdisplaybreaks
    \begin{align*}
        &\scalebox{.8}{\tikzfig{repeat-prooff-1}}\\
        &\scalebox{.8}{\tikzfig{repeat-prooff-2}}\\
        &\scalebox{.8}{\tikzfig{repeat-prooff-3}}\\
        &\scalebox{.8}{\tikzfig{repeat-prooff-4}}\\
        &\scalebox{.8}{\tikzfig{repeat-prooff-5}}\\
        &\scalebox{.8}{\tikzfig{repeat-prooff-5+}}
    \end{align*}
\end{proof}

\XRUS*
\begin{proof}
  This follows from \cref{thm:RUS} and the following derivation: 
  \[
    \scalebox{.8}{\tikzfig{repeat-X-proof}}
  \]
\end{proof}

\YRUS*
\begin{proof}
  Follows from \cref{thm:RUS} and:
  \[
    \scalebox{.8}{\tikzfig{repeat-Y-proof}}
  \]
  where $x_T = k_T \oplus j_T$.
\end{proof}

\end{document}